\documentclass[10pt]{article}
\usepackage[preprint]{tmlr}
\usepackage{amsmath,amssymb,amsthm,mathtools}
\usepackage{microtype}
\usepackage{booktabs}
\usepackage{array}
\usepackage{graphicx}
\usepackage{longtable}
\usepackage[font=small]{caption}
\usepackage{hyperref}
\usepackage{url}

\newtheorem{theorem}{Theorem}
\newtheorem{lemma}{Lemma}
\newtheorem{corollary}{Corollary}
\theoremstyle{definition}
\newtheorem{assumption}{Assumption}
\newtheorem{remark}{Remark}

\newcommand{\1}{\mathbf{1}}
\newcommand{\E}{\mathbb{E}}
\newcommand{\Prob}{\mathbb{P}}
\newcommand{\Xcal}{\mathcal{X}}
\newcommand{\Ical}{\mathcal{I}}
\newcommand{\wstar}{w^{\star}}
\newcommand{\Ghat}{\widehat{G}}
\newcommand{\what}{\widehat{w}}
\newcommand{\qhat}{\widehat{q}}
\newcommand{\tstar}{t^{\star}}

\newcommand{\supmat}{Appendix~\ref{sec:s-further}}
\newcommand{\Supmat}{Appendix~\ref{sec:s-further}}

\title{When does conformal calibration need censoring weights?\\ Cause-of-failure
prediction sets under competing risks}

\author{\name Sunny Yang \email guangy4@illinois.edu\\
\addr University of Illinois Urbana-Champaign
\AND
\name Weiyan Zhao \\
\addr Northumbria University}

\begin{document}
\maketitle

\begin{abstract}
Split conformal prediction sets for competing-risks labels at a fixed horizon
require calibration labels that right censoring can leave unobserved.
Complete-case calibration guarantees coverage for the label-complete
subpopulation, but its population coverage can deviate in either direction, even
at the true class probabilities. We study how selection changes the score
distribution near the population quantile. At the true score in our main
simulation family, with independent draws, complete-case calibration covers
0.8723 at a nominal 0.900 when 22\% of subjects are event-free at the horizon. In
4 of 48 further designs using per-draw normalisation, estimating cause incidences
as one minus the exponential of the negative of the cause-specific Nelson--Aalen cumulative hazards, with the
event-free probability as the clipped and renormalised remainder, puts
complete-case coverage at least four standard errors below nominal. In these
designs, weights from a correctly specified censoring model keep coverage near
nominal. Extending the argument of Yi et al.\ (2025) to a cause label, we
establish a finite-sample coverage lower bound with an explicit penalty for
censoring-model error. Misspecifying the censoring model can also lead to
under-coverage.

\end{abstract}

\section{Introduction}
\label{sec:intro}

For a competing-risks model, a natural prediction at a fixed horizon is a set of
causes, possibly including event-free status, that contains the realised label at
the horizon for at least $1-\alpha$ of subjects. Split conformal prediction gives
such a set from any fitted model when the calibration labels are observed. Under
right censoring they are not, since a subject who leaves follow-up before the
horizon without an event has no label.

Complete-case calibration keeps the subjects whose label is determined and drops
the rest. It selects on the outcome, because a subject who fails early is more
likely to have been followed long enough than one who survives past the horizon,
and the exchangeability that conformal prediction relies on is lost. Bostr\"om et
al.\ (2019) use this procedure for a binary event-by-horizon label and state its
guarantee for the subjects not censored before the horizon. Bostr\"om et al.\ (2023)
keep the censored subjects, with a Kaplan--Meier estimate within each Mondrian
category, to give a predictive distribution for the event time. Reports on the
population disagree. Dang et al.\ (2026) find complete-case conformal
under-covering, attribute it to the censoring rate and recommend a switch above
40\%. Farina et al.\ (2026) find their complete-case comparator over-covering in
every setting they run, but their target is a lower prediction bound on the event
time, for which selection toward short observed times is conservative by
construction. Neither account applies to a label at a horizon.

We characterise when complete-case calibration is safe for a cause-of-failure
label and when weighting the calibration sample by the inverse probability of
remaining uncensored repairs it. The paper makes three contributions.
\begin{enumerate}
\item \emph{The complete-case error can take either sign, even at the true
score.} Complete-case calibration takes its threshold from the score
distribution of the selected subjects, and we study how this selection changes
coverage over the population. At fixed covariates, subjects event-free at the horizon have the lowest
selection probability, so the event-free share, error in the fitted event-free
probability and the dependence of censoring on the covariates all shift the
selected score distribution relative to the population's. The direction of the
mean score shift indicates the sign of the error but does not determine it
(Appendix~\ref{sec:suppreading}). In our main simulation family at the true score,
the unweighted procedure over-covers at an event-free share of $0.544$ and
under-covers at $0.224$, where it covers $0.8723$ at a nominal $0.900$ over $200$
replications with independent draws (Section~\ref{sec:correctfit}). In a
two-region law where most subjects are event-free, the true score still
under-covers (Section~\ref{sec:wrongG}). A fit that understates the event-free
probability lowers coverage, censoring amplifies both effects, and the observed
crossing point moves with the censoring rate, so no threshold on the censoring
rate locates the failure.
Estimating each cause's incidence as one minus a cause-specific survival curve,
with the event-free probability as the remainder, puts the unweighted procedure
at least four standard errors below nominal in 4 of 48 designs
(Section~\ref{sec:dials}).

\item \emph{Censoring weights give a coverage guarantee for any score when the
censoring model is correct.} In the studied designs with covariate-free
censoring, where the marginal Kaplan--Meier estimator is correctly specified, the
weighted procedure stays within $0.012$ of nominal across event-free shares, misspecification levels and
miscoverage levels. When censoring depends on the covariates, marginal
Kaplan--Meier weights fail with the unweighted procedure, at $0.8294$ in a
two-region law where region-stratified weights cover $0.9496$, and miss in either
direction under a misspecified score, while weights from the true censoring model
hold (Section~\ref{sec:wrongG}). Table~\ref{tab:twobytwo} summarises the four
cases.

\item \emph{A finite-sample coverage floor for the weighted procedure.}
Lemma~\ref{lem:we} establishes weighted exchangeability between the
label-complete calibration sample and the test subject, and
Theorem~\ref{thm:main} gives a floor of $1-\alpha-\Delta_w$ with $\Delta_w$
explicit in the censoring-model error. The argument is the factorisation of Yi et
al.\ (2025, Appendix A) applied to the selection event $\{C\ge\min(T,\tstar)\}$ in
place of $\{T\le C\}$, with the normalising constant $\kappa$ kept explicit and the
random size of the complete-case pool accounted for. The cause label changes one
step. A hypothesised time determines the test subject's weight, but a hypothesised
cause leaves $\min(T,\tstar)$ unknown, so the procedure uses a dominating
substitute for that weight, the atom.
\end{enumerate}

\begin{table}[t]
\centering
\caption{Behaviour of the two procedures in the studied simulation designs, by
which model is correct. Unweighted is
split conformal on the label-complete subjects. Weighted uses
inverse-probability-of-censoring weights from the censoring model named in the
column.}
\label{tab:twobytwo}
\small
\begin{tabular}{@{}l>{\raggedright\arraybackslash}p{0.38\textwidth}>{\raggedright\arraybackslash}p{0.38\textwidth}@{}}
\toprule
 & censoring model correct & censoring model misspecified \\
\midrule
score correct &
Unweighted over- or under-covers, depending on where selection moves the score
distribution, in our main family on the event-free share. Weighted stays near
nominal. (Section~\ref{sec:correctfit}) &
Unweighted can fail, and so can weighted with a marginal Kaplan--Meier. Weights
from the correct model hold. (Section~\ref{sec:wrongG}) \\
\addlinespace
score misspecified &
Unweighted falls below nominal once the event-free probability is understated
enough, at every censoring rate. Weighted stays within $0.012$ of nominal.
(Section~\ref{sec:dials}) &
Unweighted and marginal-Kaplan--Meier weights miss in either direction. Weights
from the true model hold. (Section~\ref{sec:wrongG}) \\
\bottomrule
\end{tabular}
\end{table}

The class-conditional procedure, which restores coverage of a rare cause that a
pooled threshold misses, is in Section~\ref{sec:percause}, with a registry cohort
of $238{,}057$ subjects.

\paragraph{Related work.} Conformal prediction under covariate shift is due to
Tibshirani et al.\ (2019), the penalty for an estimated shift ratio to Lei and
Cand\`es (2021), and class-conditional, or Mondrian, conformal prediction to Vovk
et al. Podkopaev and Ramdas (2021) reweight conformal classification under label
shift. Their weight $q(y)/p(y)$ depends on the class index alone, so a
hypothesised label determines the test weight, and their calibration set is fully
labelled. Our weight $1/G(\min(T,\tstar)\mid X)$ depends on the subject's event
time, which a hypothesised cause leaves unknown. With estimated ratios their
statement becomes asymptotic, whereas Theorem~\ref{thm:main} keeps a finite-sample
floor at an explicit cost $\Delta_w$.

For survival outcomes, the closest in mechanism is Yi et al.\ (2025), who
weight by the inverse probability of censoring with an outcome-dependent weight,
and their Theorem~1 is a finite-sample floor minus a censoring-model term. They
predict a time and invert a $p$-value over candidate values, so the hypothesised
object determines the test weight. Shin et al.\ (2025) add a covariate-shift ratio
to this weight for an event-time interval, without a coverage theorem for the
combined weight. Cand\`es, Lei and Ren (2023) require Type-I censoring with the
censoring time observed on every unit, and the PAC-type guarantees of Gui, Hore,
Ren and Barber (2024) do not apply under loss to follow-up. Davidov et al.\ (2025)
extend double robustness to general right censoring with a finite-sample bound
whose slack in the weight-estimation error is of the same kind as ours; their
bound is a lower bound on a single survival time, holds with probability
$1-\delta$ over the calibration data rather than marginally, and does not treat a
cause. Sesia and Svetnik (2025a) handle general right censoring by imputation, and
Farina et al.\ (2026), Qin et al.\ (2025) and Si and Qiu (2025) give asymptotic
guarantees. Holmes and Marandon (2025) give a finite-sample two-sided interval on
the subjects that a trained classifier assigns to the uncensored population.
Alberge et al.\ (2025), Wang et al.\ (2026) and Dang et al.\ (2026) address the
cause or the fold restriction with, respectively, a point prediction, an
asymptotic interval, and a finite-sample statement that assumes the
exchangeability at issue. Sesia and Svetnik (2025b, 2026) form conformal
$p$-values from inverse-probability-of-censoring weights on uncensored subjects,
for survival thresholds and for selecting patients event-free past a horizon,
with asymptotic false discovery rate control and a finite-sample version whose
slack is of order $n^{-1/2}$ (2025b, Theorem~5).

For missing and coarsened outcomes, Paul et al.\ (2025) calibrate a
prediction set for a real outcome from monotonely coarsened data through the
efficient influence function, with finite-sample validity when some nuisance
functions are known. Their threshold takes an infimum over the data consistent
with the test covariates, which is the nearest precedent for the atom of
\eqref{eq:atom}. Yin, Shi, Wang and Blei (2024) and Jin, Ren and Cand\`es (2023) give the test
slot the upper bound of a response-dependent likelihood ratio, and Yi, Zhang, Tang and Wang (in press)
keep a finite-sample floor under nonignorable missingness with a
response-dependent propensity. In both, a hypothesised response determines its
weight. Fan et al.\ (2025) keep finite-sample coverage under missing covariates
with a response-dependent likelihood ratio, but the response is always observed.
A conservative finite-sample guarantee for a set of causes follows by set
inclusion from the partial-label construction of Javanmardi et al.\ (2023,
Theorem~1), which calibrates on each subject's largest score over the labels
consistent with its data; a subject censored before the horizon is consistent
with every label and enters calibration at its largest score. Liu, de Paula and
Tamer (2025) follow a related conservative route for interval-valued outcomes.
The guarantee here calibrates on the label-complete subjects and pays only for
the censoring-model error.

\section{Problem and procedures}
\label{sec:setup}

Fix a horizon $\tstar>0$. A subject has covariates $X\in\Xcal$, a latent first
event time $T>0$ and a cause $\varepsilon\in\{1,\dots,K\}$. The quantity to be
predicted is the \emph{horizon label}
\[
Y \;=\;
\begin{cases}
\varepsilon, & T\le \tstar,\\
0, & T>\tstar,
\end{cases}
\qquad Y\in\{0,1,\dots,K\},
\]
which records the cause of the first event by $\tstar$, or event-free status.
Right censoring at time $C$ gives the observed follow-up $\widetilde T=\min(T,C)$.
The label is determined, and we call the subject \emph{label-complete}, exactly
on the event
\[
\Delta \;\coloneqq\; \1\{\, C \ge m \,\}, \qquad m \;\coloneqq\; \min(T,\tstar),
\]
since a subject followed to $m$ either has had the event, with its cause
recorded, or is known to be event-free at $\tstar$.

Let $\pi_\ell(x)\in[0,1]$, $\ell\in\{0,\dots,K\}$, be fitted class probabilities
at $\tstar$, obtained from cause-specific cumulative incidence increments and the
event-free probability of a model trained on an independent split
$\mathcal D_{\mathrm{tr}}$. The nonconformity score is
\[
s(x,\ell)\;=\;1-\pi_\ell(x), \qquad V\;=\;s(X,Y)\;=\;1-\pi_Y(X),
\]
and for a threshold $q$ the prediction set is
$S_q(x)=\{\ell: \pi_\ell(x)\ge 1-q\}$, so that
\begin{equation}\label{eq:cov-as-score}
\{Y\in S_q(X)\}\;=\;\{V\le q\}.
\end{equation}
The score depends on the subject only through $(X,Y)$.

Write $G(u\mid x)=\Prob(C\ge u\mid X=x)$ for the
conditional censoring survival function, taken throughout with a closed
inequality. A marginal $G(u)$ is the special case of an $X$-free $G$. The oracle
inverse-probability-of-censoring weight and its plug-in are
\[
\wstar \;=\; \frac{1}{G(m\mid X)},
\qquad
\what \;=\; \frac{1}{\Ghat(m\mid X)},
\]
with $\Ghat$ fitted on $\mathcal D_{\mathrm{tr}}$ and therefore independent of the
calibration and test subjects. In the implementation $\Ghat$ is floored at
$g_{\min}$ before inversion, so $\what\le 1/g_{\min}$ by construction and the
conditioning event of Theorem~\ref{thm:main} holds. Positivity bounds the oracle
weights, while the floor bounds the estimated ones. Clipping an unbounded weight
is a different problem (Wang and Goel, 2026).

From a calibration sample of i.i.d.\ subjects, keep the
label-complete ones, $\Ical=\{i:\Delta_i=1\}$, with scores $V_i$ and weights
$\what_i$. Define a self-normalised weighted quantile carrying an atom of mass $A$ at
$+\infty$. With $p_i=\what_i/(\sum_{j\in\Ical}\what_j+A)$,
\begin{equation}\label{eq:qhat}
\qhat \;=\; \inf\Big\{ q:\ \textstyle\sum_{i\in\Ical} p_i\,\1\{V_i\le q\}\ \ge\ 1-\alpha \Big\},
\end{equation}
and output $S_{\qhat}(X)$ for a fresh subject.

Two procedures are compared throughout. The \emph{weighted} procedure runs
\eqref{eq:qhat} with the weights $\what_i$ and the atom of \eqref{eq:atom} below.
The \emph{unweighted}, or complete-case, procedure sets every weight and the atom
to $1$, which is split conformal prediction on the label-complete subjects. It
carries no guarantee for the full population under this selection, since
$\Delta=\1\{C\ge\min(T,\tstar)\}$ depends on the outcome through $T$.

The atom is where predicting a cause departs from weighted conformal prediction.
Lemma~\ref{lem:we} puts the test subject's own weight on the test slot, and that
weight depends on the unobserved $m=\min(T,\tstar)$, so it cannot be computed at
prediction time. The procedure uses a dominating substitute
(Lemma~\ref{lem:atomhat}),
\begin{equation}\label{eq:atom}
A_{\mathrm{env}}\;=\;\frac{1}{\max\{\Ghat(\tstar\mid X),\,g_{\min}\}}
\;\ge\;\what_{\mathrm{te}},
\end{equation}
which uses no test outcome. Atom mass can only raise the quantile, so any
dominating substitute preserves the coverage floor.

\medskip

\begin{assumption}[Independent censoring]\label{as:indep}
$C \perp\!\!\!\perp (T,\varepsilon)\mid X$. The marginal-$G$ version assumes the
stronger $C\perp\!\!\!\perp(T,\varepsilon,X)$, under which $G(\cdot\mid X)\equiv G(\cdot)$.
\end{assumption}

\begin{assumption}[Positivity]\label{as:pos}
There is $g_{\min}>0$ with $G(\tstar\mid X)\ge g_{\min}$ almost surely, hence
$\wstar\le 1/g_{\min}$.
\end{assumption}

\begin{assumption}[Sample splitting]\label{as:split}
$\pi$ and $\Ghat$ are fixed given $\mathcal D_{\mathrm{tr}}$, and the calibration
subjects and the test subject are i.i.d.\ draws independent of
$\mathcal D_{\mathrm{tr}}$.
\end{assumption}

\section{A coverage floor for the weighted procedure}
\label{sec:ceiling}
\noindent Under Assumptions~\ref{as:indep} and~\ref{as:pos},
$\E[\Delta\,\wstar\,h(X,Y)]=\E[h(X,Y)]$ for every integrable $h$
(Section~\ref{sec:suppproofs}), so by \eqref{eq:cov-as-score} population coverage
is a $\wstar$-weighted coverage over label-complete subjects. Conformal
prediction needs the corresponding statement about ranks.

\begin{lemma}[Weighted exchangeability under censoring-induced selection]\label{lem:we}
Under Assumptions~\ref{as:indep} to~\ref{as:split}, conditionally on
$|\Ical|=k$ and on the unordered
multiset of the $k$ label-complete calibration tuples together with the test
tuple, the $k+1$ tuples are weighted exchangeable in the sense of Tibshirani et
al.\ (2019, Definition~1) with weight functions $(1,\dots,1,\wstar)$, unit weight on each
label-complete calibration slot and the oracle weight on the test slot alone. Consequently, running \eqref{eq:qhat} with oracle calibration weights and
the test point's own atom $\wstar_{\mathrm{te}}$,
\[
\Prob\big(Y\in S_{\qhat}(X)\big)\;=\;\Prob\big(V_{n+1}\le\qhat\big)\;\ge\;1-\alpha .
\]
\end{lemma}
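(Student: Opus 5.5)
The approach is to identify the law of a label-complete calibration tuple as a covariate-shifted version of the population law of the test tuple, and then apply the weighted-exchangeability machinery of Tibshirani et al.\ (2019). Write $Z=(X,Y,m)$ for a tuple; the score $V=s(X,Y)$ and the oracle weight $\wstar=1/G(m\mid X)$ are functions of $Z$. Under Assumption~\ref{as:indep}, $\Prob(\Delta=1\mid X,T,\varepsilon)=\Prob(C\ge m\mid X,T,\varepsilon)=G(m\mid X)$. Hence the law of $Z$ given $\Delta=1$ has density $G(m\mid X)/\kappa$ with respect to the population law $P_Z$, where $\kappa=\Prob(\Delta=1)=\E[G(m\mid X)]>0$ by Assumption~\ref{as:pos}. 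Equivalently, $dP_Z/dP_{Z\mid\Delta=1}=\kappa\,\wstar(Z)$, which is the identity $\E[\Delta\wstar h]=\E[h]$ quoted before the lemma.

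The first step is to dispose of the random pool size. The calibration subjects are i.i.d.\ and selection acts on each one separately, so conditionally on $\Ical$ (and hence on $|\Ical|=k$) the selected tuples are i.i.d.\ from $P_{Z\mid\Delta=1}$. By Assumption~\ref{as:split} they are independent of the test tuple, which is drawn from $P_Z$. The joint density of $(Z_1,\dots,Z_k,Z_{k+1})$ with respect to $P_{Z\mid\Delta=1}^{\otimes(k+1)}$ is therefore $\kappa\,\wstar(z_{k+1})$. This has exactly the form $\prod_j w_j(z_j)$ with $w_1=\dots=w_k=1$ and $w_{k+1}=\wstar$, multiplying a symmetric base density. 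Definition~1 of Tibshirani et al.\ then gives weighted exchangeability. The constant $\kappa$ cancels on normalisation, so I keep it explicit only to confirm that it plays no role.

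The second step is the factorisation conditional on the multiset $\{z_1,\dots,z_{k+1}\}$. For each $i$, the probability that the test slot holds $z_i$ is
\[
\frac{\wstar(z_i)}{\sum_{j=1}^{k+1}\wstar(z_j)},
\]
because the $k$ calibration slots contribute unit weight to every permutation. This is Lemma~3 of Tibshirani et al.\ applied to the selected law. Consequently, given the multiset, $V_{k+1}$ is distributed as the weighted empirical measure $\sum_i p_i^{\star}\delta_{V_i}$, with $p_i^\star\propto\wstar(z_i)$. The weight on the test tuple is its own $\wstar_{\mathrm{te}}$, which is the atom. The quantile in \eqref{eq:qhat}, computed with oracle calibration weights and atom $\wstar_{\mathrm{te}}$ at $+\infty$, dominates the $(1-\alpha)$-quantile of this measure. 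The standard argument then gives $\Prob(V_{k+1}\le\qhat\mid\text{multiset},k)\ge1-\alpha$. Integrating over the multiset and over $k$ yields the claim, and \eqref{eq:cov-as-score} converts it to set coverage.

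The main obstacle is the case $k=0$, together with being careful that conditioning on $\Ical$ does not break the i.i.d.\ structure. When $k=0$ the quantile is $+\infty$ and coverage is trivial, so that case is harmless. The real care lies in verifying that the selected law depends only on $G(m\mid X)$, which needs independent censoring given $X$ jointly with $(T,\varepsilon)$ and not just with $Y$. The reason is that $m$ carries the time information, so the tuple must include $m$. I will make this step explicit, noting that weighted exchangeability of the augmented tuples $Z$ implies it for the score tuples $(X,Y)$, since $V$ is a function of $Z$.
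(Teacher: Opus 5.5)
Your proposal is correct and follows essentially the paper's route. Selection tilts the population law by $\rho=G(m\mid X)$, so given the selection pattern the retained tuples are i.i.d.\ from the tilted law and the test-to-calibration density ratio is proportional to $\wstar$; Tibshirani et al.'s Lemma~3 and the weighted-quantile step with the test point's own atom then finish, and the paper differs only cosmetically by computing the assignment ratio from two labellings with the censored calibration slots integrated out to a common constant.

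One caution: drop your closing remark about passing to the tuples $(X,Y)$. The weight $\wstar$ depends on $m$ and is not a function of $(X,Y)$, so the argument has to stay on $Z=(X,Y,m)$, which is what your conditioning on the multiset of $Z$'s already does.
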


\begin{proof} In Section~\ref{sec:suppproofs}. \end{proof}

With the true $G$ in the calibration weights the floor is $1-\alpha$
(Lemma~\ref{lem:oracle}). The theorem replaces $G$ by $\Ghat$.

\label{sec:main}
\begin{theorem}[Coverage floor under estimated weights]\label{thm:main}
Run \eqref{eq:qhat} with estimated calibration weights $\what=1/\Ghat$ and the
atom $A_{\mathrm{env}}$ of \eqref{eq:atom}, where $\Ghat$ is the estimate after the
floor at $g_{\min}$ and $\Ghat(\cdot\mid x)$ is non-increasing for every $x$, so that
$A_{\mathrm{env}}\ge\what_{\mathrm{te}}$ (Lemma~\ref{lem:atomhat}). Under Assumptions~\ref{as:indep}
to~\ref{as:split}, conditionally on $\mathcal D_{\mathrm{tr}}$ and on the event
$\{\Ghat(\cdot\mid\cdot)\ge g_{\min}\}$,
\[
\Prob\big(Y\in S_{\qhat}(X)\big)\;\ge\; 1-\alpha-\Delta_w,
\qquad
\Delta_w \;\coloneqq\; \frac{1}{2\kappa}\,\E\Big|\frac{G}{\Ghat}-\kappa\Big|,
\qquad
\kappa\;\coloneqq\;\E\Big[\frac{G}{\Ghat}\Big],
\]
where $G=G(m\mid X)$, $\Ghat=\Ghat(m\mid X)$ and both expectations are over the
population law of $(X,m)$ rather than of $X$ alone. Moreover, since $\Ghat\le1$
forces $\kappa\ge\E[G]\ge g_{\min}$,
\[
\Delta_w \;\le\; \frac{1}{g_{\min}^{2}}\;
\E\!\Big[\sup_{u\le\tstar}\big|\Ghat(u\mid X)-G(u\mid X)\big|\Big].
\]
\end{theorem}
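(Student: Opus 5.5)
We follow the template of Tibshirani et al.\ (2019) and Lei and Cand\`es (2021): reduce the estimated-weight procedure to an oracle weighted-exchangeable procedure under a tilted law, and pay a total-variation price for the mismatch between the tilt and the population. Throughout we condition on $\mathcal D_{\mathrm{tr}}$ and on $\{\Ghat\ge g_{\min}\}$. By Assumption~\ref{as:split} this fixes $\pi$ and $\Ghat$ without disturbing the i.i.d.\ structure of the calibration and test subjects.

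\emph{Step 1 (tilted law).} Define a probability measure $\widetilde P$ on $(X,Y,m)$ by
\[
\frac{d\widetilde P}{dP}=\frac{G(m\mid X)/\Ghat(m\mid X)}{\kappa}.
\]
By Assumption~\ref{as:indep}, the label-complete subjects have law $dP_{\Delta=1}\propto G(m\mid X)\,dP$. Hence $\what=1/\Ghat$ is, up to the constant $\kappa$ (which self-normalisation cancels), the likelihood ratio $d\widetilde P/dP_{\Delta=1}$. We now repeat the proof of Lemma~\ref{lem:we}, the Yi et al.\ factorisation with $|\Ical|=k$ conditioned on, with $\widetilde P$ in place of $P$ as the target law of the test tuple. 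Concretely, for a test tuple drawn from $\widetilde P$, the $k+1$ tuples are weighted exchangeable with weights $(1,\dots,1,\what)$. So running \eqref{eq:qhat} with the test point's own atom $\what_{\mathrm{te}}$ gives $\widetilde P(V_{n+1}\le\qhat)\ge1-\alpha$.

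\emph{Step 2 (atom domination).} The map $A\mapsto\qhat$ is non-decreasing, since extra atom mass at $+\infty$ only lowers the masses $p_i$ below any finite $q$. Lemma~\ref{lem:atomhat} gives $A_{\mathrm{env}}\ge\what_{\mathrm{te}}$, using monotonicity of $\Ghat$ and $m\le\tstar$. Therefore the floor of Step 1 persists with $A_{\mathrm{env}}$. Marginalising over $|\Ical|=k$ yields $\widetilde P(Y\in S_{\qhat}(X))\ge1-\alpha$.

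\emph{Step 3 (change of measure).} The threshold $\qhat$ depends only on the calibration data and on $X_{\mathrm{te}}$ through $A_{\mathrm{env}}$. The coverage indicator is a $[0,1]$-valued function of the test tuple, conditionally on calibration. So the test-tuple laws satisfy
\[
|P(\cdot)-\widetilde P(\cdot)|\le d_{TV}(P,\widetilde P)=\tfrac12\,\E\bigl|\tfrac{G}{\kappa\Ghat}-1\bigr|=\Delta_w.
\]
This gives $P(Y\in S_{\qhat}(X))\ge1-\alpha-\Delta_w$. One care point: the dependence of $A_{\mathrm{env}}$ on $X_{\mathrm{te}}$ must be handled by applying the TV bound conditionally on the calibration sample, integrating over the test tuple only, and only then averaging.

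\emph{Step 4 (explicit bound).} Since $\Ghat\le1$, we have $\kappa\ge\E[G(m\mid X)]\ge g_{\min}$, because $G(m\mid X)\ge G(\tstar\mid X)$. Using $\E|Z-\E Z|\le2\,\E|Z-1|$ with $Z=G/\Ghat$, we get
\[
\Delta_w\le \kappa^{-1}\,\E\bigl|G-\Ghat\bigr|/\Ghat\le g_{\min}^{-2}\,\E\sup_{u\le\tstar}|\Ghat(u\mid X)-G(u\mid X)|.
\]

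The main obstacle is Step 1. The weighted-exchangeability argument of Lemma~\ref{lem:we} conditions on the random size $|\Ical|$ and on the multiset. One must check that the tilted test law combined with selected calibration tuples still factorises with the estimated weight $\what$, even though $\what$ uses $m$ while only the cause is hypothesised. We would try to verify the joint density $\prod_{i}dP_{\Delta=1}(z_i)\,d\widetilde P(z_{k+1})\propto\prod_{i\le k+1}G(m_i\mid x_i)\,dP(z_i)\cdot\what(z_{k+1})$ directly. The symmetric product then exhibits the weight function $\what$ on the test slot, exactly as in the oracle case.
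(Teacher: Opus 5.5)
Your proposal is correct and follows the paper's proof. Your tilt $d\widetilde P/dP=G/(\kappa\Ghat)$ is exactly the paper's tilted law $d\widetilde P=\widehat r\,dQ_{\mathrm{sel}}$ with $\widehat r=q\what/\kappa$. Under it the selected calibration tuples and the test tuple are weighted exchangeable with $\what$ on the test slot, $A_{\mathrm{env}}\ge\what_{\mathrm{te}}$ preserves the floor, transport to $P$ costs $d_{\mathrm{TV}}=\Delta_w$, and the same centre-and-split argument gives the explicit bound. One cosmetic slip: $\what$ equals $d\widetilde P/dP_{\Delta=1}$ up to the constant $q/\kappa$, not $\kappa$, which self-normalisation cancels anyway.
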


\begin{proof} In Section~\ref{sec:suppproofs}. \end{proof}

\begin{corollary}[Rate]\label{cor:rate}
Assume the marginal-$G$ version of Assumption~\ref{as:indep}, that is
$C\perp\!\!\!\perp(T,\varepsilon,X)$ jointly, and let $\Ghat$ be the marginal
Kaplan--Meier estimator of the censoring distribution, with failures removed from
the risk set before censorings at a shared time, as $T\le C$ requires. If $\tstar$ is strictly
below the end of follow-up, so that $G(\tstar)\ge g_{\min}>0$, and
$\Prob(T>\tstar)>0$, then
\[
\Prob\big(Y\in S_{\qhat}(X)\,\big|\,\mathcal D_{\mathrm{tr}}\big)
\;\ge\;1-\alpha-\Delta_w,
\qquad
\Delta_w=O_P\big(|\mathcal D_{\mathrm{tr}}|^{-1/2}\big).
\]
As $\Ghat\to G$ the penalty vanishes and the floor recovers $1-\alpha$.
\end{corollary}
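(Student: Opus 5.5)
The statement to prove is Corollary~\ref{cor:rate}. The plan is to apply Theorem~\ref{thm:main} conditionally on $\mathcal D_{\mathrm{tr}}$ and show that its penalty is $O_P(|\mathcal D_{\mathrm{tr}}|^{-1/2})$. Under the marginal-$G$ version of Assumption~\ref{as:indep}, $G(u\mid x)\equiv G(u)$ and $\Ghat$ is $X$-free. The theorem's second display then reduces to
\[
\Delta_w\le g_{\min}^{-2}\sup_{u\le\tstar}|\Ghat(u)-G(u)|,
\]
so everything rests on a uniform rate for the censoring Kaplan--Meier estimator on $[0,\tstar]$.

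\textbf{Step 1: hypotheses of Theorem~\ref{thm:main}.} The Kaplan--Meier curve is a product of factors in $[0,1]$, so it is non-increasing; flooring at $g_{\min}$ preserves monotonicity. Hence the conditions feeding Lemma~\ref{lem:atomhat} hold, and the event $\{\Ghat\ge g_{\min}\}$ holds by construction. One subtlety: the floor must not cost the rate. Since $G(\tstar)\ge g_{\min}$, flooring $\Ghat$ at $g_{\min}$ can only move it toward $G$ on $[0,\tstar]$, because $G(u)\ge g_{\min}$ there. So $|\max(\Ghat,g_{\min})-G|\le|\Ghat-G|$ pointwise on $[0,\tstar]$.

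\textbf{Step 2: uniform consistency at root-$n$.} Censoring is the ``event'' and failures are the competing ``censoring''. With failures removed from the risk set before tied censorings, the risk set at $u$ is $\{\widetilde T\ge u\}$. The estimator targets $\Prob(C\ge u)$, with the closed inequality matching $G$, because $C\perp\!\!\!\perp T$ gives a product-limit identity for the hazard of $C$. Invoke the standard Breslow--Crowley / Gill result: on an interval $[0,\tau]$ with $\Prob(\widetilde T\ge\tau)>0$, $\sqrt n(\Ghat-G)$ converges weakly to a Gaussian process in $\ell^\infty[0,\tau]$. Consequently $\sup_{u\le\tau}|\Ghat-G|=O_P(n^{-1/2})$. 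The positivity needed is
\[
\Prob(\widetilde T\ge\tstar)=\Prob(T\ge\tstar)\,G(\tstar)\ge\Prob(T>\tstar)\,g_{\min}>0,
\]
which is exactly where both hypotheses of the corollary enter. The phrase ``$\tstar$ strictly below the end of follow-up'' guarantees the at-risk process is bounded away from zero up to $\tstar$, so no tail issue arises. If a closed-form route is preferred, one can use the Duhamel representation of $\Ghat/G-1$ as a stochastic integral. Combined with Glivenko--Cantelli-type bounds on the at-risk fraction, this gives the same $O_P(n^{-1/2})$.

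\textbf{Step 3: conclude.} Plugging the bound from Step 2 into Theorem~\ref{thm:main} gives
\[
\Prob\big(Y\in S_{\qhat}(X)\mid\mathcal D_{\mathrm{tr}}\big)\ge1-\alpha-\Delta_w,
\]
with $\Delta_w\le g_{\min}^{-2}\,O_P(|\mathcal D_{\mathrm{tr}}|^{-1/2})$. Since $\Ghat\to G$ uniformly, $\Delta_w\to0$ in probability. Alternatively, in the first expression for $\Delta_w$, $\kappa\to1$ and $\E|G/\Ghat-\kappa|\to0$, which recovers the floor $1-\alpha$.

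\textbf{Main obstacle.} The step most likely to need care is Step 2's tie convention and closed-inequality target, i.e.\ checking that the Kaplan--Meier estimator of $\Prob(C\ge u)$ (a left-continuous version) is what the stated tie rule estimates. I would handle this by writing $G(u)=\prod_{v<u}(1-d\Lambda_C(v))$ and verifying that the estimated hazard uses risk sets $\{\widetilde T\ge v\}$ and counts only $\{\widetilde T=v,\ C<T\}$. Under that rule, tied failures count as uncensored, consistent with $\Delta$ defined via $C\ge m$.
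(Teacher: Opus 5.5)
Your proposal is correct and follows the paper's proof essentially step for step: under joint independence the theorem's sup-norm bound collapses to $g_{\min}^{-2}\sup_{u\le\tstar}|\Ghat(u)-G(u)|$, and the floor at $g_{\min}$ cannot increase this distance because $G\ge g_{\min}$ on $[0,\tstar]$. Gill's uniform $O_P(|\mathcal D_{\mathrm{tr}}|^{-1/2})$ rate then applies because $\Prob(\widetilde T\ge\tstar)\ge\Prob(T>\tstar)\,G(\tstar)>0$. One small correction to the tie check you flag as the main obstacle: under the stated convention the censoring risk set at $v$ is $\{\widetilde T\ge v\}$ with the failures at $v$ removed, i.e.\ latently $\{C\ge v,\,T>v\}$, and not $\{\widetilde T\ge v\}$ itself; only that choice makes the estimated hazard target $\Prob(C=v)/\Prob(C\ge v)$ when $T$ and $C$ share atoms.
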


\begin{proof} In Section~\ref{sec:suppproofs}. \end{proof}

\noindent The theorem holds for any atom that dominates the weight the procedure
carries, and \eqref{eq:atom} is the smaller of the two atoms that
Lemma~\ref{lem:atomhat} admits. Section~\ref{sec:suppatom} gives that lemma, the
condition on calibration mass under which the weighted quantile is finite at a
small $\alpha$, and the per-cause version of the procedure.

\section{Experiments}
\label{sec:simulation}

Covariates are independent standard normals, cause-specific hazards are constant
with cause~1 depending on the first covariate, the horizon is $\tstar=1$ and the
all-cause survival there is $0.5$ at zero signal (the share parameter). Censoring is exponential and, unless a design
states otherwise, covariate-free, so coverage is computed against the generating
process's own censoring survival and the evaluation shares no estimate with the
procedure. The designs use $g_{\min}=0.05$, between $2.5$ and $17$ times looser
than Assumption~\ref{as:pos} permits in them; designs with another value state it
in their captions. Section~\ref{sec:suppsim} gives the remaining constants.

Section~\ref{sec:correctfit} holds the fitted model correct,
Section~\ref{sec:dials} misspecifies it in a controlled direction, and
Section~\ref{sec:wrongG} keeps the score exact and misspecifies the censoring
model, covering three cells of Table~\ref{tab:twobytwo}.

\subsection{A correctly specified score}
\label{sec:correctfit}

\begin{figure}[t]
\centering
\includegraphics[width=0.82\textwidth]{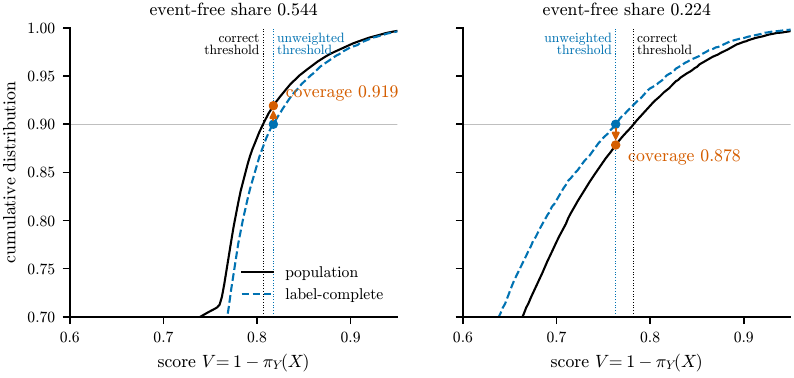}
\caption{Where selection puts the scores, at the true score and censoring $0.5$. Each panel shows the population distribution of the score (solid) and its distribution among label-complete subjects (dashed), pooled over $20$ replications with the seeds and design of Table~\ref{tab:oraclescore}. The unweighted threshold is where the dashed curve reaches $1-\alpha$ (blue dot), and coverage is the height of the solid curve at that threshold (orange dot); the correct threshold is where the solid curve reaches $1-\alpha$. At a large event-free share selection moves the dashed curve to the right of the solid one near $1-\alpha$, so the unweighted threshold is too large and coverage is $0.919$. At a small share it moves it left, and coverage is $0.878$. The curves and coverages are recomputed from the scores without the forest (\texttt{e57\_score\_cdf.py}), so they differ slightly from Table~\ref{tab:oraclescore}.}
\label{fig:mechanism}
\end{figure}

\begin{table}[t]
\centering
\caption{The searches of Section~\ref{sec:correctfit} at the true score. Each replication fits one forest and swaps only the score, so the last column is paired. $\pi_0$ is the mean event-free probability at $t^\ast$, true and as fitted. The first block sweeps the event-free share at censoring $0.50$, the second the censoring rate at the largest share. Coverage is inverse-probability weighted under the oracle $G$ over 20 replications, with standard errors in units of the last printed digit. The weighted arm uses training-fold Kaplan--Meier weights and the envelope atom \eqref{eq:atom}. Each draw normalises the cause-1 hazard by its own covariate mean, so the calibration pool and the test draw do not share one law exactly, and this table is a paired comparison of scores, not i.i.d.\ evidence. Table~\ref{tab:iid} reruns the true-score cells with independent draws.}
\label{tab:oraclescore}
\resizebox{\ifdim\width>\textwidth\textwidth\else\width\fi}{!}{%
\begin{tabular}{rrrrrrr}
\toprule
censoring & true $\pi_0$ & fitted $\pi_0$ & unwtd., fitted & unwtd., true & wtd., true & fitted $-$ true \\
\midrule
0.50 & 0.544 & 0.571 & 0.9225 & 0.9194(25) & 0.9030(27) & +0.0031(24) \\
0.50 & 0.322 & 0.425 & 0.9193 & 0.9031(31) & 0.9021(29) & +0.0161(34) \\
0.50 & 0.224 & 0.385 & 0.9159 & 0.8774(35) & 0.9096(35) & +0.0385(30) \\
\addlinespace
0.20 & 0.544 & 0.544 & 0.9092 & 0.9056(23) & 0.9006(25) & +0.0036(19) \\
0.40 & 0.544 & 0.556 & 0.9142 & 0.9138(27) & 0.9022(28) & +0.0004(20) \\
0.60 & 0.544 & 0.587 & 0.9277 & 0.9254(26) & 0.9043(30) & +0.0022(26) \\
0.70 & 0.544 & 0.616 & 0.9366 & 0.9344(30) & 0.9056(36) & +0.0022(27) \\
0.75 & 0.544 & 0.637 & 0.9391 & 0.9372(35) & 0.9050(55) & +0.0019(38) \\
\bottomrule
\end{tabular}
}
\end{table}

Table~\ref{tab:oraclescore} runs the unweighted procedure at the true class
probabilities. The share parameter $0.5$, $0.25$ and $0.15$ gives realised
event-free shares of $0.544$, $0.322$ and $0.224$. At the largest share the
unweighted procedure over-covers, increasingly with censoring, from $0.9056$ at
censoring $0.20$ to $0.9372$ at $0.75$. At the smallest it under-covers, at
$0.8774$. The weighted procedure stays between $0.9006$ and $0.9096$. That table
normalises each draw by its own covariate mean and averages $20$ replications, so
the calibration pool and the test draw do not share one law exactly. With a fixed
normaliser, so that all subjects are i.i.d., and $200$ replications
(Table~\ref{tab:iid}), the smallest share covers $0.8723$, $21$ standard errors
below nominal, the middle share $0.9008$, within one standard error of nominal,
and the largest share $0.9053$ to $0.9380$ across censoring rates, while the
weighted procedure stays between $0.9005$ and $0.9054$.

Figure~\ref{fig:mechanism} shows the mechanism. The unweighted threshold is a
quantile of the label-complete scores, and selection shifts their distribution
relative to the population's near $1-\alpha$. Subjects event-free at the horizon
have the lowest selection probability at fixed covariates and are
under-represented in complete-case
calibration. At a large share the true score ranks their label as the easiest,
and under-representing them raises the threshold. At a small share it ranks their
label among the hardest, and under-representing them lowers it.

A fitted forest overstates the event-free probability, by $0.027$ at the largest
share and $0.160$ at the smallest, which pushes the unweighted procedure up. At a
fitted score the unweighted procedure therefore does not fall below nominal
across censoring rates, event-free shares and miscoverage levels
(Figure~\ref{fig:alpha} and \supmat). At
the true score the censoring and miscoverage-level sweeps still over-cover, with
the smallest margin $0.9549$ against $0.950$ (\texttt{e53-cells.json}), and the
event-free-share sweep under-covers as above.

At a fitted score, Figure~\ref{fig:alpha} crosses the miscoverage level with the censoring rate over
16 cells, and the weighted procedure returns sets no larger than the unweighted
ones in all 16. The arms share seeds, forest and splits, so the differences are
paired, and all 16 are negative. Relative to the unweighted set the reduction
ranges from $0.52\%$ at $\alpha=0.05$ and censoring $0.2$, where it is
$-0.0134\pm0.0020$ labels, to $25.5\%$ at $\alpha=0.3$ and censoring $0.75$.
\Supmat\ compares both procedures with a set read directly
off the fitted cumulative incidence functions.

\subsection{A misspecified score}
\label{sec:dials}

To misspecify the fitted model in a known direction we shrink its event-free
probability by a factor $1-b$ and renormalise. The shrinkage is applied to the
forest's output, so no forest is refitted and the misspecification is separate
from the covariate noise that Table~\ref{tab:scoremisspec} varies. The origin
$b=0$ is the forest, not the truth. In the exponential family the forest's mean
event-free probability exceeds the true one by up to $0.072$ at censoring $0.70$
(Table~\ref{tab:oraclescore}), so the crossings below are measured from a slight
overstatement. Coverage is weighted by the oracle $G$, so an error in $\hat G$
cannot offset itself.

\begin{figure}[t]
\centering
\includegraphics[width=0.92\textwidth]{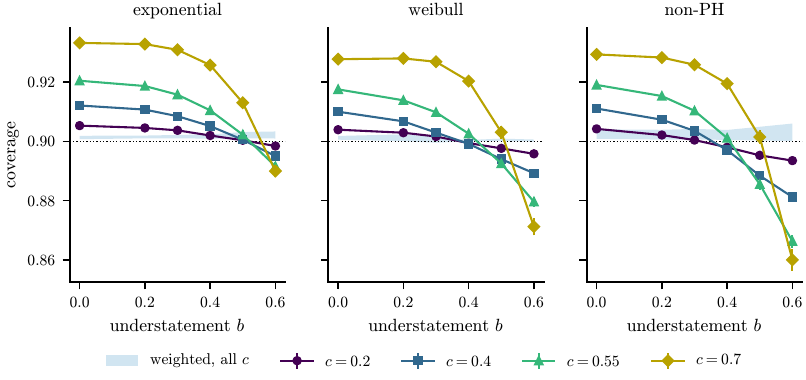}
\caption{Coverage of the unweighted procedure against the understatement $b$ of the event-free probability, at $250$ replications (Table~\ref{tab:dialsdense}), with bars of two marginal standard errors. Each line is a censoring rate $c$, and the dotted line is nominal. The lines cross nominal at different $b$, so no censoring rate marks where the procedure becomes unsafe. The shaded band spans the weighted procedure over all four censoring rates.}
\label{fig:dials}
\end{figure}

\begin{table}[t]
\centering
\caption{The censoring rate crossed with the direction of the fitted model's error. $b$ shrinks the fitted event-free probability and renormalises, and the covariate is clean, so $b$ is the only misspecification. Coverage is inverse-probability weighted under the oracle $G$. The coverage columns carry the marginal standard error over replications and the swing column the paired one, since the $b$ sweep reuses one forest fit per cell, both in units of the last printed digit. Table~\ref{tab:dialsdense} reruns this design at 250 replications on a finer $b$ grid.}
\label{tab:dialsbody}
\resizebox{\ifdim\width>\textwidth\textwidth\else\width\fi}{!}{%
\begin{tabular}{lrrrrrrr}
\toprule
family & $c$ & \multicolumn{4}{c}{unweighted} & \multicolumn{2}{c}{weighted} \\
 & & $b=0$ & $b=0.4$ & $b=0.6$ & swing & $b=0$ & $b=0.6$ \\
\midrule
exponential & 0.20 & 0.9039(14) & 0.9008(16) & 0.8979(17) & -0.0061(17) & 0.8997(15) & 0.9005(17) \\
  & 0.40 & 0.9132(14) & 0.9055(19) & 0.8956(19) & -0.0176(17) & 0.9012(12) & 0.9012(17) \\
  & 0.55 & 0.9199(13) & 0.9121(18) & 0.8901(22) & -0.0298(18) & 0.8991(15) & 0.9004(19) \\
  & 0.70 & 0.9318(15) & 0.9259(19) & 0.8896(31) & -0.0422(31) & 0.9004(24) & 0.9019(30) \\
weibull & 0.20 & 0.9020(15) & 0.8981(12) & 0.8954(14) & -0.0067(20) & 0.8987(15) & 0.8995(14) \\
  & 0.40 & 0.9085(16) & 0.8974(13) & 0.8859(18) & -0.0226(22) & 0.8987(17) & 0.8977(17) \\
  & 0.55 & 0.9157(16) & 0.8995(18) & 0.8769(21) & -0.0389(27) & 0.8971(17) & 0.8989(22) \\
  & 0.70 & 0.9263(17) & 0.9178(21) & 0.8676(35) & -0.0587(35) & 0.9000(24) & 0.8992(28) \\
non-PH & 0.20 & 0.9032(14) & 0.8975(14) & 0.8951(14) & -0.0082(17) & 0.8998(15) & 0.9014(13) \\
  & 0.40 & 0.9099(16) & 0.8953(19) & 0.8836(17) & -0.0262(20) & 0.9005(18) & 0.9008(14) \\
  & 0.55 & 0.9190(14) & 0.9023(22) & 0.8696(29) & -0.0494(29) & 0.8999(15) & 0.9046(22) \\
  & 0.70 & 0.9301(17) & 0.9177(23) & 0.8614(48) & -0.0687(47) & 0.9028(21) & 0.9075(34) \\
\bottomrule
\end{tabular}
}
\end{table}

Table~\ref{tab:dialsbody} crosses $b$ with the censoring rate. At $b=0$ the
unweighted procedure's over-coverage grows with censoring, from $+0.0020$ to
$+0.0039$ at censoring $0.20$ to $+0.0263$ to $+0.0318$ at $0.70$, as in
Section~\ref{sec:correctfit} under a different generator. The coverage loss caused
by $b$ also grows with censoring, from $-0.0061$ to $-0.0082$ at $0.20$ to
$-0.0422$ to $-0.0687$ at $0.70$, a factor of $7.0$ to $8.8$ with paired standard
errors of $2.0$ and $2.7$.

Censoring thus amplifies two effects of opposite sign: the over-coverage at $b=0$
and the loss from a misspecified fit. At $b=0$ the unweighted procedure is above
nominal by at least 2 marginal standard errors in 11 of the 12 cells. At $b=0.6$
it is below nominal in every cell, by that margin in 11, so at every censoring rate
the procedure crosses nominal at some intermediate $b$, and the crossing point
varies with the censoring rate. At $250$ replications (Table~\ref{tab:dialsdense}), non-PH at $b=0.4$
covers $0.8971$ at censoring $0.40$, $4.5$ standard errors below nominal, and
$0.9195$ at $0.70$. Weibull at $b=0.5$ covers $0.8940$ and $0.9031$, $9.5$ below
and $2.9$ above. Within one family and at a fixed misspecification, raising the
censoring rate moves the procedure from below nominal to above it. The observed
crossing point moves with the censoring rate (Table~\ref{tab:contour},
Figure~\ref{fig:dials}), so no threshold on the censoring rate alone separates safe
from unsafe cases. The weighted procedure
ranges from $0.8971$ to $0.9075$ across the grid.

\paragraph{Where the error lands.} Table~\ref{tab:landing} decomposes the
unweighted procedure's miscoverage by true label. Subjects event-free at the
horizon have $m=\tstar$, so their selection probability is $G(\tstar)$, the
smallest possible, and the calibration fold is depleted of them. At censoring
$0.70$ their selection rate is $0.1960$, against $0.5604$ and $0.5332$ for the two
causes. At $b=0$ the unweighted procedure covers this group $0.9995$ of the time
and all of its error falls on the cause labels. At $b=0.6$ the group carries
$+0.0700$ of additional miscoverage while both causes improve.

Since $b$ raises the event-free group's nonconformity by construction, we compare
with a control $\lambda$ that replaces the informative covariate with noise in
training. It changes the fitted model, taking cause-2 coverage from $0.8282$ to
$0.7726$, but leaves the event-free group's miscoverage at $0.0000$, a change of
$-0.0003$ from the $b=0$ row. A misspecification that leaves the depleted group
alone does not move marginal coverage. The effect also scales with the depletion:
the same $b$ adds $+0.0165$ of miscoverage at censoring $0.40$ and $+0.0430$ at
$0.70$, while the event-free selection rate falls from $0.6331$ to $0.1960$.

\paragraph{A common misspecification.} A standard competing-risks error acts in
the same direction as $b$. Estimating each cause's incidence as one minus a
cause-specific survival curve treats the competing event as censoring and
understates the event-free probability. We ran this error in all three families
over 48 designs at 100 replications each, taking each cause's curve as one minus
the exponential of the negative of the forest's cause-specific Nelson--Aalen cumulative hazard and the
event-free probability as one minus their sum, clipped and renormalised
(Table~\ref{tab:misfitdirect}). The equivalent $b$
reaches $0.1764$ under exponential hazards, $0.2464$ under Weibull and $0.3088$
under non-proportional hazards. These designs use per-draw normalisation and, for Weibull and non-proportional
hazards, a censoring rate set from each draw (Appendix~\ref{sec:suppsim}). The
unweighted procedure falls at least two standard errors below nominal in 7
designs, all Weibull or non-proportional at
censoring $0.20$ to $0.55$, and at least four standard errors below in 4, at
censoring $0.20$ to $0.40$. The lowest coverage is $0.8920$, under
non-proportional hazards at censoring $0.40$ and competing fraction $0.6$. In a
paired comparison on the exponential family (Table~\ref{tab:naturalmisfit}) the
weighted procedure does not move with the error, at $t$ between $-1.1$ and $+0.9$
in all 8 cells, while the unweighted procedure falls significantly in 7 of them
and stays above nominal. Whether this error alone breaks unweighted calibration
depends on the hazard shape rather than on the size of the error.

\paragraph{Public cohorts.} Three competing-risks cohorts from the R package
\texttt{survival} show the same direction. Their censoring depends on the
covariates, and in these analyses coverage is estimated with the marginal
Kaplan--Meier model that the weighted procedure also calibrates on, so the
estimates do not establish population coverage. Under this estimator the
unweighted procedure as fitted is estimated above nominal in all 6 cells, by
$0.003$ to $0.029$. At $b=0.6$ the estimated coverage at the later horizon is
$0.8913$ for mgus2 and $0.8966$ for PBC. These estimates use a misspecified
censoring model, and the mgus2 cell also violates Assumption~\ref{as:pos} in the
latest quarter of accrual (Section~\ref{sec:public}).

\subsection{A misspecified censoring model}
\label{sec:wrongG}

\begin{figure}[t]
\centering
\includegraphics[width=0.92\textwidth]{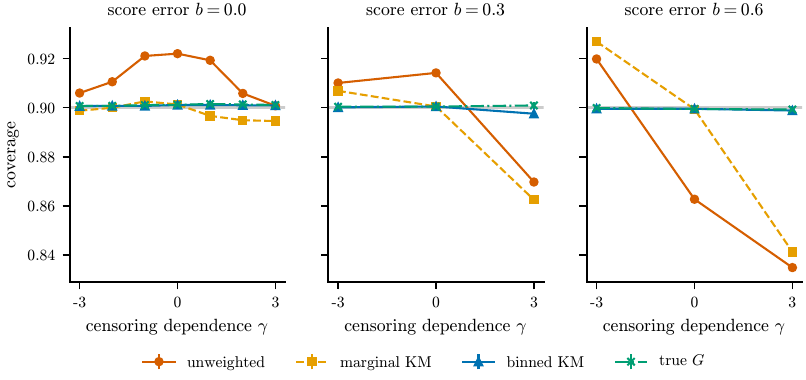}
\caption{Coverage when censoring depends on the covariate carrying the signal, on bounded support with $G(\tstar\mid X)\ge0.08$. The censoring hazard is proportional to $\exp(\gamma X_0)$, and $b$ shrinks the true event-free probability by $1-b$, so the left panel has the true score. Weights from the true $G$, or from binned Kaplan--Meier, which approximates it (cool colours), stay on nominal (grey line) in every panel. The unweighted procedure and marginal Kaplan--Meier weights move with both $\gamma$ and $b$, in either direction. Bars are two standard errors over $200$ replications. Tables~\ref{tab:boundedcens} and~\ref{tab:fourthcell} give the numbers.}
\label{fig:wrongG}
\end{figure}

The weighted procedure is robust to errors in the score only if its weights are
correct. Figure~\ref{fig:wrongG} and Table~\ref{tab:boundedcens} make censoring
depend on the covariate that carries the signal, on bounded support so that
$G(\tstar\mid X)\ge0.08$ for every subject, and use the oracle score so that the
only error is in $\Ghat$. The unweighted procedure stays above nominal, lowest at
$0.9008$. Marginal Kaplan--Meier weights are at least two standard errors below
nominal in 4 of the 6 designs with covariate-dependent censoring, at $0.8946$ to
$0.8988$, while Kaplan--Meier within 10 bins of the covariate holds at $0.9006$ or
above. The shortfalls are small and in the direction Theorem~\ref{thm:main}
allows, since a marginal $\Ghat$ is misspecified once censoring depends on the
covariates and $\Delta_w$ is then positive.

Stronger covariate dependence breaks both procedures. In a two-region law with the
true score, $40\%$ of subjects are in a region where $30\%$ are event-free at the
horizon and $G(\tstar)=0.10$, and the rest in a region where $95\%$ are event-free
and $G(\tstar)=0.90$. The unweighted procedure and marginal Kaplan--Meier weights
both cover $0.8294$ with single-label sets, while Kaplan--Meier weights fitted
within each region and the oracle weights cover $0.9496$ with $1.40$ labels
(\texttt{e54-cells.json}, 200 replications). Complete-case selection retains the
region-B survivors, whose label the true score ranks easiest, and drops the
region-A survivors, whose label it ranks among the hardest, so the complete-case
threshold is too small. Marginal weights cannot undo selection between regions.
Most subjects, $0.69$, are event-free at the horizon, and the true score still
under-covers, so the event-free share of Section~\ref{sec:correctfit} indicates
where selection moves the scores but does not determine it.

The right two panels of Figure~\ref{fig:wrongG} (Table~\ref{tab:fourthcell}) also
misspecify the score, shrinking the true event-free probability by $1-b$ in the
design of Table~\ref{tab:boundedcens}. The unweighted procedure misses in either
direction: at $b=0.6$ it covers $0.8349$ when censoring is heavier where cause~1
is likelier ($\gamma=3$) and $0.9198$ when it leans the other way. Marginal
Kaplan--Meier weights follow it, from $0.8413$ to $0.9269$. Oracle weights stay
between $0.8992$ and $0.9009$, and binned Kaplan--Meier weights within $0.0012$ of
nominal except one cell at $0.8975$ (\texttt{e55-cells.json}).

\paragraph{Real covariates.} Table~\ref{tab:realcov} keeps the covariates and
fitted covariate effects of two public cohorts, mgus2 and PBC, takes a
cause-specific exponential regression fitted to each as the truth, and adds
synthetic censoring that depends on each subject's true all-cause hazard, so the
true $G$ is known. Weights from the true $G$, or from binned Kaplan--Meier, stay
between $0.8995$ and $0.9030$ in all $12$ cells. At the true score and
$\gamma=+3$, the unweighted procedure on mgus2 covers $0.8885$ and marginal
Kaplan--Meier weights $0.8871$, while binned Kaplan--Meier covers $0.9018$, so the
upper-right cell of Table~\ref{tab:twobytwo} also appears on real covariates. On
PBC at the true score with covariate-free censoring the unweighted procedure
over-covers, at $0.9203$, and an understated event-free probability lowers it by
$0.0258$ on mgus2 and $0.0327$ on PBC, as in Section~\ref{sec:dials}. Under a
misspecified score it again misses in either direction, from $0.8611$ to $0.9206$
(\texttt{e58-cells.json}).

\paragraph{Outside positivity.} When Assumption~\ref{as:pos} fails, coverage
degrades gradually. In a design where only the population mass $p$ whose true
$G(\tstar\mid X)$ equals half of $g_{\min}$ varies, coverage against the latent
label is $0.898$, $0.896$, $0.894$ and $0.883$ as $p$ goes from $0$ to $0.30$, and
on the violating subjects themselves $0.895$, $0.876$ and $0.826$
(Table~\ref{tab:posviolation}). The clip at $g_{\min}$ cannot detect this: it acts
on the analyst's estimate, and a marginal Kaplan--Meier averages over the hard
region, so the clip fires at rate $0.0000$ at every $p$, including the one where
$30$ per cent of subjects are below the bound. Reweighting cannot repair a
violated identity.

\section{Discussion and limitations}
\label{sec:limitations}

\noindent In the studied simulation families, a correct outcome model does not
protect the unweighted procedure. At the true score it under-covers when few subjects are event-free at the horizon,
and under covariate-dependent censoring it can fail badly. A fit that understates
the event-free probability lowers coverage at any censoring rate, and whether the
total crosses nominal depends jointly on the share, the size of the error, the
censoring rate and the hazard shape. In these designs, weights from a correct
censoring model kept the weighted procedure near nominal across all of these, and
it is the only one of the two procedures with a guarantee, conditional on $\Ghat$. With a misspecified censoring model it fails with the
unweighted procedure, and marginal Kaplan--Meier weights fall below nominal in 4
of the 6 covariate-dependent designs of Table~\ref{tab:boundedcens}. In these
synthetic and semi-synthetic designs the practical question is whether the
censoring can be modelled, not whether the outcome model is good. On SEER at
$\tstar=107$ months, the one real censoring mechanism studied, correcting the
censoring model changed pooled coverage by $0.0011$, and the marginal model was
closer to nominal (Section~\ref{sec:realdata}).

\noindent The cells of Table~\ref{tab:twobytwo} come from different designs.
Section~\ref{sec:dials} uses a fitted forest and covariate-free censoring, and
Section~\ref{sec:wrongG} the true score and covariate-dependent censoring, so their
magnitudes are not comparable. The synthetic designs outside
Section~\ref{sec:wrongG} share one structure: independent Gaussian covariates, a
cause-1 signal on the first covariate and covariate-free exponential censoring,
with three hazard families. The real-covariate design of Table~\ref{tab:realcov}
departs from it, but its event times are exponential and its censoring synthetic.

\noindent The bound on $\Delta_w$ is a total-variation coupling with a
$g_{\min}^{-2}$ constant. It exceeds $\alpha=0.1$ in every design we run, while
the measured $\Delta_w$ ranges from $0.0035$ to $0.2371$, so it establishes the
parametric rate in $\lVert\Ghat-G\rVert$ but gives no usable number. The penalty
is a population quantity in the unknown $G$ and cannot be estimated from the data.

\noindent The shrinkage $b$ is applied to the fitted model's output rather than
obtained by fitting a different model, and at $b=0.6$ it understates the
event-free rate by a factor of $2.5$ in the cell that crosses. The competing-risks
error of Section~\ref{sec:dials} reaches the crossing without shrinkage, under two
of the three hazard families. Fitted-score experiments use a competing-risks forest,
except Table~\ref{tab:estimator}, and oracle-score experiments use the known
class probabilities. Standard errors on the public cohorts
are over splits of one fixed sample, so those rows describe the procedure on those
cohorts.

\label{end:maintext}

\clearpage
\section*{Reproducibility statement}

The code that produced every table and figure accompanies the submission, with
the fixed seeds of each experiment. The tables report runs at the atom
\eqref{eq:atom}, except two tables in \supmat, the corner design, which uses
oracle weights and the constant atom $1/g_{\min}$, and the per-class thresholds
of the independence-violation design, which use the constant atom. Assumptions~\ref{as:indep}
to~\ref{as:split} are stated in Section~\ref{sec:setup}, and every proof is in
Section~\ref{sec:suppproofs}. The simulation designs, their constants and the
number of replications behind each exhibit are in Section~\ref{sec:suppsim}. Table
entries are generated from the run outputs rather than transcribed.

SEER data are released under a data-use agreement that does not permit
redistribution, so the registry cohort cannot accompany this paper. We provide the
selection statement that defines it, the two filters and their sensitivity
analyses (Section~\ref{sec:realdata}), and the aggregate outputs behind every
registry number in the text. A reader holding a SEER release can rebuild the
cohort from the stated selection.

\section*{Ethics statement}

The registry analysis uses the SEER public-use research data, which are
de-identified at source and obtained under the standard data-use agreement. No
individual-level record is reported here and no subject is identifiable from any
aggregate in this paper. The work is a methodological study of uncertainty
quantification and was not used to guide the care of any patient.

\section*{Generative AI disclosure}

An AI coding assistant (Anthropic's Claude, used through Claude Code) assisted
with implementing the experiments and with preparing this manuscript. The
research questions, the methodology and the final content of every claim were
decided by the authors, who reviewed all AI-assisted output and take full
responsibility for the correctness of the code and the accuracy of the claims.

\section*{References}
\begin{sloppypar}
\begin{list}{}{%
  \setlength{\leftmargin}{1.2em}\setlength{\itemindent}{-1.2em}%
  \setlength{\itemsep}{0pt}\setlength{\parsep}{0pt}\setlength{\topsep}{0pt}}
\item J. Alberge, V. Maladi\`ere, O. Grisel, J. Ab\'ecassis and G. Varoquaux.
  Survival models: proper scoring rule and stochastic optimization with competing
  risks. In \emph{Proceedings of the Twenty-Eighth International Conference on
  Artificial Intelligence and Statistics}, PMLR 258, 2025.
\item H. Bostr\"om, U. Johansson and A. Vesterberg. Predicting with confidence
  from survival data. In \emph{Proceedings of the Eighth Symposium on Conformal and
  Probabilistic Prediction and Applications}, PMLR 105:123--141, 2019.
\item H. Bostr\"om, H. Linusson and A. Vesterberg. Mondrian predictive systems
  for censored data. In \emph{Proceedings of the Twelfth Symposium on Conformal
  and Probabilistic Prediction with Applications}, PMLR 204:399--412, 2023.
\item E. J. Cand\`es, L. Lei and Z. Ren. Conformalized survival analysis.
  \emph{Journal of the Royal Statistical Society Series B}, 85(1):24--45, 2023.
\item Q.-V. Dang, N.-S.-A. Nguyen and T.-B.-D. Vo. CONFIDE: conformal free
  inference for distribution-free estimation in causal competing risks.
  \emph{Mathematics}, 14(2):383, 2026.
\item H. Davidov, S. Feldman, G. Shmai, R. Kimmel and Y. Romano. Conformalized
  survival analysis for general right-censored data. In \emph{International
  Conference on Learning Representations}, 2025.
\item J. Fan, J. Park, T. P. T. Vo and N. Brunel. Weighted conformal prediction
  provides adaptive and valid mask-conditional coverage for general missing data
  mechanisms. arXiv:2512.14221, 2025.
\item R. Farina, E. J. Tchetgen Tchetgen and A. K. Kuchibhotla. Doubly robust and
  efficient calibration of prediction sets for right-censored time-to-event
  outcomes. \emph{Biometrika}, 113(3):asag036, 2026.
\item R. D. Gill. \emph{Censoring and Stochastic Integrals}. Mathematical Centre
  Tracts 124. Mathematisch Centrum, Amsterdam, 1980.
\item Y. Gui, R. Hore, Z. Ren and R. Foygel Barber. Conformalized survival analysis
  with adaptive cut-offs. \emph{Biometrika}, 111(2):459--477, 2024.
\item C. Holmes and A. Marandon. Two-sided conformalized survival analysis.
  arXiv:2410.24136, 2025.
\item A. Javanmardi, Y. Sale, P. Hofman and E. H\"ullermeier. Conformal
  prediction with partially labeled data. In \emph{Proceedings of the Twelfth
  Symposium on Conformal and Probabilistic Prediction with Applications}, PMLR
  204:251--266, 2023.
\item Y. Jin, Z. Ren and E. J. Cand\`es. Sensitivity analysis of individual
  treatment effects: a robust conformal inference approach. \emph{Proceedings of
  the National Academy of Sciences}, 120(6):e2214889120, 2023.
\item L. Lei and E. J. Cand\`es. Conformal inference of counterfactuals and
  individual treatment effects. \emph{Journal of the Royal Statistical Society
  Series B}, 83(5):911--938, 2021.
\item W. Liu, \'A. de Paula and E. Tamer. Prediction sets and conformal
  inference with interval outcomes. arXiv:2501.10117, 2025.
\item M. Paul, A. K. Kuchibhotla and E. J. Tchetgen Tchetgen. Multiply robust
  conformal risk control with coarsened data. arXiv:2508.15489, 2025.
\item A. Podkopaev and A. Ramdas. Distribution-free uncertainty quantification
  for classification under label shift. In \emph{Proceedings of the
  Thirty-Seventh Conference on Uncertainty in Artificial Intelligence}, PMLR
  161:844--853, 2021.
\item J. Qin, J. Piao, J. Ning and Y. Shen. Conformal predictive intervals in
  survival analysis: a resampling approach. \emph{Biometrics}, 81(2):ujaf063, 2025.
\item M. Sesia and V. Svetnik. Doubly robust conformalized survival analysis with
  right-censored data. arXiv:2412.09729, 2025a.
\item M. Sesia and V. Svetnik. Conformal survival bands for risk screening under
  right-censoring. In \emph{Proceedings of the Fourteenth Symposium on Conformal and
  Probabilistic Prediction with Applications}, PMLR 266:464--514, 2025b.
\item M. Sesia and V. Svetnik. Distribution-free selection of low-risk oncology
  patients for survival beyond a time horizon. arXiv:2512.18118, 2026.
\item J. Shin, C. H. Lee and S. Kang. Weighted conformal prediction for
  survival analysis under covariate shift. arXiv:2512.03738, 2025.
\item W. Si and H. Qiu. Training-set conditionally valid prediction sets with
  right-censored data. arXiv:2505.02213, 2025.
\item R. J. Tibshirani, R. Foygel Barber, E. J. Cand\`es and A. Ramdas. Conformal
  prediction under covariate shift. \emph{Advances in Neural Information
  Processing Systems} 32, 2019.
\item V. Vovk, D. Lindsay, I. Nouretdinov and A. Gammerman. Mondrian confidence
  machine. On-line Compression Modelling Project, Working Paper 4, 28 March 2003.
\item J. Wang and S. Goel. Weight clipping for robust conformal inference under
  unbounded covariate shifts. arXiv:2605.02072, 2026.
\item Y. Wang, A. W. Levis, S. Yang and L. Han. History-aware conformal prediction
  sets for censored time-to-event outcomes. arXiv:2605.06581, 2026.
\item M. Yi, Z. Xiao, H. J. Wang and Y. Tang. Survival conformal prediction under
  random censoring. \emph{Stat}, 14(2):e70052, 2025.
\item M. Yi, Y. Zhang, Y. Tang and H. J. Wang. Conformal prediction under
  nonignorable missingness. \emph{Statistica Sinica}, in press,
  doi:10.5705/ss.202025.0156.
\item M. Yin, C. Shi, Y. Wang and D. M. Blei. Conformal sensitivity analysis for
  individual treatment effects. \emph{Journal of the American Statistical
  Association}, 119(545):122--135, 2024.
\end{list}
\end{sloppypar}

\clearpage
\appendix

\renewcommand{\theremark}{A\arabic{remark}}
\setcounter{remark}{0}

\section{Oracle weights, the atom and per-cause thresholds}
\label{sec:suppatom}

This section collects the parts of the theory the experiments do not turn on.
The first is the floor at oracle weights, which Theorem~\ref{thm:main} extends
to estimated ones.

\begin{lemma}[Oracle weights]\label{lem:oracle}
Run \eqref{eq:qhat} with oracle calibration weights $\wstar_i$ and any atom $A$
that dominates the test slot's own weight, $A\ge\wstar_{\mathrm{te}}$ almost
surely. Under Assumptions~\ref{as:indep} to~\ref{as:split},
\[
\Prob\big(Y\in S_{\qhat}(X)\big)\;\ge\; 1-\alpha .
\]
The constant $w_\infty=1/g_{\min}$ is such an atom by Assumption~\ref{as:pos},
and so is the smaller $1/G(\tstar\mid X)$ (Remark~\ref{rem:atomtight}).
\end{lemma}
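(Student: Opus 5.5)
The plan is to reduce Lemma~\ref{lem:oracle} to Lemma~\ref{lem:we}, using the fact that adding atom mass can only raise the weighted quantile. Lemma~\ref{lem:we} already gives coverage at least $1-\alpha$ when \eqref{eq:qhat} is run with oracle calibration weights and the test point's own atom $\wstar_{\mathrm{te}}$.

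One point needs care: Lemma~\ref{lem:we} assigns unit weight to the calibration slots and $\wstar$ to the test slot. The procedure in the statement instead uses $\wstar_i$ on the calibration slots. The reconciling step is to rerun the Tibshirani et al.\ (2019) argument directly. Conditionally on $|\Ical|=k$ and on the multiset $\{Z_1,\dots,Z_{k+1}\}$ of label-complete calibration tuples plus the test tuple, I would compute the probability that the test slot holds a given tuple $z_j$. Under Assumption~\ref{as:indep}, the density of a label-complete calibration tuple is $f(x,y,m)\,G(m\mid x)$, normalised by $\Prob(\Delta=1)$, while the test tuple has density $f$. After the normalising constants cancel, the probability that $z_j$ sits in the test slot is proportional to $\prod_{i\ne j}G(m_i\mid x_i)$. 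Dividing by the full product, it is proportional to $1/G(m_j\mid x_j)=\wstar(z_j)$. So the test score is distributed as $\sum_j p^\star_j\delta_{V_j}$ with $p^\star_j=\wstar_j/\sum_{l=1}^{k+1}\wstar_l$. This is exactly the self-normalised weighted empirical law in which every slot, calibration and test, carries its own oracle weight. The standard step, where the event $V_{k+1}>\qhat$ with $\qhat$ computed from that law puts the test score in the top $\alpha$ mass, then gives conditional coverage at least $1-\alpha$. Integrating over $k$ and the multiset gives the marginal statement. This includes $k=0$, where $\qhat=+\infty$ whenever the atom alone fails to reach $1-\alpha$ mass.

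Next comes monotonicity in the atom. Write $q(A)$ for the quantile \eqref{eq:qhat} with atom $A$. Each $p_i=\wstar_i/(\sum_j\wstar_j+A)$ is non-increasing in $A$, so the weighted CDF $\sum_i p_i\1\{V_i\le q\}$ is pointwise non-increasing in $A$. It follows that $q(A)$ is non-decreasing in $A$. On the event $A\ge\wstar_{\mathrm{te}}$, which holds almost surely, we therefore have $q(A)\ge q(\wstar_{\mathrm{te}})$, and so $\{V_{n+1}\le q(\wstar_{\mathrm{te}})\}\subseteq\{V_{n+1}\le q(A)\}$. Coverage inherits the bound from the previous step. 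The argument needs $A$ to depend only on $X_{n+1}$ and fixed quantities, not on the test label, which holds for both candidates.

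Finally, I check the two atoms. By Assumption~\ref{as:pos}, $\wstar_{\mathrm{te}}=1/G(m\mid X)\le1/G(\tstar\mid X)\le1/g_{\min}$. The first inequality holds because $m\le\tstar$ and $G(\cdot\mid x)$ is non-increasing. So both $1/G(\tstar\mid X)$ and $1/g_{\min}$ dominate the test weight.

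I expect the main obstacle to be the first step. The conditioning on $|\Ical|=k$ must not break the likelihood-ratio computation. The selection of calibration points is i.i.d.\ given the draws, so conditioning on $k$ only restricts which indices are kept. The ratio argument then goes through slot by slot, with $\Prob(\Delta=1)^k$ cancelling.
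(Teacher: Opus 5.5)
Your proposal is correct and follows the paper's proof. The floor at the test point's own atom $\wstar_{\mathrm{te}}$ is Lemma~\ref{lem:we}: your slot-assignment computation, which gives mass proportional to $\wstar$ on every slot, is the paper's \eqref{eq:assign}, and the paper resolves the apparent $(1,\dots,1,\wstar)$ mismatch the same way. A larger atom then only raises $\qhat$ pointwise, and both atoms dominate via $\wstar_{\mathrm{te}}=1/G(m\mid X)\le 1/G(\tstar\mid X)\le 1/g_{\min}$.

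One small overstatement: the nesting of coverage events is pointwise in the realisation, so the argument does not need $A$ to be free of the test label. It needs only $A\ge\wstar_{\mathrm{te}}$.
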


\begin{proof} In Section~\ref{sec:suppproofs}. \end{proof}

\begin{lemma}[Which weight the atom must dominate]\label{lem:atomhat}
Theorem~\ref{thm:main} and Corollary~\ref{cor:mondrian} hold for any atom $A$ with
$A\ge\what_{\mathrm{te}}$ almost surely, where $\what_{\mathrm{te}}$ is the
calibration weight function the procedure itself uses, evaluated at the test
subject. This is one rule and not two, Lemma~\ref{lem:oracle} being its case
$\what=\wstar$. Two atoms qualify, both computable and neither adding a
condition to the theorem.
\begin{enumerate}
\item[\rm(i)] The envelope $A_{\mathrm{env}}$ of \eqref{eq:atom}, whenever
$\Ghat(\cdot\mid X)$ is non-increasing in its time argument, which is a property
of the estimator, required here rather than assumed of $G$, and holding whenever
$\Ghat$ is one minus a non-decreasing cumulative estimate, as both estimators run
in this paper are. Given it, $m\le\tstar$ gives
$\max\{\Ghat(\tstar\mid X),g_{\min}\}\le\max\{\Ghat(m\mid X),g_{\min}\}
=1/\what_{\mathrm{te}}$. The same $\Ghat$ stands on both sides, which is what
makes this an identity rather than a statistical claim. An envelope built from a
different estimator than the one supplying the weights carries no such
guarantee, and the second and third rows of Table~\ref{tab:atomtarget} measure
what that costs.
\item[\rm(ii)] The constant $w_\infty=1/g_{\min}$, because the implementation
clips $\Ghat$ at $g_{\min}$ before inversion, so $\what\le1/g_{\min}$ identically
and no event is involved here either. The clip satisfies the conditioning event
of Theorem~\ref{thm:main} by construction.
\end{enumerate}
Since $A_{\mathrm{env}}\le w_\infty$ always, (i) is the better of the two, and
(ii) is the one admissible under the oracle weights as well.
\end{lemma}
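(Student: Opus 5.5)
The argument rests on one fact: adding atom mass can only raise the weighted quantile $\qhat$. Once the coverage floor is established with the test slot's own weight as the atom, any dominating atom inherits it. So the plan has two parts. First, re-read the proofs of Theorem~\ref{thm:main} (and Corollary~\ref{cor:mondrian}) with the atom set to $\what_{\mathrm{te}}$, the procedure's own weight function evaluated at the test subject. Second, verify the pointwise domination for the two proposed atoms.

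\textbf{Step 1: monotonicity in the atom.} Fix the calibration scores and weights. The normalised mass function
\[
q\mapsto \frac{\sum_{i\in\Ical}\what_i\1\{V_i\le q\}}{\sum_{j\in\Ical}\what_j+A}
\]
is pointwise non-increasing in $A$. Hence the infimum in \eqref{eq:qhat} is non-decreasing in $A$. It follows that $\{V_{n+1}\le\qhat(\what_{\mathrm{te}})\}\subseteq\{V_{n+1}\le\qhat(A)\}$ whenever $A\ge\what_{\mathrm{te}}$ almost surely. This gives the inequality $\Prob(Y\in S_{\qhat(A)}(X))\ge\Prob(Y\in S_{\qhat(\what_{\mathrm{te}})}(X))$.

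\textbf{Step 2: the floor at the procedure's own weight.} The key step, and the main obstacle, is to check that the proof of Theorem~\ref{thm:main} really uses $\what_{\mathrm{te}}$ on the test slot. The weighted-exchangeability argument of Lemma~\ref{lem:we} assigns the test slot its own weight. With estimated weights, the rank statement holds exactly for the likelihood ratio of the tilted law. The estimation error, measured in total variation between the $\what$-tilted and $\wstar$-tilted selected laws, is what produces $\Delta_w$. In that coupling, the test slot must carry $\what_{\mathrm{te}}=1/\Ghat(m\mid X)$, the same function that is applied to the calibration points, so that the quantile is the one of the $\Ghat$-weighted empirical measure with the test point included. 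I would trace this identification explicitly, including conditioning on $|\Ical|=k$ and on $\mathcal D_{\mathrm{tr}}$. Because the atom enters only through Step 1, the floor $1-\alpha-\Delta_w$ then transfers unchanged, and the same argument applies classwise for the Mondrian corollary. Setting $\Ghat=G$ makes $\Delta_w=0$ and recovers Lemma~\ref{lem:oracle}, which is why the two lemmas are one rule.

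\textbf{Step 3: the two atoms.} For (i), $m=\min(T,\tstar)\le\tstar$, and $\Ghat(\cdot\mid X)$ is non-increasing. Taking the maximum with $g_{\min}$ preserves order, so $\max\{\Ghat(\tstar\mid X),g_{\min}\}\le\max\{\Ghat(m\mid X),g_{\min}\}$, and inverting gives $A_{\mathrm{env}}\ge\what_{\mathrm{te}}$ surely. I would also note that $\Ghat=1-\widehat F$ with $\widehat F$ a non-decreasing cumulative estimate makes $\Ghat$ monotone, which covers both estimators used. For (ii), the clip gives $\what\le 1/g_{\min}$ identically. Finally, $A_{\mathrm{env}}\le 1/g_{\min}$ because the maximum is at least $g_{\min}$; by Step 1, (i) therefore gives a quantile no larger than (ii). Under oracle weights, (i) with $\Ghat$ replaced by $G$ needs monotonicity of $G$, which holds, but the paper's envelope uses $\Ghat$. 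So only (ii), or $1/G(\tstar\mid X)$, is guaranteed there, consistent with the lemma's closing remark.
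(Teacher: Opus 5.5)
Your proposal is correct and follows the paper's proof essentially step for step. The paper uses the Lei--Cand\`es tilted law $\mathrm d\widetilde P=\widehat r\,\mathrm dQ_{\mathrm{sel}}$ to show that the exact weighted-quantile step puts $\what_{\mathrm{te}}$, not $\wstar_{\mathrm{te}}$, on the test slot before the total-variation transport; it then uses the fact that atom mass only raises $\qhat$, and checks (i) by monotonicity of the same $\Ghat$ with $m\le\tstar$ and (ii) by the clip. Your Step~2 is stated as a plan rather than carried out, but the identification you describe is the one the paper makes, including why the test slot cannot carry $\wstar_{\mathrm{te}}$: the threshold is computed at $\what$.
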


\begin{proof} In Section~\ref{sec:suppproofs}. \end{proof}

\begin{table}[t]
\centering
\caption{Which weight the atom must dominate. Domination is checked on every label-complete test subject at censoring strength $c_s\in\{0,0.5,1,2,3\}$, covariate-free at $c_s=0$, for three atoms against two targets, the weight $\what_{\mathrm{te}}$ the procedure carries and the oracle $\wstar_{\mathrm{te}}$. Entries are violation rates. The shipped envelope is built from the same $\Ghat$ that supplies the weights. The other two atoms use the true $G$, so a violation there comes from the mismatch between estimators and not from estimation error.}
\label{tab:atomtarget}
\begin{tabular}{lrr}
\toprule
atom & violations of $\what_{\mathrm{te}}$ & violations of $\wstar_{\mathrm{te}}$ \\
\midrule
$A_{\mathrm{env}}$ from $\Ghat$, the shipped atom & 0 & 52,087 (18.1\%) \\
$A_{\mathrm{env}}$ from a different estimator, here the true $G$ & 147,927 (51.5\%) & 0 \\
$1/G(m\mid X)$, likewise not the weights' estimator & 196,927 (68.5\%) & 0 \\
\bottomrule
\end{tabular}
\end{table}

\noindent Which target is meant decides whether the proof applies, and
Table~\ref{tab:atomtarget} measures it. Domination is checked on every
label-complete test subject under covariate-dependent censoring, for three atoms
against both candidate targets. Every construction changes verdict with the target. The shipped envelope never
violates the weight the procedure carries but does violate the oracle weight, and
the two atoms whose estimator is not the one supplying the weights are exactly
the other way round. It is the first column that Theorem~\ref{thm:main}
requires. Those two rows are the failure the lemma rules out. They use the true
$G$, which carries no estimation error, so what breaks domination is the mismatch
and not a deficiency of the atom's own estimate. An analyst whose weights come
from a conditional estimate while the envelope is read off a marginal one stands
in the same relation and holds no oracle.

\begin{remark}[An operating condition at small $\alpha$]\label{rem:alpha}
The atom is not free at small $\alpha$. The weighted quantile targets
$(1-\alpha)(W+A)$ with $W=\sum_{i\in\Ical}\what_i$ and $A$ the atom, so the
threshold is $+\infty$, and the prediction set is the whole label space, exactly
when
\[
W \;<\; A\,\frac{1-\alpha}{\alpha}.
\]
This is the classical split-conformal degeneracy $n<1/\alpha-1$ with the atom's
mass in place of the sample size. Censoring enters only through $A$.
At \eqref{eq:atom} the boundary is small and set by the data, not by the analyst.
Section~\ref{sec:suppreading} gives the check to run before forming a set.
\end{remark}

\noindent Choose $g_{\min}$ before the data, from the design or from $\Ghat$ on
the training fold with a stated margin. Revising it upward after seeing $W$ is
the move to avoid, since $g_{\min}$ is what Assumption~\ref{as:pos} needs in
order to bound $G$ from below. Sections~\ref{sec:supptheory}
and~\ref{sec:suppreading} read these results further.

\subsection{Per-cause thresholds}
\label{sec:percause}

The same construction can be run once per
label. For each $\ell\in\{0,\dots,K\}$ let
$\Ical_\ell=\{i\in\Ical: Y_i=\ell\}$ be the label-complete calibration subjects
whose realised label is $\ell$, let $\qhat_\ell$ be \eqref{eq:qhat} computed on
$\Ical_\ell$ alone with the same atom $A$, and put
\begin{equation}\label{eq:mondrian}
S^{\mathrm{M}}(x)\;=\;\{\ell:\ s(x,\ell)\le \qhat_\ell\}.
\end{equation}
This is the class-conditional, or Mondrian, form of the procedure, due in the
uncensored setting to Vovk et al. Its guarantee is per class rather than pooled,
and Corollary~\ref{cor:mondrian} states what censoring does to it.

\begin{corollary}[Per-class coverage]\label{cor:mondrian}
Run the per-class procedure \eqref{eq:mondrian} with estimated weights and the
atom $A_{\mathrm{env}}$, with $\Ghat$ floored at $g_{\min}$ and non-increasing as in
Theorem~\ref{thm:main}. Under Assumptions~\ref{as:indep} to~\ref{as:split},
for every label $\ell$ with $\Prob(Y=\ell)>0$, conditionally on
$\mathcal D_{\mathrm{tr}}$ and on the event $\{\Ghat\ge g_{\min}\}$,
\[
\Prob\big(Y\in S^{\mathrm{M}}(X)\,\big|\,Y=\ell,\mathcal D_{\mathrm{tr}}\big)\;\ge\;1-\alpha-\Delta_w^{(\ell)},
\qquad
\Delta_w^{(\ell)}\;=\;\frac{1}{2\kappa_\ell}\,\E\Big[\Big|\frac{G}{\Ghat}-\kappa_\ell\Big|
\,\Big|\,Y=\ell\Big],
\]
with $\kappa_\ell=\E[G/\Ghat\mid Y=\ell]$. The supremum-norm bound of Theorem~\ref{thm:main}
holds with its expectation taken over $X$ given $Y=\ell$, since
$\kappa_\ell\ge\E[G\mid Y=\ell]\ge g_{\min}$.
\end{corollary}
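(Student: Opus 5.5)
The plan is to reduce the per-class statement to Theorem~\ref{thm:main} applied to the population conditioned on $Y=\ell$. Fix $\ell$ with $\Prob(Y=\ell)>0$ and condition on $\mathcal D_{\mathrm{tr}}$ throughout. The law of $(X,T,\varepsilon,C)$ given $Y=\ell$ is again a law on which Assumptions~\ref{as:indep} to~\ref{as:split} hold. For independence, note that $Y$ is a function of $(T,\varepsilon)$, so $C\perp\!\!\!\perp(T,\varepsilon)\mid X$ persists after conditioning on $\{Y=\ell\}$. Positivity bounds $G(\tstar\mid X)$ almost surely, so it survives restriction to a set of positive probability. The calibration subjects with $Y_i=\ell$ are i.i.d.\ draws from the conditional law, independent of the test subject and of $\mathcal D_{\mathrm{tr}}$.

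On $\{Y=\ell\}$ the per-class set contains $Y$ exactly when $s(X,\ell)\le\qhat_\ell$, that is, when $V\le\qhat_\ell$. This is \eqref{eq:cov-as-score} restricted to one class. Hence the conditional coverage equals $\Prob(V_{n+1}\le\qhat_\ell\mid Y_{n+1}=\ell)$, where $\qhat_\ell$ is \eqref{eq:qhat} computed on $\Ical_\ell$ with the same weights $\what=1/\Ghat$ and the same atom $A_{\mathrm{env}}$. Lemma~\ref{lem:atomhat}(i) gives $A_{\mathrm{env}}\ge\what_{\mathrm{te}}$ pointwise, and this holds regardless of the label. I would therefore rerun the proof of Theorem~\ref{thm:main} verbatim in the conditional world.

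The step I expect to need care is the one where conditioning is applied in the wrong order. The test point is drawn from the conditional law given $Y=\ell$, but the calibration selection $\Ical_\ell$ conditions on the realised label $Y_i=\ell$ together with $\Delta_i=1$. I would check that Lemma~\ref{lem:we} goes through with the population replaced by the class-$\ell$ subpopulation. Given $Y=\ell$, the identity $\E[\Delta\wstar h\mid Y=\ell]=\E[h\mid Y=\ell]$ still follows from $\E[\Delta\mid X,T,\varepsilon]=G(m\mid X)$, because $\{Y=\ell\}$ is $(T,\varepsilon)$-measurable. So the density of a label-complete class-$\ell$ tuple, relative to the class-$\ell$ population, is proportional to $G(m\mid X)$. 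This is the factorisation used in Lemma~\ref{lem:we}. The random size $|\Ical_\ell|$ is then handled exactly as $|\Ical|$ was: condition on it and on the multiset.

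The total-variation step of Theorem~\ref{thm:main} then produces the penalty with all expectations taken under the conditional law. The ratio of the oracle to the estimated tilt is $G/\Ghat$, normalised by $\kappa_\ell=\E[G/\Ghat\mid Y=\ell]$, and this yields $\Delta_w^{(\ell)}$ as stated. For the supremum-norm bound, I would repeat the theorem's last step. Since $\Ghat\le1$, we have $\kappa_\ell\ge\E[G\mid Y=\ell]\ge g_{\min}$ by positivity. Next, $|G/\Ghat-\kappa_\ell|\le|G-\Ghat|/\Ghat+|1-\kappa_\ell|$ together with $\Ghat\ge g_{\min}$ bounds $\Delta_w^{(\ell)}$ by $g_{\min}^{-2}\E[\sup_{u\le\tstar}|\Ghat-G|\mid Y=\ell]$. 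Finally, $m\le\tstar$ lets the supremum absorb the dependence on $m$.
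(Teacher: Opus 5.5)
Your proposal is correct and follows the paper's own route. You condition on $Y=\ell$ and note that the class-$\ell$ calibration subjects and the test subject are i.i.d.\ draws from $\mu_\ell=\mu(\cdot\mid Y=\ell)$, with selection still $\mathrm{Bernoulli}(G(m\mid X))$ and positivity intact because $\{Y=\ell\}$ is $(T,\varepsilon)$-measurable; you then rerun Lemma~\ref{lem:we} and the total-variation argument of Theorem~\ref{thm:main} with every expectation under $\mu_\ell$, and your explicit checks, namely the reduction of coverage on $\{Y=\ell\}$ to $\{V\le\qhat_\ell\}$ and the repeated supremum-norm step, correctly fill in details the paper leaves implicit.
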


\begin{proof} In Section~\ref{sec:suppproofs}. \end{proof}

\noindent The conditioning is on the \emph{realised} label, not known at
prediction time, so the guarantee is about the subgroup of subjects who turn out
to have label $\ell$ and not a selection rule. The per-class thresholds inherit the operating condition of Remark~\ref{rem:alpha} one group at a time, so
a class with $W_\ell<A(1-\alpha)/\alpha$ has $\qhat_\ell=+\infty$ and satisfies
the corollary by entering every set.

A pooled coverage figure can sit at nominal while a rare cause is badly missed.
Table~\ref{tab:mondrianbody} makes one cause rare, at $0.15$ of events, and runs on
oracle calibration weights at the uniform constant atom, so it isolates the per-cause effect from the
censoring model. Pooled coverage reads $0.902$ while the decomposition under
that single threshold is $0.920$ for the common cause, $0.994$ for event-free and
$0.118$ for the rare one. Per-cause thresholds move every class to nominal, the
common cause to $0.902$, the rare one to $0.914$ and event-free to $0.903$, at a
mean set size of $2.42$ against $1.84$. Below the boundary of
Remark~\ref{rem:alpha} a class threshold is infinite and its per-cause coverage
is the whole label space, so $W_\ell$ has to be reported with it.

\begin{table}[t]
\centering
\caption{The registry, at the horizon the body discusses. SEER breast competing risks at $\tstar=60$ months, $n=238,057$, $\alpha=0.1$, 5 random splits. Coverage is decomposed by the subject's realised label, first under the single pooled threshold and then under the per-class thresholds of Corollary~\ref{cor:mondrian}. The pooled threshold attains 0.900 overall at a mean set size of 1.08 out of 3 while covering other-cause death 0.262 of the time. Per-class thresholds return every class to nominal at a mean set size of 1.67. $W_\ell$ is the label-complete calibration mass the class carries, which Remark~\ref{rem:alpha} requires to exceed 9.2 at the realised atom 1.0215 for the class threshold to be finite, and every class here clears it. Coverage is the inverse-probability-of-censoring Kaplan--Meier estimate rather than an oracle quantity, since the censoring law is unknown on real data. Section~\ref{sec:realdata} gives the cohort, the two filters, the recodings and the second horizon.}
\label{tab:seerbody}
\begin{tabular}{lrrr}
\toprule
realised label & pooled threshold & per-class thresholds & $W_\ell$ \\
\midrule
cancer death & 0.576 & 0.897 & 8,995 \\
other-cause death & 0.262 & 0.900 & 4,539 \\
event-free & 0.988 & 0.900 & 65,829 \\
\bottomrule
\end{tabular}
\end{table}

The same gap appears on a SEER breast cancer cohort of $238{,}057$ subjects with
two causes of death, in Table~\ref{tab:seerbody} and in full in
Section~\ref{sec:realdata}.
At $\tstar=60$ months the reduction is not under test, only $1.9\%$ of subjects
being censored before the horizon, so what the cohort shows is class imbalance
rather than censoring. Other-cause death is $5.7\%$ of label-complete test
subjects and a single threshold covers it $0.262$ of the time at a pooled
$0.900$, while per-class thresholds give $0.897$, $0.900$ and $0.900$ at a mean
set size of $1.67$ against $1.08$. Membership, a sole registry primary, is
settled during follow-up and removes $31\%$ of the breast records, so the
guarantee refers to that subpopulation and not to a patient at diagnosis. No
guarantee here applies to it numerically either, since censoring depends on the year of
diagnosis and the penalty of Theorem~\ref{thm:main} is not estimable from the
data at hand.

\section{Remarks on the theory}\label{sec:supptheory}

\paragraph{Reading Lemma~\ref{lem:we}'s weight functions.} The weight functions
$(1,\dots,1,\wstar)$ are the hypothesis of Tibshirani et al.'s Definition~1, not
the assignment probabilities they produce. Under their Lemma~3 the case
$(1,\dots,1,w)$ collapses to mass proportional to $w$ on \emph{every} slot, which
is what \eqref{eq:assign} states. The two displays sit one sentence apart in that
lemma and are not two claims about the same object.

\medskip
\noindent\textbf{Ties.} Scores can tie, for instance when the fitted
probabilities are constant on groups of covariates. The implementation does not randomise them. It sorts stably and takes the first score whose cumulative weight reaches the target, so
tied calibration scores all enter at that point. This is not an assumption of
anything proved below. The quantile in \eqref{eq:qhat} is an infimum, so an
inclusive convention can only enlarge the set, and every statement in this paper
is a coverage floor. Randomisation would be needed for a matching upper bound,
which is not claimed.

\begin{remark}[Where the weight sits]\label{rem:testslot}
Two layers have to be kept apart. At the density layer the assignment computation
finds the law of the observed pool proportional to $\wstar$ on the test slot and
to $1$ on each calibration slot, and at the procedure layer \eqref{eq:qhat} is
self-normalised, so a common weight on every slot cancels and amounts to
asserting plain exchangeability, which holds within $\Ical$ and fails against a
population test subject whenever $\wstar$ is not constant. That is the precise
sense in which the unweighted procedure on $\Ical$ is a guarantee about a
different population rather than no guarantee at all. The sign of its bias
depends on the joint law of $(V,\wstar)$, which is what
Section~\ref{sec:simulation} varies. The outcome-dependence of $\wstar$ is
immaterial here, since Definition~1 and the weighted-quantile lemma ask for
absolute continuity, not $X$-measurability.
\end{remark}

\begin{remark}[The atom need not be constant]\label{rem:atomtight}
Lemma~\ref{lem:oracle} is stated for any dominating atom, so the sharp choice
$1/G(\tstar\mid X)$ is covered by it directly, and is no larger than
$1/g_{\min}$. With that atom the mass the
calibration fold needs is near the classical $1/\alpha-1$ rather than
$g_{\min}^{-1}$ times it, so the sharp atom is the better one at known $G$.

No plug-in version of that atom is admissible for this lemma, since under an
estimated censoring distribution $1/\Ghat(\tstar\mid X)$ does not dominate
$1/G(m\mid X)$ (Table~\ref{tab:atomtarget}). Under the theorem's estimated
weights the plug-in envelope is the right atom (Lemma~\ref{lem:atomhat}).
\end{remark}

\section{The full simulation study}\label{sec:suppsim}

This section gives the rest of the synthetic study.

\paragraph{Design constants.} Pools are $2500$ subjects split
in half unless stated, the estimator is a competing-risks forest of $100$ trees in
these designs and $200$ on the registry, figures average $20$ replications
except Table~\ref{tab:rate}, which averages $24$, Tables~\ref{tab:dialsbody}
and~\ref{tab:naturalmisfit}, which average $40$, and the tables whose captions
give their own count, and the censoring rate is
$\lambda_c=\lambda\,c/(1-c)$ for a nominal censoring share $c$. The cause-1
hazard is $\lambda_1e^{x_0}$ divided by the mean of $e^{x_0}$ over the draw, so
subjects within one draw are exchangeable rather than independent, through a
common factor whose fluctuation is of order $n^{-1/2}$. The true score of
Section~\ref{sec:correctfit} uses the same factor. The calibration pool and the
test draw are separate draws with their own factors, so they do not share one law
exactly, and in the Weibull and non-proportional families the censoring rate is
set from the median latent event time of the draw. These designs are therefore
finite-pool illustrations, not exact instances of Assumptions~\ref{as:indep}
to~\ref{as:split}. Table~\ref{tab:iid} reruns the true-score cells of
Table~\ref{tab:oraclescore} with a fixed normaliser, where the draws are
independent and the censoring rate does not depend on them.

Because censoring is covariate-free in these designs, $G(\tstar)$ is fixed by
the design, and Assumption~\ref{as:pos} permits any $g_{\min}$ at or below it,
which is $0.841$, $0.630$, $0.354$ and $0.125$ at censoring $0.2$, $0.4$, $0.6$ and
$0.75$. The designs run at $g_{\min}=0.05$, between $2.5$ and $17$ times looser.
That constant is an analyst's choice. At the atom \eqref{eq:atom} it enters only
through the clip on the calibration weights, which does not bind where those
weights are the oracle ones. Where censoring depends on the covariates, as in
Table~\ref{tab:deltaw}, $G(\tstar\mid X)$ is not a constant.

\paragraph{Two evaluators.} Inside the assumptions coverage is computed against
the generating process's closed-form censoring survival, so the evaluation shares
no estimate with the procedure. Where positivity fails an oracle-weighted
estimate is itself biased, so coverage is computed directly against the latent
label. The one place that disagreement changed a claim is reported below.
\paragraph{Estimating the weights.} Table~\ref{tab:deltaw} drives the
censoring model away from the truth by making the censoring hazard depend on a
covariate with strength $c_s$, and reports the penalty $\Delta_w$ alongside the
floor it implies. The penalty runs from $0.0035$ at $c_s=0$ to $0.2371$ at
$c_s=3$. The unit-atom rows carry no floor, as that table's caption says. Where
there is one, coverage is nowhere below it by more than a standard error. The
closest reading is at $c_s=0$, where $\Delta_w$ is $0.0035$ and the marginal
Kaplan--Meier arm reads $0.896$ (SE $0.003$) against a floor of $0.897$, which is
the theorem holding with almost no room. Everywhere else the floor is loose. The floor at $c_s=3$ is $0.663$ against a realised
$0.868$ for the marginal Kaplan--Meier arm under the capped oracle-weight evaluator, or $0.830$ under
the direct latent-label evaluator of Table~\ref{tab:deltaw}, which scores the
positivity-violating subjects the cap hides. The conclusion holds under either, and the size of the gap does not. The latent-label evaluator is the
stricter of the two.

The rows divide at the point where positivity fails. The population mass with
$G(\tstar\mid X)<g_{\min}$ is estimated from $200{,}000$ draws of the
data-generating process, and it runs $0$, $1\times10^{-5}$, $0.019$, $0.127$ and
$0.200$ across the grid, which is $0$, $2$, $3{,}849$, $25{,}482$ and $40{,}073$
of those draws. Assumption~\ref{as:pos} is almost-sure, so anything above zero is
a violation, but the second entry is two draws and no estimate from this many can
tell it from zero. We therefore treat $c_s\ge1$, where the estimate is
more than five of its own standard errors from zero, as outside the assumptions.
At $c_s=0.5$ the estimate cannot be told from zero, which does not establish the
almost-sure bound, so we report $c_s=0.5$ next to $c_s=0$ without claiming it
satisfies Assumption~\ref{as:pos}. Every arm is valid at $c_s=1$, a property of
that design rather than of the assumption.

\paragraph{Outside positivity.} At $c_s=3$ the conditional-forest arm reads
$0.910$ under the capped evaluator but $0.874$ against the latent label, and the
unweighted arm moves the same way, $0.879$ to $0.841$. The capped evaluator cannot see the fifth of the population that defines
the failure. Outside the assumption the weighting rescues nothing.

\paragraph{The rate.} Under the joint-independence hypothesis of
Corollary~\ref{cor:rate} the penalty decays at a fitted log-log slope of
$-0.497$ over $n$ from $500$ to $32{,}000$ (Table~\ref{tab:rate}). Under the
conditional hypothesis, where a marginal estimator does not converge to
$G(\cdot\mid X)$, it is flat at $0.107$ over the same range. The corollary needs
the branch it names.
\paragraph{Where the crossing is, and how well we locate it.}
Table~\ref{tab:dialsbody} runs the grid at $40$ replications and
Table~\ref{tab:dialsdense} reruns it at $250$ on a six-point $b$ grid, against a
bar of $\lvert t\rvert\ge4$ fixed before the run. Table~\ref{tab:contour} locates
the crossing itself. Resampling the same replications, the interpolated $b^\ast$
runs from $0.319$ to $0.557$. How that
compares with the misspecification an analyst commits without a dial has to be
asked inside one family at a time, because that mistake is not the same size in
all three. The smallest crossing here, $0.319$ under non-proportional hazards at
censoring $0.20$, falls in the family where that mistake is largest, and that
pairing is not separated from zero. The bootstrap
interval at exponential hazards and censoring $0.20$ is $[0.460, 0.575]$, wider
than the grid spacing of $0.1$, which is the condition fixed beforehand for
saying that $250$ replications do not resolve a location. And the contour rises
with the censoring rate in the Weibull and non-proportional families,
$+0.134$ with $[+0.088, +0.178]$ and $+0.185$ with $[+0.145, +0.233]$, but not
in the exponential one, where the paired rise is $+0.038$ with
$[-0.016, +0.093]$.

\paragraph{Further sweeps.} \Supmat\ reports the remaining
sweeps: the ceiling under oracle weights, a set read directly off the fitted model,
the searches for an unweighted failure at a fitted score across censoring rates,
the event-free share and an uninformative score, and a violation of independent
censoring by the latent cause.

\paragraph{The guarantee does not depend on the score.} Validity is
distribution-free in the nonconformity score, and
Table~\ref{tab:scoremisspec} confirms it across three event-time families crossed
with a misspecification dial. The paper's score holds at $0.898$ to $0.902$ in
all fifteen cells even where the fitted model has never seen the covariate the
data generating process uses. Section~\ref{sec:suppextras} gives the design.

\section{A registry cohort}
\label{sec:realdata}

At $\tstar=60$ months only $1.94\%$ of subjects are censored before the horizon,
the censoring weights are flat, and the weighted and unweighted procedures agree
to $0.002$ in coverage, so at that horizon the cohort does not test the
weighting. Section~\ref{sec:percause} uses it for the per-cause gap instead. At
$\tstar=107$ months censoring is administrative and depends on the year of
diagnosis, so a marginal Kaplan--Meier is a misspecified censoring model and a
year-stratified one is available. That makes the cohort a real-data counterpart
of the comparison in Section~\ref{sec:wrongG}. No guarantee of Section~\ref{sec:ceiling} applies here
numerically, since the penalty of Theorem~\ref{thm:main} is a population quantity
in the unknown $G$, so the figures below are measurements.

\medskip
\noindent\textbf{Marginal and year-stratified censoring models at $\tstar=107$.}
The year-stratified model fits one censoring Kaplan--Meier per diagnosis year on
the training fold, with five splits, $\alpha=0.1$ and the same forest and
three-way split as the other registry runs (\texttt{seer-conditional-107.json}).
The two estimates disagree by $0.72$ in sup-norm over $[0,\tstar]$ in the worst
stratum and by $0.27$ averaged over the accrual-year distribution. The largest
calibration weight rises from $1.26$ under the marginal model to $13.40$ under
the stratified one, $1.31\%$ of calibration weights are at or above $5$, and no
weight sits at the $g_{\min}$ clip.

Pooled coverage is $0.9126$ unweighted, $0.9004$ under the marginal weights and
$0.8992$ under the stratified ones, at mean set sizes of $1.35$, $1.28$ and
$1.28$. The three arms share splits, forest and scores, so the gap of $0.0134$
between the first and the last decomposes. Weighting at all accounts for $0.0123$
of it, and correcting the censoring model accounts for $0.0011$, about one
standard error of either arm over the five splits. Scoring all three arms under a
year-stratified coverage estimator instead of the marginal one moves pooled
coverage by at most $0.0006$. At $\tstar=106$, where the largest weight is $7.0$,
the correction leg is $0.0004$. Unlike
Section~\ref{sec:wrongG}, the marginal model is the one closer to nominal here,
and the two models differ by about a thousandth, because their disagreement is
concentrated in a class whose labels are nearly homogeneous.
The censoring estimator removes a failure from the censoring risk set before a
censoring recorded in the same month, as the convention $T\le C$ requires.

Positivity at $\tstar=107$ holds by one month. The minimum accrual-year stratum
censoring survival is $0.077$ against $g_{\min}=0.05$, the smallest $\Ghat$ any
evaluated subject carries is $0.0747$, and at $\tstar=120$ on the full cohort no
follow-up in the 2014 and 2015 strata reaches the horizon, so their censoring
survival is $0$. A per-year minimum averages within the year, so it is
a coarser check than Assumption~\ref{as:pos} requires.

\medskip
\noindent\textbf{Cohort construction.} The cohort is SEER Research Data, 17
Registries, November 2025 submission (2000--2023), breast cancer restricted to
patients for whom it is their sole registry primary and to diagnosis years
2010--2015. The sole-primary restriction makes the two causes well defined and
keeps $991{,}122$ of $1{,}442{,}083$ breast records. Whether a second primary
appears is settled during follow-up, so the analysed population is the patients
whose breast cancer remains their only primary, not patients at diagnosis, and
the restriction is a named mechanism against Assumption~\ref{as:indep}. Two
further filters depend on the outcome. Deaths of unknown cause ($1{,}491$,
$0.60\%$) and records whose survival time is not computed from complete dates
($7{,}357$, $2.98\%$) are removed. The analysed cohort has $238{,}057$ subjects,
of whom $37{,}735$ died of the cancer, $31{,}330$ of other causes and $168{,}992$
were alive or censored at last follow-up. Recoding the removed records in four
ways, from reinstating them under their recorded status to recoding every death
as censored, moves the per-cause coverages under the pooled threshold at
$\tstar=60$ within $[0.557, 0.588]$ for cancer death and $[0.247, 0.275]$ for
other-cause death, and pooled coverage stays in $[0.8996, 0.9001]$. All $4{,}607$
subjects censored before $60$ months are recorded alive, so censoring at that
horizon is loss to follow-up, and Assumption~\ref{as:indep} there is untested.

\subsection{Public cohorts}
\label{sec:public}

Three competing-risks cohorts ship with the R package \texttt{survival}. The
rows of Table~\ref{tab:public} run the same code at the same $\alpha$ over 200
random splits, at horizons fixed from each cohort's censoring distribution
before any coverage was computed.

\begin{table}[t]
\centering
\caption{Three competing-risks cohorts shipped with the R package \texttt{survival}, at $\alpha=0.1$ over 200 random splits. Horizons are the 25th and 50th percentiles of each cohort's observed censoring times, fixed before any coverage was computed. The naive arm drops the weights and the atom. Coverage is the inverse-probability-of-censoring Kaplan--Meier estimate, which shares the marginal censoring model the weighted arm calibrates on, so for that arm it is not a test. Standard errors are over splits of one fixed sample. The last three columns give the least covered cause under the pooled threshold and under per-class thresholds.}
\label{tab:public}
\resizebox{\ifdim\width>\textwidth\textwidth\else\width\fi}{!}{%
\begin{tabular}{lrrrrrrlrr}
\toprule
& & & \multicolumn{2}{c}{coverage} & \multicolumn{2}{c}{set size} & \multicolumn{3}{c}{least covered cause} \\
\cmidrule(lr){4-5}\cmidrule(lr){6-7}\cmidrule(lr){8-10}
cohort & $n$ & $\tstar$ & weighted & naive & weighted & naive & cause & pooled & per-class \\
\midrule
PBC & 399 & 1,441\,d & 0.907 & 0.914 & 1.27 & 1.30 & death & 0.796 & 0.925 \\
PBC & 399 & 2,169\,d & 0.909 & 0.921 & 1.53 & 1.59 & death & 0.876 & 0.929 \\
MGUS & 1,338 & 90\,mo & 0.902 & 0.905 & 1.76 & 1.78 & progression & 0.017 & 0.934 \\
MGUS & 1,338 & 134\,mo & 0.902 & 0.903 & 1.78 & 1.79 & progression & 0.073 & 0.935 \\
FLC & 7,871 & 4,033\,d & 0.900 & 0.912 & 1.80 & 1.93 & neoplastic & 0.334 & 0.903 \\
FLC & 7,871 & 4,607\,d & 0.901 & 0.929 & 2.00 & 2.34 & neoplastic & 0.408 & 0.905 \\
\bottomrule
\end{tabular}
}
\end{table}

Coverage here is estimated under the marginal censoring model the weighted
procedure calibrates on, so the estimates do not establish population coverage.
Under this estimator the unweighted procedure is estimated above nominal in every
cell, by $0.003$ to $0.029$, and the weighted one returns smaller sets in all 6. That model is wrong here.
Censoring depends on the year of diagnosis in mgus2 and of sampling in FLC.
Within quartiles of that year the censoring survival at the later horizon of
mgus2 runs $0.993$, $1.000$, $0.915$ and $0.043$, and of FLC $0.944$, $1.000$,
$0.856$ and $0.000$, so the latest quarter of accrual lies outside
Assumption~\ref{as:pos} in both. PBC, stratified by albumin, stays at or above
$0.366$.

Applying the shrinkage $b$ of Section~\ref{sec:dials} moves the unweighted
procedure and, under this evaluator, not the weighted one. At $b=0.6$ mgus2 at
$134$ months reads $0.8913$, $5.0$ standard errors below nominal, against
$0.9006$ weighted, and PBC at $2{,}169$ days reads $0.8966$ against $0.9034$,
$1.0$ standard error below nominal with $399$ subjects. The paired contrast
between the arms changes sign under the shrinkage, from $+0.0119$ at $b=0$ to
$-0.0067$ at $b=0.6$ on PBC and from $+0.0016$ to $-0.0093$ on mgus2. At $b=0.6$
the mgus2 model predicts an event-free rate of $0.121$ where the complete cases
show $0.302$, so this is a stress test.

Dang et al.\ (2026) report $14.3\%$ coverage on PBC under complete-case conformal
calibration. Their calibration keeps only subjects with an observed event, so the
subjects event-free at the horizon, the group Section~\ref{sec:dials} finds
decisive, are absent from it rather than under-represented, and they predict an
interval on an event time rather than a set of causes.

\section{Deferred proofs}\label{sec:suppproofs}

\paragraph{Proof of the unbiasedness identity of Section~\ref{sec:ceiling}.}
\begin{proof}
Both $Y$ and $m=\min(T,\tstar)$ are functions of $(T,\varepsilon)$. Conditioning
on $(T,\varepsilon,X)$ and using Assumption~\ref{as:indep},
$\E[\Delta\mid T,\varepsilon,X]=\Prob(C\ge m\mid X)=G(m\mid X)=1/\wstar$, so
$\E[\Delta\wstar h\mid T,\varepsilon,X]=h$. Take expectations.
\end{proof}

\paragraph{Proof of Lemma~\ref{lem:we}.}
The assignment claim is that, writing $\{u_0,\dots,u_k\}$ for the conditioned multiset,
\begin{equation}\label{eq:assign}
\Prob\big(\text{test slot}=u_l \,\big|\, \text{multiset},\,|\Ical|=k\big)
\;=\; \frac{\wstar(u_l)}{\sum_{m=0}^{k}\wstar(u_m)} ,
\end{equation}
which is the $(1,\dots,1,\wstar)$ structure by Tibshirani et al.'s Lemma~3.
Write each subject's latent tuple $\zeta=(X,T,\varepsilon)$ with law $\mu$. The
score $V=1-\pi_Y(X)$ and the weight $\wstar=1/G(m\mid X)$ are fixed measurable
functions of $\zeta$, and both factor through the observed reduction
$(X,m,Y)$, so the argument can be read on that pushforward even though
$\varepsilon$ is not observed when $T>\tstar$. Under Assumption~\ref{as:indep},
$\Delta\mid\zeta\sim\mathrm{Bernoulli}(\rho(\zeta))$ independently across
subjects, with selection probability
\[
\rho(\zeta)\;\coloneqq\;\Prob(C\ge m\mid X)\;=\;G(m\mid X)\;=\;1/\wstar(\zeta).
\]
Put $q=\E[\rho(\zeta)]=\Prob(\Delta=1)$.

\emph{Assignment ratio.} Fix $|\Ical|=k$ and the observed multiset, and compare
the labelling that puts $u_0$ in the test slot, with density $f_A$, against the
one that puts $u_1$ there, with density $f_B$. A test slot contributes the plain
density $\mu(u)$. Each retained calibration slot contributes $\rho(u)\mu(u)$. The
$n-k$ censored calibration slots integrate over their unobserved values to the
common factor $(1-q)^{n-k}$, and the slot and order choices are common to both
labellings. Hence
\[
f_A\propto\mu(u_0)\prod_{j=1}^k \rho(u_j)\mu(u_j),\qquad
f_B\propto\mu(u_1)\,\rho(u_0)\mu(u_0)\prod_{j=2}^k \rho(u_j)\mu(u_j),
\]
so $f_A/f_B=\rho(u_1)/\rho(u_0)=\wstar(u_0)/\wstar(u_1)$, which is
\eqref{eq:assign} and the $(1,\dots,1,\wstar)$ structure.

\emph{The random size of $\Ical$.} Under independent
$\mathrm{Bernoulli}(\rho(\zeta_i))$ selection over i.i.d.\ subjects, the size and
the retained values factorise. $|\Ical|\sim\mathrm{Binomial}(n,q)$, and given any
selection pattern with $k$ ones the retained tuples are i.i.d.\ from
$Q_{\mathrm{sel}}\propto \rho\,\mathrm d\mu$, the same law for every pattern,
while each censored subject contributes the value-free constant $1-q$. So
\eqref{eq:assign} holds for every $k$, the case $k=0$ giving $\qhat=+\infty$, and
marginalising over $|\Ical|$ preserves the coverage statement below.

\emph{Coverage.} The structure just established is the weighted split-conformal
instance of Tibshirani et al.\ with test-to-calibration Radon--Nikodym derivative
$\mathrm dP/\mathrm dQ_{\mathrm{sel}}=q\,\wstar\propto\wstar$, and absolute
continuity $P\ll Q_{\mathrm{sel}}$ holds by Assumption~\ref{as:pos} since
$\rho\ge g_{\min}>0$. Their weighted-quantile lemma, with the test point's own
atom at $+\infty$, gives
$\Prob(V_{n+1}\le\qhat\mid\text{multiset})\ge1-\alpha$, and marginalising
finishes. The bound is one-sided and survives ties, because the quantile is an
infimum and ties only inflate coverage.

\paragraph{Proof of Lemma~\ref{lem:oracle}.}
With the oracle test atom $\wstar_{\mathrm{te}}$ the floor is exactly
Lemma~\ref{lem:we}. Replacing that atom by any $A\ge\wstar_{\mathrm{te}}$ only
increases the atom mass, hence weakly increases $\qhat$, since the infimum in
\eqref{eq:qhat} is over a target reached later, and enlarges $S_{\qhat}$. The
comparison is pointwise in the realisation, so the coverage events are nested and
their probabilities ordered, coverage is monotone in $\qhat$ and the floor is
preserved. Nothing here uses $A$ constant. For $A=1/g_{\min}$ the domination is Assumption~\ref{as:pos}, and for $A=1/G(\tstar\mid X)$ it is that
$m=\min(T,\tstar)\le\tstar$ and $G(\cdot\mid X)$ is non-increasing.

\paragraph{Proof of Theorem~\ref{thm:main}.}
\begin{proof}
\emph{The floor.} Lemma~\ref{lem:we} gives, conditionally on $|\Ical|=k$, a
weighted exchangeable vector with the weight on the test slot. Running the
procedure at $\what$ instead of $\wstar$ replaces the oracle Radon--Nikodym
derivative by an estimated one, which is the setting of the total-variation
coupling argument of Lei and Cand\`es (2021, Proposition~1), applied to the
fixed-length vector for each $k$.

That proposition requires the ratio normalised to mean $1$ under the calibration
law. Here the calibration law is $Q_{\mathrm{sel}}$ with
$\mathrm dQ_{\mathrm{sel}}=(\rho/q)\,\mathrm d\mu$, so the true normalised ratio
is $r^\star=\mathrm dP/\mathrm dQ_{\mathrm{sel}}=q\wstar$, which indeed satisfies
$\E_{Q_{\mathrm{sel}}}[q\wstar]=\int\wstar\rho\,\mathrm d\mu=1$. The estimate has
to be normalised the same way. Procedure \eqref{eq:qhat} is self-normalised, so
what it realises is not $q\what$ but
\[
\widehat r\;=\;\frac{\what}{\E_{Q_{\mathrm{sel}}}[\what]}\;=\;\frac{q\what}{\kappa},
\qquad
\kappa\;=\;q\,\E_{Q_{\mathrm{sel}}}[\what]\;=\;\E_\mu\Big[\frac{G}{\Ghat}\Big].
\]
The constant $\kappa$ is a deterministic function of the trained $\Ghat$ and not a
sampling fluctuation of order $|\Ical|^{-1/2}$, which is why it cannot be absorbed
into $q$. Proposition~1 then gives the floor with
\[
\Delta_w=\tfrac12\E_{Q_{\mathrm{sel}}}\big|\widehat r-r^\star\big|
=\tfrac12\E_{Q_{\mathrm{sel}}}\big|q\what/\kappa-q\wstar\big|
=\tfrac{1}{2\kappa}\E_\mu\big|G/\Ghat-\kappa\big|,
\]
where the last equality uses
$\E_{Q_{\mathrm{sel}}}[h]=(1/q)\E_\mu[(1/\wstar)h]=(1/q)\E_\mu[G\,h]$. No
discrete-rank term appears, that slack being upper-sided only, and marginalising
over $|\Ical|$ preserves the statement as in Lemma~\ref{lem:we}. That proposition's
exact step is taken at the tilted law under which $\widehat r$ is correct, so the
object it places on the test slot is $\widehat r$ and the atom must dominate
$\what_{\mathrm{te}}$ rather than $\wstar_{\mathrm{te}}$. Lemma~\ref{lem:atomhat} records this, and the atom \eqref{eq:atom} the theorem is stated at qualifies by
its part (i), the constant $w_\infty$ by its part (ii).

\emph{From weight error to censoring error.} Centre and split,
\[
\E_\mu|G/\Ghat-\kappa|\;\le\;\E_\mu|G/\Ghat-1|+|1-\kappa|\;\le\;2\,\E_\mu|G/\Ghat-1|,
\]
using $|1-\kappa|=|\E_\mu[1-G/\Ghat]|\le\E_\mu|1-G/\Ghat|$. On the conditioning event
and for $m\le\tstar$,
$|G/\Ghat-1|=|G-\Ghat|/\Ghat\le g_{\min}^{-1}\lVert\Ghat-G\rVert_{\infty,\tstar}$,
so $\Delta_w\le \kappa^{-1}g_{\min}^{-1}\E\lVert\Ghat-G\rVert_{\infty,\tstar}$. Finally
$\Ghat\le1$, as a survival estimator, together with $G\ge g_{\min}$ from
Assumption~\ref{as:pos}, gives $\kappa\ge\E_\mu[G]\ge g_{\min}$ and the stated bound.
\end{proof}

\paragraph{Proof of Corollary~\ref{cor:rate}.}
\begin{proof}
Under joint independence $G(u\mid X)=G(u)$ and the marginal Kaplan--Meier
estimator does not depend on $X$, so $\sup_{u\le\tstar}|\Ghat(u)-G(u)|$ is a
constant given $\mathcal D_{\mathrm{tr}}$ and the expectation in the bound of
Theorem~\ref{thm:main} has nothing random left inside it, since $\Delta_w\le g_{\min}^{-2}\sup_{u\le\tstar}|\Ghat(u)-G(u)|$. Replacing $\Ghat$ by
$\max(\Ghat,g_{\min})$, which is what the implementation calibrates with, puts the
event of Theorem~\ref{thm:main} in force by construction and does not loosen the
bound, since $G\ge g_{\min}$ by Assumption~\ref{as:pos} makes the truncation
pointwise no further from $G$ than $\Ghat$ is. Gill (1980) gives
$\sup_{u\le\tstar}|\Ghat(u)-G(u)|=O_P(|\mathcal D_{\mathrm{tr}}|^{-1/2})$ for the
Kaplan--Meier estimator of the censoring distribution on $[0,\tstar]$, which needs
the risk set at $\tstar$ to be non-degenerate, that is $\Prob(\min(T,C)\ge\tstar)>0$.
That holds because $C$ is independent of $T$, $G(\tstar)\ge g_{\min}$ and
$\Prob(T>\tstar)>0$ by hypothesis, so
$\Prob(\min(T,C)\ge\tstar)\ge\Prob(T>\tstar)\,G(\tstar)>0$. The rate is in the size of the training fold, on which $\Ghat$ is
fitted, and not in the calibration count of Lemma~\ref{lem:we}.
\end{proof}

\paragraph{Proof of Corollary~\ref{cor:mondrian}.}
\begin{proof}
Condition on $Y=\ell$ throughout. The calibration subjects with $Y_i=\ell$ and a
test subject with $Y=\ell$ are i.i.d.\ draws from $\mu_\ell=\mu(\cdot\mid Y=\ell)$,
because $Y$ is a function of the latent tuple $\zeta$ and the subjects are i.i.d.
Selection is unchanged by the conditioning. Given $\zeta$, $\Delta$ is
$\mathrm{Bernoulli}(\rho(\zeta))$ independently across subjects by
Assumption~\ref{as:indep}, and $\rho(\zeta)=G(m\mid X)\ge g_{\min}$ by
Assumption~\ref{as:pos}, both of which hold pointwise in $\zeta$ and therefore on
$\mu_\ell$. Lemma~\ref{lem:we} applies verbatim with $\mu$ replaced by $\mu_\ell$,
giving weighted exchangeability of $\Ical_\ell$ together with the test subject
under the same weight functions, and the argument of Theorem~\ref{thm:main}
follows with every expectation taken under $\mu_\ell$.
\end{proof}

\paragraph{Proof of Lemma~\ref{lem:atomhat}.}
\begin{proof}
The two lemmas differ in which procedure carries the exact floor, so they differ in
what the atom substitutes for. In Lemma~\ref{lem:oracle} the floor is
Lemma~\ref{lem:we}, whose weighted-quantile step puts $\wstar_{\mathrm{te}}$ on the
test slot, and the atom replaces that. In Theorem~\ref{thm:main} the floor is not
transported from that procedure. Lei and Cand\`es prove their lower bound by
introducing the tilted law $\widetilde P$ with
$\mathrm d\widetilde P=\widehat r\,\mathrm dQ_{\mathrm{sel}}$, under which the
estimate is the true ratio, applying the weighted-quantile inequality \emph{there},
and only then transporting to $P$ at cost
$d_{\mathrm{TV}}(P,\widetilde P)=\tfrac12\E_{Q_{\mathrm{sel}}}|\widehat r-r^\star|
=\Delta_w$. Read on the multiset of Lemma~\ref{lem:we}, the exact step is
$\sum_l\widehat p_l\1\{E_l\}\ge1-\alpha$ with
$\widehat p_l\propto\what(u_l)$, and the transport replaces $\widehat p$ by
$p^\star_l\propto\wstar(u_l)$ at cost their total variation. So the slot the atom
stands in for carries $\what_{\mathrm{te}}$ and not $\wstar_{\mathrm{te}}$, and normalisation is common to every slot and cancels. There is no route to
$\sum_l p^\star_l\1\{E_l\}\ge1-\alpha$ directly, because $E_l$ is decided by a
threshold computed at $\what$. Atom mass
only inflates the quantile, so any $A\ge\what_{\mathrm{te}}$ preserves the floor,
exactly as in Lemma~\ref{lem:oracle}.

For (i), $m=\min(T,\tstar)\le\tstar$
and $\Ghat(\cdot\mid X)$ is non-increasing, so
$\max\{\Ghat(\tstar\mid X),g_{\min}\}\le\max\{\Ghat(m\mid X),g_{\min}\}$, and the same $\Ghat$ supplies both sides, which is what makes the inequality exact rather than
statistical. An envelope built from a \emph{different} estimator than the one
supplying the weights carries no such guarantee. For (ii), the implementation
calibrates with $1/\max(\Ghat,g_{\min})$, so $\what\le1/g_{\min}$ pointwise with
no condition. Corollary~\ref{cor:mondrian}
inherits the statement, its proof taking every expectation under $\mu_\ell$ and
touching the atom only through Lemma~\ref{lem:we}.
\end{proof}

\section{Reading the numbers}\label{sec:suppreading}

\begin{remark}[What pushes the weighted procedure up]\label{rem:conservative}
Two forces push coverage up. The atom in \eqref{eq:qhat} adds mass
$p_\infty=A/(\sum_j\what_j+A)$ before the quantile target is reached, so $\qhat$
is inflated, and that mass grows as the calibration fold gets smaller. And
Theorem~\ref{thm:main} is one-sided. With a correct censoring model and folds of
the size used here the push is small: Section~\ref{sec:simulation} finds the
weighted procedure within $0.012$ of nominal on either side, with $12$ of the
$24$ weighted cells of Table~\ref{tab:dialsbody} below $0.900$ before rounding, so neither
direction is evidence for or against exactness.
\end{remark}

\paragraph{Checking the operating condition.} The condition of
Remark~\ref{rem:alpha} can be checked before any set is formed. Fix $g_{\min}$
first, below the minimum of $\Ghat(\tstar\mid X)$ over the covariate support with a
margin, then compute $W=\sum_{i\in\Ical}\what_i$ on the calibration fold and
compare it with $A(1-\alpha)/\alpha$ for the atom the procedure carries. At
\eqref{eq:atom} that boundary runs $2.8$ to $155.4$ over the sweep of
Figure~\ref{fig:alpha}, and it is $9.2$ on the registry cohort. If $W$ falls
short, enlarge the calibration fold, or in the class-conditional version report
the causes below the boundary as uninformative. Raising $g_{\min}$ after seeing
$W$ satisfies the condition on paper and removes the floor it protects.

\paragraph{Selection and the direction of the bias.} At censoring $0.75$ the
mean score among label-complete test subjects is $0.649$ against $0.436$ over all
test subjects, so label-completeness selects the harder subjects and a threshold
calibrated for the population is too small for that subgroup. At censoring $0.2$ the same gap is $0.548$ against $0.534$ and the
coverage gap closes with it.

That mean shift is a guide to the direction and not what decides it. The
unweighted threshold is the $(1-\alpha)$-quantile of the label-complete scores,
so the procedure under-covers exactly when the population puts less than
$1-\alpha$ of its mass at or below that point, which depends on the joint law of
$(V,\wstar)$ (Remark~\ref{rem:testslot}) and not on a summary of it. Three
covariate groups make the gap concrete (\texttt{e56\_mean\_shift.py}). The
first, of mass $0.2$, fails early from cause $1$, so its score is $0$, and is
selected with probability $0.05$. The second, of mass $0.2$, fails early from
either cause with probability $1/2$, scores $0.5$ and is never censored. The
third, of mass $0.6$, has $\pi=(0.8,0.2,0)$ and is selected with probability
$0.05$. Every selection probability is at least $g_{\min}=0.05$ and the score is
the truth. Label-complete subjects are harder on average, with mean score
$0.457$ against $0.292$, yet selection piles their mass on the second group's
$0.5$, the unweighted threshold falls from the population's $0.8$ to $0.5$, and
coverage over the population is $0.88$ against $0.90$. The other experiments
agree. In the exact search of Appendix~\ref{sec:suppcost}, a downward shift of
the score distribution under selection accompanies $2{,}464$ of the $2{,}638$
failing (law, $\alpha$) pairs, the cruder proxy that the most probable label is also the earliest
holds in $1{,}940$, and the rule that the sign of the shift is opposite to the
correlation between censoring hazard and expected nonconformity holds at
$\gamma=+3$ and fails at $\gamma=-3$ (Table~\ref{tab:boundedcens}).

\section{Further detail}\label{sec:suppextras}

\paragraph{The score-misspecification design.} Table~\ref{tab:scoremisspec} crosses
three event-time families with a misspecification dial $\lambda$. The families
are exponential, Weibull with the two causes given qualitatively different
baseline hazards, and one where the cause-1 covariate effect reverses at a knot. The dial
corrupts what the model is trained on, replacing the informative covariate by
$\sqrt{1-\lambda}\,x_0+\sqrt{\lambda}\,z$ with $z$ independent noise of the same
variance, and then applies the fitted model to the true covariates at calibration
and test time. At $\lambda=1$ the model has never seen the covariate the data
generating process uses. The paper's score holds at $0.898$ to $0.902$ across all
fifteen cells, and an adaptive-set score with randomisation holds at $0.900$ to
$0.904$. The same score without randomisation returns the whole label space in
every cell, at coverage $1.000$ and set size $3.00$, a low-cardinality degeneracy of
non-randomised adaptive sets and not a property of censoring.

\section{A law where unweighted calibration fails at the true score}\label{sec:suppcost}

The label-complete calibration subjects and a label-complete test subject are
exchangeable, so the unweighted procedure is valid for the label-complete
subpopulation. The weighting changes the population the guarantee is about. We
prefer the population because whether a subject will be followed to their label
turns on a censoring time that has not happened yet, so the label-complete
subgroup cannot be identified at prediction time. The design below shows that
the unweighted procedure can fail over the population inside the assumptions and
at the true score.

Take two equal covariate groups at $\tstar=1$ with $\alpha=0.2$ and
exponential censoring independent of $X$ with $G(\tstar)=0.082$, above the
$g_{\min}=0.05$ the procedure is configured with, so Assumption~\ref{as:pos}
holds. Group $0$ fails early, at a time uniform on $(0.02,0.30)$, always from cause $1$, so its true
$\pi$ is a point mass and its score is $0$. Group $1$'s event time is uniform on
$(0.96,1.04)$ and it fails from cause $1$ with probability $0.8$, so its true
$(\pi_0,\pi_1,\pi_2)$ is $(0.5,0.4,0.1)$ and its $m=\min(T,\tstar)$ is within $0.04$ of the horizon. The
score is the true $\pi$, which is what Theorem~\ref{thm:main} conditions on.

Label-completeness keeps $0.68$ of group $0$ and $0.084$ of group
$1$, so $0.89$ of the calibration sample is the group whose score is $0$, more
than the $1-\alpha=0.8$ the quantile needs, and the unweighted $1-\alpha$ quantile is $0$ with it. Over the population, scored
against the latent label on every test subject, the unweighted procedure covers
$0.499$ against a nominal $0.800$, at a mean set size of $0.50$ of $3$. Below the threshold it returns the empty set, and on group $1$ it covers nothing at all. The
weights are $1/G(m)$, which is between $1.05$ and $2.12$ on group $0$ and between $11.0$ and $12.2$ on group $1$,
and they restore that group's mass, and the weighted procedure covers $0.950$ at
$\qhat=0.600$. What that threshold buys is the point. On group $0$ it returns the
single correct label, and on group $1$, the only subjects whose label at $\tstar$
is in doubt, it returns two labels of three and covers $0.900$. Inside the
assumptions, the unweighted procedure fails and the weighted one is informative
about exactly the subjects the failure concerns.

The size-$2$ set is available only because group $1$'s two causes are not equally
likely. With probability $1/2$ each the two causes tie in score, no threshold
admits one without the other, and the weighted set on group $1$ is the whole label
space. That version is the control, over two hundred replications.

The figures above use the closed-form $G$, so on their own they exhibit
Lemma~\ref{lem:oracle} rather than Theorem~\ref{thm:main}. Refitting the weights
from a Kaplan--Meier estimate on an independent training fold, which is the
regime the theorem covers, the exhibit survives, at coverage $0.948$ and a group-$1$ set size of $1.990$ against $0.950$ and $2.000$ at the oracle weights.

The mechanism is the one Remark~\ref{rem:testslot} names. Here selection
over-represents the zero-score group, so the label-complete quantile sits below
the population's and the unweighted threshold is too small. The same comparison drives Section~\ref{sec:correctfit},
where the true score under-covers once few subjects are event-free at the
horizon, and Section~\ref{sec:wrongG}, where it fails between two regions. The
guarantee is what covers both signs.

The failure is not confined to constructed corners. An exhaustive exact search
over $5{,}985$ three-label laws with fixed failure
times, the true score and covariate-free censoring, each at four levels $\alpha$, finds $2{,}638$ of the $23{,}940$ pairs, from $1{,}645$ laws, in which the
unweighted procedure covers below nominal, the worst at $0.400$ against a nominal
$0.80$ (\texttt{cr-counterexample.py}).

The two-point example below is the same phenomenon at its degenerate end, and the
contrast is the reason the design above needs three labels and a covariate. With two labels, no covariate and a marginal probability below $1-\alpha$, no
non-randomised set short of the whole space exists at any threshold, so the
triviality is a fact about the target and not about anything a procedure chose.
The design above escapes it because three labels and an asymmetric cause split put
a set of size $2$ inside the score space to begin with. With $T\sim\mathrm{Exp}(1)$,
$C\sim\mathrm{Exp}(1.5)$ at $\tstar=1$ and no covariates, label-complete selection
moves the probability of the early-failure label from $0.632$ to $0.817$ and the
unweighted procedure covers $0.632$ against a nominal $0.80$. Its weighted set is
the whole label space, since $P(Y=1)=0.632<0.8$.

\clearpage
\section{Further figures}
\label{sec:s-figures}

\begin{figure}[!h]
\centering
\includegraphics[width=0.8\textwidth]{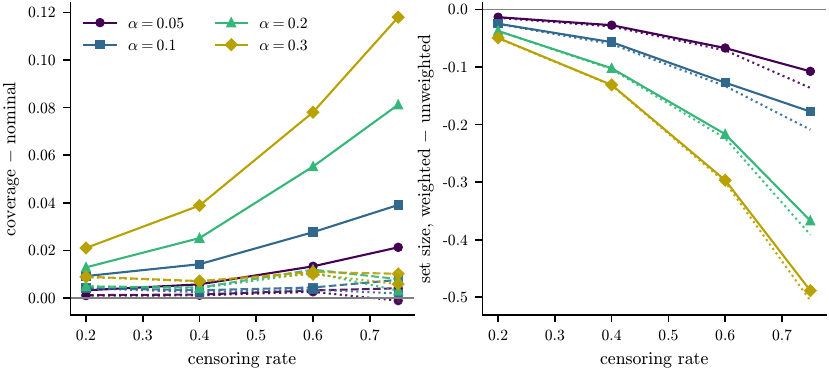}
\caption{The measured advantage, in both of its arguments. Left, deviation from nominal for the unweighted procedure (solid), for the procedure as run at the envelope atom \eqref{eq:atom} (dashed), and at a matched unit atom (dotted). Right, the difference in mean set size against the unweighted procedure for the same two weighted arms, where negative means the weighted sets are smaller. The four colours are the miscoverage levels, keyed in the left panel.}
\label{fig:alpha}
\end{figure}

\begin{figure}[!h]
\centering
\includegraphics[width=0.72\textwidth]{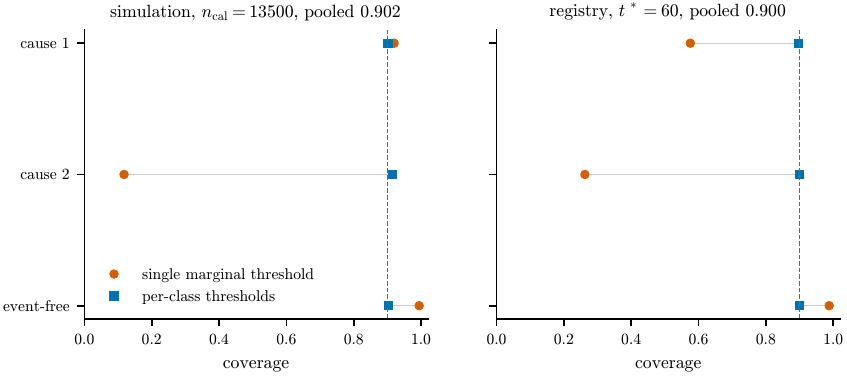}
\caption{A single marginal threshold covers in aggregate while missing a cause, and per-class thresholds repair it. Left, the simulated design of Table~\ref{tab:mondrianbody}. Right, the registry at $\tstar=60$. Circles are coverage under one marginal threshold, squares under per-class thresholds, and the dashed line is the nominal $1-\alpha$.}
\label{fig:percause}
\end{figure}

\clearpage
\section{Further tables}
\label{sec:s-tables}

\begin{table}[t]
\centering
\caption{Per-cause coverage when one cause is rare, at $0.15$ of events, with $K=2$ and $\alpha=0.1$ on an enlarged calibration fold. \emph{marg} is coverage under a single pooled threshold and \emph{per-cause} under per-class thresholds. $W_\ell$ is the label's calibration mass, which must exceed the boundary of Remark~\ref{rem:alpha} for the class threshold to be finite, $p_{\mathrm{eff}}=(1-\alpha)(W_\ell+w_\infty)/W_\ell$ is the level a per-class threshold is calibrated to, and \emph{incl} is the rate at which the label enters a prediction set. The calibration weights are the oracle ones.}
\label{tab:mondrianbody}
\resizebox{\ifdim\width>\textwidth\textwidth\else\width\fi}{!}{%
\begin{tabular}{lrrrrrr}
\toprule
& marg & per-cause & SE & $W_\ell$ & $p_{\mathrm{eff}}$ & incl \\
\midrule
cause 1 & 0.920 & 0.902 & 0.002 & 4601 & 0.904 & 0.73 \\
cause 2 & 0.118 & 0.914 & 0.004 & 1053 & 0.917 & 0.90 \\
event-free & 0.994 & 0.903 & 0.002 & 7861 & 0.902 & 0.79 \\
pooled & 0.902 & & 0.001 & & & \\
mean set size & 1.84 & 2.42 & & & & \\
\bottomrule
\end{tabular}
}
\end{table}

\begin{table}[t]
\centering
\caption{Censoring that depends on the covariate carrying the signal, on bounded support. $X_0$ is uniform on $[-1,1]$, the censoring hazard is proportional to $\exp(\gamma X_0)$, and its level is set so that the smallest $G(t^\ast\mid X)$ is $0.08$, so Assumption~\ref{as:pos} holds by construction. The score is the oracle conditional incidence, and coverage is measured against the latent label. U is unweighted, M uses marginal Kaplan--Meier weights, B uses Kaplan--Meier weights within 10 bins of $X_0$, and O uses the oracle $G$. The control has covariate-free censoring at the same censored fraction and event draws, so the last column is paired. Standard errors are over 200 replications, in units of the last digit.}
\label{tab:boundedcens}
\resizebox{\ifdim\width>\textwidth\textwidth\else\width\fi}{!}{%
\begin{tabular}{rrrrrrrr}
\toprule
$\gamma$ & censored & U & M & B & O & U, control & U, paired \\
\midrule
$-3$ & 0.287 & 0.9060(5) & 0.8988(5) & 0.9006(6) & 0.9007(5) & 0.9079(5) & -0.0020(2) \\
$-2$ & 0.394 & 0.9106(4) & 0.9000(5) & 0.9007(6) & 0.9007(6) & 0.9116(5) & -0.0010(2) \\
$-1$ & 0.572 & 0.9211(4) & 0.9026(5) & 0.9007(6) & 0.9010(6) & 0.9212(5) & -0.0002(2) \\
$+0$ & 0.583 & 0.9220(5) & 0.9013(5) & 0.9011(5) & 0.9012(5) & 0.9220(5) & +0.0000(0) \\
$+1$ & 0.585 & 0.9193(5) & 0.8967(5) & 0.9011(6) & 0.9015(6) & 0.9221(5) & -0.0029(2) \\
$+2$ & 0.364 & 0.9058(5) & 0.8949(5) & 0.9010(5) & 0.9013(5) & 0.9105(5) & -0.0047(2) \\
$+3$ & 0.235 & 0.9008(5) & 0.8946(5) & 0.9009(5) & 0.9011(5) & 0.9063(5) & -0.0055(2) \\
\bottomrule
\end{tabular}
}
\end{table}

\begin{table}[t]
\centering
\caption{A misspecified score under covariate-dependent censoring. The design of Table~\ref{tab:boundedcens}, with the true event-free probability multiplied by $1-b$ and the causes rescaled to match; $b=0$ is that table. Arms and evaluator are as there. Standard errors are over 200 replications, in units of the last digit.}
\label{tab:fourthcell}
\resizebox{\ifdim\width>\textwidth\textwidth\else\width\fi}{!}{%
\begin{tabular}{rrrrrr}
\toprule
$b$ & $\gamma$ & U & M & B & O \\
\midrule
0.0 & $-3$ & 0.9060(5) & 0.8988(5) & 0.9006(6) & 0.9007(5) \\
 & $+0$ & 0.9220(5) & 0.9013(5) & 0.9011(5) & 0.9012(5) \\
 & $+3$ & 0.9008(5) & 0.8946(5) & 0.9009(5) & 0.9011(5) \\
\addlinespace
0.3 & $-3$ & 0.9101(5) & 0.9068(5) & 0.9001(5) & 0.9004(5) \\
 & $+0$ & 0.9142(5) & 0.9005(6) & 0.9005(6) & 0.9005(6) \\
 & $+3$ & 0.8697(6) & 0.8625(6) & 0.8975(6) & 0.9009(6) \\
\addlinespace
0.6 & $-3$ & 0.9198(4) & 0.9269(4) & 0.8996(5) & 0.8999(5) \\
 & $+0$ & 0.8627(7) & 0.8994(7) & 0.8995(7) & 0.8995(6) \\
 & $+3$ & 0.8349(6) & 0.8413(6) & 0.8988(9) & 0.8992(8) \\
\bottomrule
\end{tabular}
}
\end{table}

\begin{table}[t]
\centering
\caption{The true-score cells of Table~\ref{tab:oraclescore} rerun with every subject drawn independently from one law: the cause-1 hazard is normalised by the population value $e^{1/2}$ rather than by each draw's mean, and no forest is fitted. Coverage is inverse-probability weighted under the oracle $G$ over 200 replications with the seeds of Table~\ref{tab:oraclescore}, with standard errors in units of the last printed digit. The share column is the design parameter. The weighted arm uses training-fold Kaplan--Meier weights and the envelope atom \eqref{eq:atom}.}
\label{tab:iid}
\resizebox{\ifdim\width>\textwidth\textwidth\else\width\fi}{!}{%
\begin{tabular}{rrrr}
\toprule
share & censoring & unweighted & weighted \\
\midrule
0.50 & 0.50 & 0.9173(8) & 0.9012(9) \\
0.25 & 0.50 & 0.9008(9) & 0.9027(10) \\
0.15 & 0.50 & 0.8723(13) & 0.9054(14) \\
\addlinespace
0.50 & 0.20 & 0.9053(7) & 0.9005(7) \\
0.50 & 0.40 & 0.9123(8) & 0.9005(8) \\
0.50 & 0.60 & 0.9241(8) & 0.9021(10) \\
0.50 & 0.70 & 0.9327(8) & 0.9035(11) \\
0.50 & 0.75 & 0.9380(9) & 0.9054(13) \\
\bottomrule
\end{tabular}
}
\end{table}

\begin{table}[t]
\centering
\caption{Table~\ref{tab:twobytwo} on real covariates. For each cohort the truth is a cause-specific exponential regression fitted to the cohort, subjects are resampled from its covariate rows, and censoring is synthetic with hazard proportional to $\exp(\gamma z)$, $z$ the rank of the true all-cause hazard, so $G$ is known and $\min G(t^\ast\mid X)\approx0.08$. The horizon is $134$ months for mgus2 and $2169$ days for PBC, and $b$ shrinks the true event-free probability as in Table~\ref{tab:fourthcell}. B uses Kaplan--Meier within $10$ bins of $z$. Standard errors are over 200 replications, in units of the last digit.}
\label{tab:realcov}
\resizebox{\ifdim\width>\textwidth\textwidth\else\width\fi}{!}{%
\begin{tabular}{llrrrrr}
\toprule
cohort & $b$ & $\gamma$ & U & M & B & O \\
\midrule
mgus2 & 0.0 & $-3$ & 0.9077(7) & 0.9083(6) & 0.9004(7) & 0.9016(7) \\
 &  & $+0$ & 0.9036(8) & 0.9008(8) & 0.9021(9) & 0.9009(8) \\
 &  & $+3$ & 0.8885(8) & 0.8871(8) & 0.9018(8) & 0.9007(8) \\
\addlinespace
 & 0.6 & $-3$ & 0.9162(6) & 0.9237(6) & 0.9005(7) & 0.9002(7) \\
 &  & $+0$ & 0.8778(10) & 0.9001(8) & 0.9018(11) & 0.9001(8) \\
 &  & $+3$ & 0.8780(7) & 0.8808(7) & 0.9016(9) & 0.8995(8) \\
\addlinespace
PBC & 0.0 & $-3$ & 0.9043(7) & 0.8970(7) & 0.8998(8) & 0.8995(8) \\
 &  & $+0$ & 0.9203(7) & 0.8997(8) & 0.8996(8) & 0.8998(8) \\
 &  & $+3$ & 0.9057(7) & 0.9009(7) & 0.9003(8) & 0.9000(7) \\
\addlinespace
 & 0.6 & $-3$ & 0.9206(6) & 0.9247(6) & 0.9007(7) & 0.9005(7) \\
 &  & $+0$ & 0.8876(9) & 0.9018(8) & 0.9020(9) & 0.9019(8) \\
 &  & $+3$ & 0.8611(7) & 0.8627(7) & 0.9030(9) & 0.9010(8) \\
\bottomrule
\end{tabular}
}
\end{table}

\begin{table}[t]
\centering
\caption{The estimated-weight penalty $\Delta_w$ under covariate-dependent censoring. Censoring depends on $X_0$ with strength $c_s$ at a fixed realised censoring rate of $0.40$. $P[G<g_{\min}]$ is the population mass outside Assumption~\ref{as:pos}, computed from the generating process, so the rows with $c_s\ge1$ lie outside the assumption and are stress tests. \emph{cov} is the oracle-weighted evaluator, which caps weights at $1/g_{\min}$ and so cannot see the violating subjects. The \emph{direct} columns score every test subject against its latent label, overall and split into the violating stratum (\emph{viol.}) and the rest. The \emph{unit atom} rows are diagnostic decompositions without a coverage floor, and $\Delta_w$ is undefined for the unweighted arm. \emph{pos} is the fraction of label-complete subjects whose true $G(\min(T,t^\ast)\mid X)$ is below $g_{\min}$, and \emph{clip} is the rate at which the calibration $\hat G$ hits the floor. $\kappa=\E[G/\hat G]$ is the constant of Theorem~\ref{thm:main}, and bound$/g_{\min}$ is $\|\hat G-G\|/g_{\min}$, which bounds $\Delta_w$ when $\kappa\ge1$. Without that condition the bound is $1/g_{\min}$ times larger. In the rows with $\kappa<1$ that column is not a bound.}
\label{tab:deltaw}
\footnotesize\resizebox{\ifdim\width>\textwidth\textwidth\else\width\fi}{!}{%
\begin{tabular}{llrrrrrrrrrrrrrrrrr}
\toprule
$c_s$ & arm & censored & $P[G<g_{\min}]$ & cov & SE & direct & SE & viol.\ & rest & $\Delta_w$ & floor & $\kappa$ & $\|\hat G-G\|$ & bound$/g_{\min}$ & pos & clip & size & SE \\
\midrule
0.0 & marginal KM & 0.394 & 0.000 & 0.896 & 0.003 & 0.896 & 0.003 & --- & 0.896 & 0.0035 & 0.897 & 0.999 & 0.0194 & 0.39 & 0.0000 & 0.0000 & 2.26 & 0.010 \\
& conditional forest & & & 0.897 & 0.002 & 0.897 & 0.003 & --- & 0.897 & 0.0497 & 0.850 & 1.023 & 0.1091 & 2.18 & 0.0000 & 0.0000 & 2.26 & 0.010 \\
& marginal KM, unit atom & & & 0.896 & 0.003 & 0.896 & 0.003 & --- & 0.896 & 0.0035 & 0.897 & 0.999 & 0.0194 & 0.39 & 0.0000 & 0.0000 & 2.25 & 0.010 \\
& conditional forest, unit atom & & & 0.895 & 0.003 & 0.895 & 0.003 & --- & 0.895 & 0.0497 & 0.850 & 1.023 & 0.1091 & 2.18 & 0.0000 & 0.0000 & 2.25 & 0.010 \\
& \textbf{unweighted}, unit atom & & & 0.905 & 0.002 & 0.906 & 0.002 & --- & 0.906 & --- & --- & --- & --- & --- & 0.0000 & 0.0000 & 2.30 & 0.009 \\
0.5 & marginal KM & 0.395 & 0.000 & 0.898 & 0.003 & 0.897 & 0.003 & 1.000 & 0.897 & 0.0533 & 0.847 & 0.996 & 0.1092 & 2.18 & 0.0000 & 0.0000 & 2.26 & 0.015 \\
& conditional forest & & & 0.899 & 0.003 & 0.898 & 0.003 & 1.000 & 0.898 & 0.0552 & 0.845 & 1.018 & 0.1206 & 2.41 & 0.0000 & 0.0000 & 2.27 & 0.015 \\
& marginal KM, unit atom & & & 0.897 & 0.003 & 0.897 & 0.003 & 1.000 & 0.897 & 0.0533 & 0.847 & 0.996 & 0.1092 & 2.18 & 0.0000 & 0.0000 & 2.26 & 0.015 \\
& conditional forest, unit atom & & & 0.896 & 0.003 & 0.895 & 0.003 & 1.000 & 0.895 & 0.0552 & 0.845 & 1.018 & 0.1206 & 2.41 & 0.0000 & 0.0000 & 2.25 & 0.015 \\
& \textbf{unweighted}, unit atom & & & 0.907 & 0.003 & 0.906 & 0.003 & 1.000 & 0.906 & --- & --- & --- & --- & --- & 0.0000 & 0.0000 & 2.31 & 0.014 \\
1.0 & marginal KM & 0.397 & 0.019 & 0.898 & 0.004 & 0.897 & 0.004 & 0.928 & 0.896 & 0.1073 & 0.793 & 0.997 & 0.2003 & 4.01 & 0.0001 & 0.0000 & 2.28 & 0.017 \\
& conditional forest & & & 0.905 & 0.004 & 0.904 & 0.004 & 0.930 & 0.903 & 0.0632 & 0.837 & 1.008 & 0.1213 & 2.43 & 0.0001 & 0.0001 & 2.32 & 0.019 \\
& marginal KM, unit atom & & & 0.897 & 0.004 & 0.896 & 0.004 & 0.928 & 0.896 & 0.1073 & 0.793 & 0.997 & 0.2003 & 4.01 & 0.0001 & 0.0000 & 2.27 & 0.017 \\
& conditional forest, unit atom & & & 0.897 & 0.004 & 0.895 & 0.004 & 0.928 & 0.895 & 0.0632 & 0.837 & 1.008 & 0.1213 & 2.43 & 0.0001 & 0.0001 & 2.27 & 0.018 \\
& \textbf{unweighted}, unit atom & & & 0.907 & 0.004 & 0.906 & 0.003 & 0.931 & 0.905 & --- & --- & --- & --- & --- & 0.0001 & 0.0000 & 2.33 & 0.017 \\
2.0 & marginal KM & 0.396 & 0.127 & 0.892 & 0.003 & 0.888 & 0.003 & 0.819 & 0.898 & 0.1913 & 0.709 & 1.008 & 0.3207 & 6.41 & 0.0012 & 0.0000 & 2.26 & 0.011 \\
& conditional forest & & & 0.915 & 0.002 & 0.911 & 0.003 & 0.851 & 0.920 & 0.1006 & 0.799 & 0.971 & 0.1138 & 2.28 & 0.0012 & 0.0003 & 2.38 & 0.016 \\
& marginal KM, unit atom & & & 0.891 & 0.003 & 0.887 & 0.003 & 0.818 & 0.897 & 0.1913 & 0.709 & 1.008 & 0.3207 & 6.41 & 0.0012 & 0.0000 & 2.26 & 0.010 \\
& conditional forest, unit atom & & & 0.901 & 0.003 & 0.897 & 0.003 & 0.832 & 0.907 & 0.1006 & 0.799 & 0.971 & 0.1138 & 2.28 & 0.0012 & 0.0003 & 2.31 & 0.016 \\
& \textbf{unweighted}, unit atom & & & 0.902 & 0.003 & 0.898 & 0.002 & 0.833 & 0.908 & --- & --- & --- & --- & --- & 0.0012 & 0.0000 & 2.31 & 0.011 \\
3.0 & marginal KM & 0.405 & 0.200 & 0.868 & 0.005 & 0.830 & 0.006 & 0.552 & 0.898 & 0.2371 & 0.663 & 1.041 & 0.3889 & 7.78 & 0.0014 & 0.0000 & 2.23 & 0.013 \\
& conditional forest & & & 0.910 & 0.004 & 0.874 & 0.005 & 0.628 & 0.935 & 0.1846 & 0.715 & 1.036 & 0.1406 & 2.81 & 0.0014 & 0.0013 & 2.43 & 0.021 \\
& marginal KM, unit atom & & & 0.867 & 0.005 & 0.829 & 0.006 & 0.551 & 0.897 & 0.2371 & 0.663 & 1.041 & 0.3889 & 7.78 & 0.0014 & 0.0000 & 2.22 & 0.014 \\
& conditional forest, unit atom & & & 0.894 & 0.004 & 0.857 & 0.005 & 0.597 & 0.921 & 0.1846 & 0.715 & 1.036 & 0.1406 & 2.81 & 0.0014 & 0.0013 & 2.35 & 0.018 \\
& \textbf{unweighted}, unit atom & & & 0.879 & 0.004 & 0.841 & 0.006 & 0.570 & 0.908 & --- & --- & --- & --- & --- & 0.0014 & 0.0000 & 2.28 & 0.012 \\
\bottomrule
\end{tabular}
}
\end{table}

{\footnotesize
\begin{longtable}{lrrrrrrr}
\caption{Estimated-weight penalty $\Delta_w$ and its rate. The first block reports the oracle-versus-Kaplan--Meier coverage gap, negligible at every pool size, and the sup-error $\|\hat G-G\|$, whose fitted log-log slope is -0.549. That sup-error is a Kaplan--Meier property and holds whether or not the conformal argument does, so the second block reports $\Delta_w$ itself, which is the quantity Corollary~\ref{cor:rate} makes a claim about. At $c_s=0$, the corollary's own regime, its fitted slope is -0.497 against the asserted $-1/2$. At $c_s=1$ it is -0.002. $\Delta_w$ does not decay at all, and that point still satisfies Assumption~\ref{as:indep} in its covariate-conditional form. What fails is the \emph{marginal} estimator's convergence to $G(\cdot\mid X)$, so the corollary's conclusion needs the joint-independence version of that assumption. The two coverage arms share replication seeds, so the gap's standard error is that of the paired difference rather than of two independent means.}\label{tab:rate} \\
\toprule
$n$ & cov (oracle) / $\Delta_w$ & SE & cov (KM) & SE & gap & SE & $\|\hat G - G\|$ \\
\midrule
\endfirsthead
\multicolumn{8}{l}{\emph{Table \ref{tab:rate}, continued}} \\
\toprule
$n$ & cov (oracle) / $\Delta_w$ & SE & cov (KM) & SE & gap & SE & $\|\hat G - G\|$ \\
\midrule
\endhead
\bottomrule
\endfoot
\multicolumn{8}{l}{\emph{coverage gap and sup-error, marginal censoring}} \\\\
\quad 500 & 0.964 & 0.005 & 0.963 & 0.005 & +0.001 & 0.001 & 0.0577 \\
\quad 1000 & 0.931 & 0.002 & 0.929 & 0.003 & +0.001 & 0.001 & 0.0457 \\
\quad 2000 & 0.911 & 0.003 & 0.911 & 0.003 & -0.000 & 0.000 & 0.0280 \\
\quad 4000 & 0.908 & 0.002 & 0.907 & 0.002 & +0.000 & 0.000 & 0.0186 \\
\quad 8000 & 0.905 & 0.001 & 0.905 & 0.001 & +0.000 & 0.000 & 0.0135 \\
\midrule
\multicolumn{8}{l}{\emph{$\Delta_w$ itself, $c_s=0$, marginal KM consistent; slope -0.497}} \\
\quad 500 & 0.00898 & 0.00130 & & & & & 0.0518 \\
\quad 1000 & 0.00595 & 0.00085 & & & & & 0.0310 \\
\quad 2000 & 0.00495 & 0.00069 & & & & & 0.0261 \\
\quad 2500 & 0.00368 & 0.00039 & & & & & 0.0181 \\
\quad 4000 & 0.00253 & 0.00026 & & & & & 0.0169 \\
\quad 8000 & 0.00260 & 0.00025 & & & & & 0.0131 \\
\quad 16000 & 0.00183 & 0.00017 & & & & & 0.0093 \\
\quad 32000 & 0.00097 & 0.00016 & & & & & 0.0050 \\
\multicolumn{8}{l}{\emph{$\Delta_w$ itself, $c_s=1$, marginal KM inconsistent; slope -0.002}} \\
\quad 500 & 0.10824 & 0.00058 & & & & & 0.2094 \\
\quad 1000 & 0.10742 & 0.00041 & & & & & 0.2030 \\
\quad 2000 & 0.10695 & 0.00031 & & & & & 0.2012 \\
\quad 2500 & 0.10673 & 0.00019 & & & & & 0.2005 \\
\quad 4000 & 0.10690 & 0.00016 & & & & & 0.1995 \\
\quad 8000 & 0.10726 & 0.00017 & & & & & 0.1996 \\
\quad 16000 & 0.10690 & 0.00014 & & & & & 0.1989 \\
\quad 32000 & 0.10695 & 0.00013 & & & & & 0.1988 \\
\end{longtable}
}

\begin{table}[t]
\centering
\caption{A controlled violation of Assumption~\ref{as:pos}. The upper tail of the signal covariate, carrying population mass $p$, is given a censoring hazard that sets $G(t^\ast\mid x)$ to half of $g_{\min}$, with the realised censoring rate held at $40\%$ at every $p$. Coverage is measured directly against the latent label, and the capped IPCW estimate is reported alongside. The fixed top-$30\%$ tail column is a control that scores the same subjects when they receive no special censoring. The last column is the rate at which the clip on the marginal $\hat G$ fires.}
\label{tab:posviolation}
\resizebox{\ifdim\width>\textwidth\textwidth\else\width\fi}{!}{%
\begin{tabular}{rrrrrrrrr}
\toprule
$p$ & realized & direct cov.\ & SE & violating & clean & top-30\% & IPCW cov.\ & clip rate \\
\midrule
0.00 & 0.000 & 0.898 & 0.002 & -- & 0.898 & 0.911 & 0.898 & 0.0000 \\
0.05 & 0.050 & 0.896 & 0.002 & 0.895 & 0.897 & 0.912 & 0.896 & 0.0000 \\
0.15 & 0.149 & 0.894 & 0.003 & 0.876 & 0.897 & 0.904 & 0.893 & 0.0000 \\
0.30 & 0.300 & 0.883 & 0.003 & 0.826 & 0.907 & 0.826 & 0.877 & 0.0000 \\
\bottomrule
\end{tabular}
}
\end{table}

{\footnotesize
\begin{longtable}{llrrrrr}
\caption{Score misspecification. $\lambda$ is the fraction of the informative covariate's variance replaced by noise \emph{in training only}. The fitted model is then applied to the true covariates, so the score is misspecified while the evaluation is not. Three score functions are run on the same fits. Coverage is the column that must not move and set size is the column that should. Validity survives a model trained without the covariate driving the outcome, and the misspecification budget is spent on width. Every (family, $\lambda$, score) cell is listed. The un-randomised APS column degenerates here, and that is an artefact of the small label cardinality rather than a property of the score. With $K+1=3$ labels a set that covers by including nearly all of them is only three labels wide, so the degeneracy is cheap to reach and the column bears on $K=2$ alone.}\label{tab:scoremisspec} \\
\toprule
family & $\lambda$ & score & coverage & SE & mean set size & SE \\
\midrule
\endfirsthead
\multicolumn{7}{l}{\emph{Table \ref{tab:scoremisspec}, continued}} \\
\toprule
family & $\lambda$ & score & coverage & SE & mean set size & SE \\
\midrule
\endhead
\bottomrule
\endfoot
exponential & 0.00 & baseline & 0.902 & 0.002 & 2.28 & 0.011 \\
 & 0.00 & aps & 1.000 & 0.000 & 3.00 & 0.000 \\
 & 0.00 & raps & 0.900 & 0.002 & 2.30 & 0.009 \\
 & 0.25 & baseline & 0.902 & 0.002 & 2.30 & 0.010 \\
 & 0.25 & aps & 1.000 & 0.000 & 3.00 & 0.000 \\
 & 0.25 & raps & 0.902 & 0.002 & 2.33 & 0.010 \\
 & 0.50 & baseline & 0.902 & 0.002 & 2.33 & 0.009 \\
 & 0.50 & aps & 1.000 & 0.000 & 3.00 & 0.000 \\
 & 0.50 & raps & 0.904 & 0.002 & 2.36 & 0.012 \\
 & 0.75 & baseline & 0.901 & 0.002 & 2.38 & 0.010 \\
 & 0.75 & aps & 1.000 & 0.000 & 3.00 & 0.000 \\
 & 0.75 & raps & 0.900 & 0.002 & 2.39 & 0.010 \\
 & 1.00 & baseline & 0.900 & 0.002 & 2.55 & 0.010 \\
 & 1.00 & aps & 1.000 & 0.000 & 3.00 & 0.000 \\
 & 1.00 & raps & 0.900 & 0.002 & 2.55 & 0.009 \\
weibull & 0.00 & baseline & 0.899 & 0.002 & 2.35 & 0.010 \\
 & 0.00 & aps & 1.000 & 0.000 & 3.00 & 0.000 \\
 & 0.00 & raps & 0.900 & 0.002 & 2.40 & 0.009 \\
 & 0.25 & baseline & 0.898 & 0.002 & 2.37 & 0.009 \\
 & 0.25 & aps & 1.000 & 0.000 & 3.00 & 0.000 \\
 & 0.25 & raps & 0.900 & 0.002 & 2.42 & 0.011 \\
 & 0.50 & baseline & 0.900 & 0.002 & 2.42 & 0.009 \\
 & 0.50 & aps & 1.000 & 0.000 & 3.00 & 0.000 \\
 & 0.50 & raps & 0.901 & 0.003 & 2.46 & 0.012 \\
 & 0.75 & baseline & 0.901 & 0.002 & 2.47 & 0.009 \\
 & 0.75 & aps & 1.000 & 0.000 & 3.00 & 0.000 \\
 & 0.75 & raps & 0.901 & 0.002 & 2.50 & 0.011 \\
 & 1.00 & baseline & 0.900 & 0.002 & 2.62 & 0.008 \\
 & 1.00 & aps & 1.000 & 0.000 & 3.00 & 0.000 \\
 & 1.00 & raps & 0.901 & 0.001 & 2.62 & 0.008 \\
non-PH & 0.00 & baseline & 0.902 & 0.002 & 2.39 & 0.008 \\
 & 0.00 & aps & 1.000 & 0.000 & 3.00 & 0.000 \\
 & 0.00 & raps & 0.902 & 0.002 & 2.43 & 0.010 \\
 & 0.25 & baseline & 0.901 & 0.002 & 2.40 & 0.011 \\
 & 0.25 & aps & 1.000 & 0.000 & 3.00 & 0.000 \\
 & 0.25 & raps & 0.901 & 0.002 & 2.43 & 0.008 \\
 & 0.50 & baseline & 0.901 & 0.002 & 2.42 & 0.008 \\
 & 0.50 & aps & 1.000 & 0.000 & 3.00 & 0.000 \\
 & 0.50 & raps & 0.902 & 0.003 & 2.44 & 0.012 \\
 & 0.75 & baseline & 0.898 & 0.002 & 2.42 & 0.009 \\
 & 0.75 & aps & 1.000 & 0.000 & 3.00 & 0.000 \\
 & 0.75 & raps & 0.902 & 0.002 & 2.45 & 0.010 \\
 & 1.00 & baseline & 0.900 & 0.002 & 2.44 & 0.009 \\
 & 1.00 & aps & 1.000 & 0.000 & 3.00 & 0.000 \\
 & 1.00 & raps & 0.903 & 0.002 & 2.46 & 0.010 \\
\end{longtable}
}

\begin{table}[t]
\centering
\caption{The registry decomposition under a second estimator. Every other number in this paper comes from one competing-risk forest, so a magnitude like the per-cause gap cannot be separated from a property of that fit. The second row replaces it with an inverse-censoring-weighted multinomial logistic regression fitted on the label-complete training fold, at the same $t^\ast$, the same splits, the same tie convention and the same evaluator, so the estimator is the only thing that changes. The pooled figure is unchanged and the gap between the event-free majority and the competing cause survives. It is 0.726 under the forest against 0.704 under the logistic fit. Per-class thresholds restore all three labels to nominal under both, at a larger mean set size. The magnitudes move, the phenomenon does not.}
\label{tab:estimator}
\resizebox{\ifdim\width>\textwidth\textwidth\else\width\fi}{!}{%
\begin{tabular}{lrrrrrrrrr}
\toprule
$\pi$ & cov & size & cancer death & other cause & event-free & size & cancer death & other cause & event-free \\
\midrule
competing-risk forest & 0.900 & 1.08 & 0.576 & 0.262 & 0.988 & 1.67 & 0.897 & 0.900 & 0.900 \\
multinomial logistic & 0.900 & 1.13 & 0.550 & 0.286 & 0.990 & 1.77 & 0.899 & 0.898 & 0.899 \\
\bottomrule
\end{tabular}
}
\end{table}

\begin{table}[t]
\centering
\caption{Where the unweighted procedure crosses nominal. $b^\ast$ is the $b$ at which the cell's mean crosses 0.900, linearly interpolated on the grid of Table~\ref{tab:dialsdense}, with intervals from 4000 bootstrap draws over the same replications. An interval wider than the grid spacing of $0.1$ does not resolve a location. The last column is the fraction of draws that cross inside the grid, since the interval is conditional on crossing.}
\label{tab:contour}
\resizebox{\ifdim\width>\textwidth\textwidth\else\width\fi}{!}{%
\begin{tabular}{lrrrrrr}
\toprule
family & $c$ & $b^\ast$ & \multicolumn{2}{c}{95\% interval} & width & crossing rate \\
\midrule
exponential & 0.20 & 0.521 & 0.460 & 0.575 & 0.115 & 0.994 \\
  & 0.40 & 0.510 & 0.486 & 0.533 & 0.047 & 1.000 \\
  & 0.55 & 0.521 & 0.508 & 0.534 & 0.026 & 1.000 \\
  & 0.70 & 0.557 & 0.549 & 0.565 & 0.016 & 1.000 \\
weibull & 0.20 & 0.375 & 0.329 & 0.423 & 0.094 & 1.000 \\
  & 0.40 & 0.379 & 0.350 & 0.407 & 0.056 & 1.000 \\
  & 0.55 & 0.426 & 0.413 & 0.440 & 0.027 & 1.000 \\
  & 0.70 & 0.510 & 0.503 & 0.517 & 0.013 & 1.000 \\
non-PH & 0.20 & 0.319 & 0.268 & 0.360 & 0.091 & 1.000 \\
  & 0.40 & 0.355 & 0.337 & 0.374 & 0.037 & 1.000 \\
  & 0.55 & 0.407 & 0.397 & 0.417 & 0.020 & 1.000 \\
  & 0.70 & 0.504 & 0.495 & 0.510 & 0.014 & 1.000 \\
\bottomrule
\end{tabular}
}
\end{table}

{\setlength{\tabcolsep}{3pt}\begin{table}[t]
\centering
\caption{The design of Table~\ref{tab:dialsbody} at 250 replications on a six-point $b$ grid, unweighted procedure. Coverage is inverse-probability weighted under the oracle $G$. Standard errors are marginal over replications, in units of the last printed digit, and $t$ is against nominal. The bar $|t|\ge4$ was fixed before the run, beyond the Bonferroni threshold for the 72 cells.}
\label{tab:dialsdense}
\resizebox{\ifdim\width>\textwidth\textwidth\else\width\fi}{!}{%
\begin{tabular}{llrrrrrrr}
\toprule
family & $c$ & & $b=0.0$ & $b=0.2$ & $b=0.3$ & $b=0.4$ & $b=0.5$ & $b=0.6$ \\
\midrule
exponential & 0.20 & cov. & 0.9054(5) & 0.9046(5) & 0.9038(5) & 0.9020(6) & 0.9004(6) & 0.8985(6) \\
 & & $t$ & +10.42 & +8.77 & +7.32 & +3.65 & +0.71 & -2.66 \\
  & 0.40 & cov. & 0.9122(5) & 0.9108(6) & 0.9085(6) & 0.9052(6) & 0.9006(7) & 0.8951(6) \\
 & & $t$ & +23.75 & +19.26 & +14.59 & +8.52 & +0.86 & -7.70 \\
  & 0.55 & cov. & 0.9205(5) & 0.9187(6) & 0.9158(6) & 0.9105(7) & 0.9023(7) & 0.8916(8) \\
 & & $t$ & +37.51 & +32.41 & +25.81 & +15.24 & +3.08 & -10.96 \\
  & 0.70 & cov. & 0.9333(6) & 0.9329(6) & 0.9310(6) & 0.9258(7) & 0.9131(9) & 0.8900(11) \\
 & & $t$ & +58.35 & +55.16 & +50.35 & +36.87 & +14.41 & -9.29 \\
weibull & 0.20 & cov. & 0.9040(5) & 0.9030(5) & 0.9017(5) & 0.8994(5) & 0.8977(5) & 0.8958(5) \\
 & & $t$ & +7.81 & +5.68 & +3.14 & -1.09 & -4.50 & -7.61 \\
  & 0.40 & cov. & 0.9101(5) & 0.9068(5) & 0.9030(6) & 0.8992(6) & 0.8940(6) & 0.8893(7) \\
 & & $t$ & +18.80 & +12.43 & +5.26 & -1.38 & -9.53 & -16.34 \\
  & 0.55 & cov. & 0.9176(6) & 0.9139(6) & 0.9099(6) & 0.9026(7) & 0.8926(8) & 0.8796(8) \\
 & & $t$ & +31.64 & +23.48 & +16.11 & +3.83 & -9.49 & -24.01 \\
  & 0.70 & cov. & 0.9278(6) & 0.9281(6) & 0.9269(7) & 0.9204(8) & 0.9031(11) & 0.8712(15) \\
 & & $t$ & +45.84 & +45.84 & +40.94 & +25.79 & +2.91 & -19.42 \\
non-PH & 0.20 & cov. & 0.9043(5) & 0.9022(5) & 0.9005(5) & 0.8980(5) & 0.8953(5) & 0.8935(5) \\
 & & $t$ & +8.30 & +4.13 & +0.89 & -3.75 & -8.91 & -12.36 \\
  & 0.40 & cov. & 0.9111(5) & 0.9074(6) & 0.9036(6) & 0.8971(7) & 0.8884(7) & 0.8812(7) \\
 & & $t$ & +20.26 & +13.33 & +6.15 & -4.49 & -17.16 & -26.83 \\
  & 0.55 & cov. & 0.9191(6) & 0.9154(6) & 0.9104(7) & 0.9012(7) & 0.8855(9) & 0.8663(11) \\
 & & $t$ & +33.95 & +25.24 & +15.84 & +1.60 & -16.17 & -31.09 \\
  & 0.70 & cov. & 0.9294(6) & 0.9284(7) & 0.9259(7) & 0.9195(8) & 0.9015(12) & 0.8600(18) \\
 & & $t$ & +46.03 & +41.52 & +35.88 & +23.14 & +1.26 & -22.04 \\
\bottomrule
\end{tabular}
}
\end{table}

}
\begin{table}[t]
\centering
\caption{Where the unweighted arm's error lands. The calibration fold is unweighted throughout, and the inverse-probability-weighted miss mass is decomposed by the true label. Label 0, event-free at the horizon, is exactly the group carrying $m=\tstar$ and so the lowest selection probability. Two conservation checks. The three group miss masses re-sum to the marginal column to within 1.4e-17, and each group's realised selection rate matches its own mean $G(m)$ to within 0.0032, the largest gap over the groups. Censoring depletes the horizon survivors hardest, and at $c=0.70$ their selection rate is 0.1960 against 0.5604 and 0.5332 for the two causes. At $b=0$ the unweighted arm covers those survivors 0.9995 and carries all of its error on the cause labels. At $b=0.6$ the survivors carry +0.0700 of additional miss mass while the two causes improve by -0.0153 and -0.0117. The last row is the control that keeps this from being circular. $\lambda$ replaces the informative covariate with noise in training, which moves the fitted model visibly, taking cause-2 coverage from 0.8282 to 0.7726, yet it adds -0.0003 to the survivors. A misspecification that leaves the depleted group where it was does not move the marginal arm. The damage also scales with the depletion, the same $b$ nets +0.0165 of extra miss mass at $c=0.40$ and +0.0430 at $c=0.70$.}
\label{tab:landing}
\footnotesize\resizebox{\ifdim\width>\textwidth\textwidth\else\width\fi}{!}{%
\begin{tabular}{llrrrrrrrrrr}
\toprule
$c$ & arm & marginal & \multicolumn{3}{c}{event-free} & \multicolumn{3}{c}{cause 1} & \multicolumn{3}{c}{cause 2} \\
 & & miss & share & miss & cov & share & miss & cov & share & miss & cov \\
\midrule
0.40 & $b=0.0$ & 0.0892 & 0.543 & 0.0029 & 0.9946 & 0.251 & 0.0386 & 0.8463 & 0.206 & 0.0477 & 0.7685 \\
  & $b=0.6$ & 0.1057 & 0.543 & 0.0801 & 0.8523 & 0.251 & 0.0106 & 0.9580 & 0.206 & 0.0150 & 0.9274 \\
  & $\lambda=1.0$ & 0.0873 & 0.543 & 0.0000 & 1.0000 & 0.251 & 0.0248 & 0.9012 & 0.206 & 0.0625 & 0.6967 \\
0.70 & $b=0.0$ & 0.0663 & 0.544 & 0.0003 & 0.9995 & 0.250 & 0.0305 & 0.8779 & 0.207 & 0.0355 & 0.8282 \\
  & $b=0.6$ & 0.1094 & 0.544 & 0.0703 & 0.8705 & 0.250 & 0.0152 & 0.9391 & 0.207 & 0.0239 & 0.8845 \\
  & $\lambda=1.0$ & 0.0666 & 0.544 & 0.0000 & 1.0000 & 0.250 & 0.0197 & 0.9209 & 0.207 & 0.0469 & 0.7726 \\
\bottomrule
\end{tabular}
}
\end{table}

\begin{table}[t]
\centering
\caption{The classic competing-risks mistake, on the exponential family. Each cause's cumulative incidence is estimated as one minus the exponential of the negative of the forest's cause-specific Nelson--Aalen cumulative hazard, which treats the competing event as censoring, and the event-free probability as one minus their sum, clipped and renormalised. The shrink column is the factor the mistake applies to the fitted event-free probability, so $b_{\mathrm{eff}}=1-\text{shrink}$. The two fits share the forest, folds and splits, so the last four columns are paired differences with the paired standard error in units of the last printed digit. Table~\ref{tab:misfitdirect} runs the same mistake in all three families.}
\label{tab:naturalmisfit}
\resizebox{\ifdim\width>\textwidth\textwidth\else\width\fi}{!}{%
\begin{tabular}{rrrrrrrrr}
\toprule
$c$ & comp. & shrink & $b_{\mathrm{eff}}$ & unwtd.\ cov. & \multicolumn{2}{c}{paired, unwtd.} & \multicolumn{2}{c}{paired, wtd.} \\
 & & & & & diff & $t$ & diff & $t$ \\
\midrule
0.40 & 0.2 & 0.9120 & 0.088 & 0.9102 & -0.0016(5) & -2.87 & -0.0001(6) & -0.23 \\
0.40 & 0.4 & 0.8563 & 0.144 & 0.9076 & -0.0041(7) & -5.54 & -0.0009(8) & -1.12 \\
0.40 & 0.6 & 0.8355 & 0.164 & 0.9054 & -0.0055(10) & -5.58 & +0.0004(11) & +0.35 \\
0.40 & 0.8 & 0.8624 & 0.138 & 0.9067 & -0.0031(8) & -4.03 & +0.0005(9) & +0.50 \\
0.70 & 0.2 & 0.9380 & 0.062 & 0.9332 & +0.0006(5) & +1.25 & -0.0005(8) & -0.71 \\
0.70 & 0.4 & 0.8958 & 0.104 & 0.9324 & -0.0016(8) & -2.12 & +0.0007(10) & +0.71 \\
0.70 & 0.6 & 0.8852 & 0.115 & 0.9299 & -0.0029(8) & -3.60 & +0.0010(12) & +0.88 \\
0.70 & 0.8 & 0.9105 & 0.090 & 0.9288 & -0.0022(5) & -4.22 & +0.0002(9) & +0.24 \\
\bottomrule
\end{tabular}
}
\end{table}

{\footnotesize
\begin{longtable}{lrrrrrr}
\caption{The classic competing-risks mistake in three families, and the unweighted arm's coverage under it measured directly. Each cause's cumulative incidence is one minus the exponential of the negative of the forest's cause-specific Nelson--Aalen cumulative hazard, clipped and renormalised, and nothing is dialled. The shrink column is the factor the mistake applies to the fitted event-free probability and $b_{\mathrm{eff}}=1-\text{shrink}$. Coverage carries its Monte Carlo standard error over 100 replications in units of the last digit, and $t$ is against nominal. The arm is below nominal by at least two standard errors in 7 of the 48 designs, 4 of them Weibull, 3 non-proportional and 0 exponential, and by at least four in 4. The lowest is 0.8920, under non-PH hazards at censoring 0.40 and competing fraction 0.6.}\label{tab:misfitdirect} \\
\toprule
family & $c$ & comp. & shrink & $b_{\mathrm{eff}}$ & unwtd.\ cov. & $t$ \\
\midrule
\endfirsthead
\multicolumn{7}{l}{\emph{Table \ref{tab:misfitdirect}, continued}} \\
\toprule
family & $c$ & comp. & shrink & $b_{\mathrm{eff}}$ & unwtd.\ cov. & $t$ \\
\midrule
\endhead
\bottomrule
\endfoot
exponential & 0.20 & 0.2 & 0.9126 & 0.087 & 0.9052(8) & +6.77 \\
 & 0.20 & 0.4 & 0.8508 & 0.149 & 0.9029(9) & +3.32 \\
 & 0.20 & 0.6 & 0.8236 & 0.176 & 0.9028(8) & +3.44 \\
 & 0.20 & 0.8 & 0.8552 & 0.145 & 0.9038(10) & +3.88 \\
 & 0.40 & 0.2 & 0.9129 & 0.087 & 0.9119(8) & +14.26 \\
 & 0.40 & 0.4 & 0.8566 & 0.143 & 0.9090(8) & +11.00 \\
 & 0.40 & 0.6 & 0.8361 & 0.164 & 0.9069(8) & +8.16 \\
 & 0.40 & 0.8 & 0.8613 & 0.139 & 0.9089(9) & +10.03 \\
 & 0.55 & 0.2 & 0.9191 & 0.081 & 0.9212(8) & +26.91 \\
 & 0.55 & 0.4 & 0.8685 & 0.131 & 0.9172(9) & +19.91 \\
 & 0.55 & 0.6 & 0.8517 & 0.148 & 0.9146(8) & +18.09 \\
 & 0.55 & 0.8 & 0.8773 & 0.123 & 0.9157(9) & +16.76 \\
 & 0.70 & 0.2 & 0.9384 & 0.062 & 0.9350(9) & +40.16 \\
 & 0.70 & 0.4 & 0.8969 & 0.103 & 0.9328(9) & +35.58 \\
 & 0.70 & 0.6 & 0.8852 & 0.115 & 0.9302(9) & +32.97 \\
 & 0.70 & 0.8 & 0.9102 & 0.090 & 0.9307(9) & +33.62 \\
weibull & 0.20 & 0.2 & 0.9008 & 0.099 & 0.9063(7) & +9.26 \\
 & 0.20 & 0.4 & 0.7995 & 0.200 & 0.9002(8) & +0.22 \\
 & 0.20 & 0.6 & 0.7536 & 0.246 & 0.8967(9) & -3.87 \\
 & 0.20 & 0.8 & 0.7825 & 0.217 & 0.8985(9) & -1.78 \\
 & 0.40 & 0.2 & 0.9060 & 0.094 & 0.9138(7) & +18.53 \\
 & 0.40 & 0.4 & 0.8141 & 0.186 & 0.9027(8) & +3.18 \\
 & 0.40 & 0.6 & 0.7783 & 0.222 & 0.8952(9) & -5.01 \\
 & 0.40 & 0.8 & 0.8106 & 0.189 & 0.8954(11) & -4.19 \\
 & 0.55 & 0.2 & 0.9171 & 0.083 & 0.9245(8) & +31.48 \\
 & 0.55 & 0.4 & 0.8365 & 0.164 & 0.9119(9) & +13.09 \\
 & 0.55 & 0.6 & 0.8093 & 0.191 & 0.8988(11) & -1.14 \\
 & 0.55 & 0.8 & 0.8425 & 0.157 & 0.8954(12) & -3.80 \\
 & 0.70 & 0.2 & 0.9382 & 0.062 & 0.9392(8) & +51.81 \\
 & 0.70 & 0.4 & 0.8781 & 0.122 & 0.9270(10) & +26.82 \\
 & 0.70 & 0.6 & 0.8626 & 0.137 & 0.9130(11) & +11.36 \\
 & 0.70 & 0.8 & 0.8932 & 0.107 & 0.9067(13) & +5.23 \\
non-PH & 0.20 & 0.2 & 0.8555 & 0.144 & 0.9017(9) & +1.88 \\
 & 0.20 & 0.4 & 0.7302 & 0.270 & 0.8970(8) & -3.86 \\
 & 0.20 & 0.6 & 0.6912 & 0.309 & 0.8949(7) & -7.00 \\
 & 0.20 & 0.8 & 0.7607 & 0.239 & 0.9026(8) & +3.12 \\
 & 0.40 & 0.2 & 0.8768 & 0.123 & 0.9085(9) & +9.06 \\
 & 0.40 & 0.4 & 0.7619 & 0.238 & 0.8985(9) & -1.55 \\
 & 0.40 & 0.6 & 0.7235 & 0.277 & 0.8920(9) & -8.60 \\
 & 0.40 & 0.8 & 0.7769 & 0.223 & 0.9055(9) & +5.82 \\
 & 0.55 & 0.2 & 0.9003 & 0.100 & 0.9171(11) & +15.86 \\
 & 0.55 & 0.4 & 0.7992 & 0.201 & 0.9094(10) & +9.24 \\
 & 0.55 & 0.6 & 0.7648 & 0.235 & 0.8988(9) & -1.34 \\
 & 0.55 & 0.8 & 0.8065 & 0.194 & 0.9103(10) & +10.74 \\
 & 0.70 & 0.2 & 0.9385 & 0.062 & 0.9287(12) & +23.39 \\
 & 0.70 & 0.4 & 0.8693 & 0.131 & 0.9292(11) & +26.47 \\
 & 0.70 & 0.6 & 0.8394 & 0.161 & 0.9195(12) & +16.79 \\
 & 0.70 & 0.8 & 0.8619 & 0.138 & 0.9216(11) & +20.49 \\
\end{longtable}
}

\clearpage
\section{Further simulation results}\label{sec:s-further}

These sweeps complement Appendix~\ref{sec:suppsim} and use its
design constants.

\paragraph{The ceiling.} Under oracle weights the procedure sits at nominal, and
the realisable atom's cost is a property of the constant chosen for it
(Table~\ref{tab:ceiling}).

\paragraph{Reading a set off the fitted model.} Without any calibration, a set
can be read off the fitted cumulative incidence functions by ordering the labels
by $\hat\pi_\ell(x)$ and taking the smallest prefix whose fitted mass reaches
$1-\alpha$. The last row of each block of Table~\ref{tab:naivebody} is that set,
on the same forest, split and test fold as the rows above it. It over-covers only
because with three labels a prefix has to overshoot. Splitting the boundary label
with one uniform draw per subject removes the overshoot, and the set then covers
$0.882$ down to $0.846$ across the censoring grid at $2.21$ to $2.04$ labels,
against $0.904$ to $0.911$ at $2.31$ to $2.39$ labels for the weighted procedure.

\begin{table}[t]
\centering
\caption{The two comparators the introduction sets the paper against. Rows are the procedure at the envelope atom \eqref{eq:atom}, the unweighted procedure on the label-complete subsample, and the plug-in set built from the same fitted cumulative incidence functions with no calibration. All three share the forest, the split and the test fold at each censoring rate, so only the set rule and the weights differ. Table~\ref{tab:naive} carries the two further arms that separate the weighting from the atom, and the caption there reports what a randomized boundary split does to the plug-in set.}
\label{tab:naivebody}
\resizebox{\ifdim\width>\textwidth\textwidth\else\width\fi}{!}{%
\begin{tabular}{llrrrrr}
\toprule
censoring & arm & cov & SE & dev & size & SE \\
\midrule
0.2 & IPCW weights, envelope atom $A_{\mathrm{env}}$ & 0.904 & 0.003 & +0.004 & 2.31 & 0.011 \\
 & unit weights, unit atom & 0.909 & 0.002 & +0.009 & 2.33 & 0.011 \\
 & plug-in CIF set, no calibration & 0.973 & 0.002 & +0.073 & 2.72 & 0.011 \\
0.4 & IPCW weights, envelope atom $A_{\mathrm{env}}$ & 0.904 & 0.002 & +0.004 & 2.31 & 0.009 \\
 & unit weights, unit atom & 0.914 & 0.002 & +0.014 & 2.37 & 0.009 \\
 & plug-in CIF set, no calibration & 0.969 & 0.002 & +0.069 & 2.70 & 0.012 \\
0.5 & IPCW weights, envelope atom $A_{\mathrm{env}}$ & 0.908 & 0.003 & +0.008 & 2.35 & 0.014 \\
 & unit weights, unit atom & 0.923 & 0.002 & +0.023 & 2.42 & 0.012 \\
 & plug-in CIF set, no calibration & 0.965 & 0.002 & +0.065 & 2.67 & 0.014 \\
0.6 & IPCW weights, envelope atom $A_{\mathrm{env}}$ & 0.905 & 0.002 & +0.005 & 2.33 & 0.014 \\
 & unit weights, unit atom & 0.928 & 0.002 & +0.028 & 2.46 & 0.014 \\
 & plug-in CIF set, no calibration & 0.959 & 0.002 & +0.059 & 2.65 & 0.014 \\
0.7 & IPCW weights, envelope atom $A_{\mathrm{env}}$ & 0.911 & 0.002 & +0.011 & 2.37 & 0.015 \\
 & unit weights, unit atom & 0.937 & 0.002 & +0.037 & 2.52 & 0.014 \\
 & plug-in CIF set, no calibration & 0.946 & 0.003 & +0.046 & 2.57 & 0.017 \\
0.75 & IPCW weights, envelope atom $A_{\mathrm{env}}$ & 0.910 & 0.003 & +0.010 & 2.39 & 0.020 \\
 & unit weights, unit atom & 0.939 & 0.003 & +0.039 & 2.55 & 0.017 \\
 & plug-in CIF set, no calibration & 0.936 & 0.003 & +0.036 & 2.52 & 0.014 \\
\bottomrule
\end{tabular}
}
\end{table}

\paragraph{Five searches for an unweighted failure, four at a fitted score.} None
found the unweighted procedure below nominal inside the assumptions. Four score
with the fitted forest: across censoring rates (Table~\ref{tab:naive}), on
the in-assumption rows of covariate-dependent censoring (Table~\ref{tab:deltaw}),
across the event-free share (Table~\ref{tab:share}) and across the miscoverage
level crossed with four censoring rates (Figure~\ref{fig:alpha}). The forest
overstates the event-free probability, which pushes the unweighted procedure up,
so these describe the procedure at that fit rather than at a correct one.
Section~\ref{sec:correctfit} reruns three of them at the true score, where the
share search finds under-coverage. The fifth, on bounded covariate-dependent
censoring, uses the true score and holds, lowest at $0.9008$
(Table~\ref{tab:boundedcens}). Crossing an uninformative score with a minority
event-free label, again at a fitted score, the unweighted procedure covers
$0.911$ against a nominal $0.900$ (Table~\ref{tab:corner}).

\paragraph{When the assumptions fail.} Positivity is handled in
Section~\ref{sec:wrongG} and Table~\ref{tab:posviolation}. For independent
censoring, Table~\ref{tab:violation} multiplies the censoring hazard by
$e^{c_c}$ for subjects whose latent cause is the first one. Pooled coverage stays
within $0.016$ of nominal while cause-1 coverage under a single threshold falls from
$0.821$ to $0.681$, and the unweighted arm tracks the weighted one to within $0.024$
on cause-1. Per-cause thresholds hold cause-1 at $0.932$ to $0.945$
across the sweep and pay in the event-free group, $0.887$ at $c_c=3$. These per-cause
thresholds use the constant atom $1/g_{\min}$. Conditioning on a named cause makes
censoring independent within that cause in this construction, but the shared
marginal censoring estimate is not the right model for each class, and the
event-free group is not a named cause, so these readings are measurements of
this design and not a consequence of Corollary~\ref{cor:mondrian}.

{\footnotesize
\begin{longtable}{llrrrrr}
\caption{Unweighted split conformal against the weighted procedure, decomposed by calibration weights and test atom. The cell with IPCW weights and a unit atom is a diagnostic only, since a unit atom carries no floor, and the three $g_{\min}=0.05$ rows in each block are the uniform-constant sensitivity arm. The last row of each block is the plug-in set, the smallest prefix of labels whose fitted mass reaches $1-\alpha$, with no calibration, on the same forest, split and test fold. Monte-Carlo standard errors are given for coverage and set size.}\label{tab:naive} \\
\toprule
censoring & arm & cov & SE & dev & size & SE \\
\midrule
\endfirsthead
\multicolumn{7}{l}{\emph{Table \ref{tab:naive}, continued}} \\
\toprule
censoring & arm & cov & SE & dev & size & SE \\
\midrule
\endhead
\bottomrule
\endfoot
0.2 & unit weights, unit atom & 0.909 & 0.002 & +0.009 & 2.33 & 0.011 \\
 & unit weights, $1/g_{\min}$ atom & 0.909 & 0.002 & +0.009 & 2.33 & 0.011 \\
 & IPCW weights, unit atom & 0.904 & 0.003 & +0.004 & 2.31 & 0.011 \\
 & IPCW weights, envelope atom $A_{\mathrm{env}}$ & 0.904 & 0.003 & +0.004 & 2.31 & 0.011 \\
 & unit weights, $1/g_{\min}$ atom, $g_{\min}=0.05$ & 0.923 & 0.002 & +0.023 & 2.41 & 0.010 \\
 & IPCW weights, unit atom, $g_{\min}=0.05$ & 0.904 & 0.003 & +0.004 & 2.31 & 0.011 \\
 & IPCW weights, $1/g_{\min}$ atom, $g_{\min}=0.05$ & 0.918 & 0.002 & +0.018 & 2.38 & 0.010 \\
 & plug-in CIF set, no calibration & 0.973 & 0.002 & +0.073 & 2.72 & 0.011 \\
0.4 & unit weights, unit atom & 0.914 & 0.002 & +0.014 & 2.37 & 0.009 \\
 & unit weights, $1/g_{\min}$ atom & 0.915 & 0.002 & +0.015 & 2.37 & 0.008 \\
 & IPCW weights, unit atom & 0.903 & 0.002 & +0.003 & 2.31 & 0.009 \\
 & IPCW weights, envelope atom $A_{\mathrm{env}}$ & 0.904 & 0.002 & +0.004 & 2.31 & 0.009 \\
 & unit weights, $1/g_{\min}$ atom, $g_{\min}=0.05$ & 0.932 & 0.002 & +0.032 & 2.47 & 0.010 \\
 & IPCW weights, unit atom, $g_{\min}=0.05$ & 0.902 & 0.002 & +0.002 & 2.31 & 0.009 \\
 & IPCW weights, $1/g_{\min}$ atom, $g_{\min}=0.05$ & 0.916 & 0.002 & +0.016 & 2.38 & 0.009 \\
 & plug-in CIF set, no calibration & 0.969 & 0.002 & +0.069 & 2.70 & 0.012 \\
0.5 & unit weights, unit atom & 0.923 & 0.002 & +0.023 & 2.42 & 0.012 \\
 & unit weights, $1/g_{\min}$ atom & 0.923 & 0.002 & +0.023 & 2.43 & 0.012 \\
 & IPCW weights, unit atom & 0.907 & 0.003 & +0.007 & 2.34 & 0.014 \\
 & IPCW weights, envelope atom $A_{\mathrm{env}}$ & 0.908 & 0.003 & +0.008 & 2.35 & 0.014 \\
 & unit weights, $1/g_{\min}$ atom, $g_{\min}=0.05$ & 0.941 & 0.002 & +0.041 & 2.53 & 0.012 \\
 & IPCW weights, unit atom, $g_{\min}=0.05$ & 0.907 & 0.003 & +0.007 & 2.34 & 0.014 \\
 & IPCW weights, $1/g_{\min}$ atom, $g_{\min}=0.05$ & 0.919 & 0.002 & +0.019 & 2.41 & 0.012 \\
 & plug-in CIF set, no calibration & 0.965 & 0.002 & +0.065 & 2.67 & 0.014 \\
0.6 & unit weights, unit atom & 0.928 & 0.002 & +0.028 & 2.46 & 0.014 \\
 & unit weights, $1/g_{\min}$ atom & 0.930 & 0.002 & +0.030 & 2.47 & 0.014 \\
 & IPCW weights, unit atom & 0.904 & 0.002 & +0.004 & 2.33 & 0.014 \\
 & IPCW weights, envelope atom $A_{\mathrm{env}}$ & 0.905 & 0.002 & +0.005 & 2.33 & 0.014 \\
 & unit weights, $1/g_{\min}$ atom, $g_{\min}=0.05$ & 0.948 & 0.002 & +0.048 & 2.58 & 0.014 \\
 & IPCW weights, unit atom, $g_{\min}=0.05$ & 0.903 & 0.002 & +0.003 & 2.32 & 0.014 \\
 & IPCW weights, $1/g_{\min}$ atom, $g_{\min}=0.05$ & 0.917 & 0.003 & +0.017 & 2.40 & 0.016 \\
 & plug-in CIF set, no calibration & 0.959 & 0.002 & +0.059 & 2.65 & 0.014 \\
0.7 & unit weights, unit atom & 0.937 & 0.002 & +0.037 & 2.52 & 0.014 \\
 & unit weights, $1/g_{\min}$ atom & 0.943 & 0.002 & +0.043 & 2.56 & 0.015 \\
 & IPCW weights, unit atom & 0.907 & 0.002 & +0.007 & 2.35 & 0.014 \\
 & IPCW weights, envelope atom $A_{\mathrm{env}}$ & 0.911 & 0.002 & +0.011 & 2.37 & 0.015 \\
 & unit weights, $1/g_{\min}$ atom, $g_{\min}=0.05$ & 0.959 & 0.002 & +0.059 & 2.66 & 0.014 \\
 & IPCW weights, unit atom, $g_{\min}=0.05$ & 0.905 & 0.002 & +0.005 & 2.35 & 0.013 \\
 & IPCW weights, $1/g_{\min}$ atom, $g_{\min}=0.05$ & 0.918 & 0.002 & +0.018 & 2.41 & 0.016 \\
 & plug-in CIF set, no calibration & 0.946 & 0.003 & +0.046 & 2.57 & 0.017 \\
0.75 & unit weights, unit atom & 0.939 & 0.003 & +0.039 & 2.55 & 0.017 \\
 & unit weights, $1/g_{\min}$ atom & 0.948 & 0.003 & +0.048 & 2.61 & 0.019 \\
 & IPCW weights, unit atom & 0.905 & 0.003 & +0.005 & 2.36 & 0.021 \\
 & IPCW weights, envelope atom $A_{\mathrm{env}}$ & 0.910 & 0.003 & +0.010 & 2.39 & 0.020 \\
 & unit weights, $1/g_{\min}$ atom, $g_{\min}=0.05$ & 0.969 & 0.002 & +0.069 & 2.73 & 0.015 \\
 & IPCW weights, unit atom, $g_{\min}=0.05$ & 0.902 & 0.004 & +0.002 & 2.35 & 0.023 \\
 & IPCW weights, $1/g_{\min}$ atom, $g_{\min}=0.05$ & 0.914 & 0.004 & +0.014 & 2.41 & 0.023 \\
 & plug-in CIF set, no calibration & 0.936 & 0.003 & +0.036 & 2.52 & 0.014 \\
\end{longtable}
}

\begin{table}[t]
\centering
\caption{The event-free share parameter swept at censoring 0.5 independent of $X$, with a fitted forest score. The parameter fixes $S(t^\ast)$ at zero signal, so with the covariate signal the realised event-free shares are 0.544, 0.322 and 0.224, and $G(t^\ast)$ equals the parameter. At this fitted score the unweighted procedure does not fall below nominal, because the forest overstates the event-free probability; at the true score it does at 0.15 (Table~\ref{tab:oraclescore}). Coverage is the oracle-weighted evaluator.}
\label{tab:share}
\resizebox{\ifdim\width>\textwidth\textwidth\else\width\fi}{!}{%
\begin{tabular}{llrrrrrr}
\toprule
share & $G(t^\ast)$ & arm & cov & SE & dev & size & SE \\
\midrule
0.5 & 0.50 & unit weights, unit atom & 0.923 & 0.002 & +0.023 & 2.42 & 0.012 \\
& & KM weights, envelope atom & 0.907 & 0.003 & +0.007 & 2.34 & 0.014 \\
& & oracle weights, per-point atom & 0.907 & 0.003 & +0.007 & 2.34 & 0.014 \\
0.25 & 0.25 & unit weights, unit atom & 0.919 & 0.003 & +0.019 & 2.49 & 0.014 \\
& & KM weights, envelope atom & 0.905 & 0.003 & +0.005 & 2.43 & 0.014 \\
& & oracle weights, per-point atom & 0.904 & 0.003 & +0.004 & 2.42 & 0.014 \\
0.15 & 0.15 & unit weights, unit atom & 0.916 & 0.003 & +0.016 & 2.48 & 0.015 \\
& & KM weights, envelope atom & 0.906 & 0.003 & +0.006 & 2.44 & 0.016 \\
& & oracle weights, per-point atom & 0.903 & 0.003 & +0.003 & 2.43 & 0.016 \\
\bottomrule
\end{tabular}
}
\end{table}

\begin{table}[t]
\centering
\caption{The corner the other two sweeps each stopped short of. The share sweep of Table~\ref{tab:share} corrupts nothing and the score-misspecification sweep of Table~\ref{tab:scoremisspec} holds the share at one half, so the design where an uninformative fitted score meets a minority event-free label had never been run. It is run here, at $n_{\mathrm{pool}}=2500$ over 20 replications, with outcomes fixed in advance. The mean spread of the fitted probabilities falls from 0.191 at $\lambda=0$ to 0.079 at $\lambda=1$. The unweighted procedure stays above nominal in every cell. Its worst cell against the oracle-weighted arm, which carries the constant atom $1/g_{\min}$ rather than \eqref{eq:atom}, is share 0.15 at $\lambda=1$, where it reads 0.911 against 0.917, a difference of -0.005 against a standard error of 0.004 on the unweighted arm alone. That is the direction the mechanism predicts and it is not a failure. The pre-registration is reproduced with the submission.}
\label{tab:corner}
\resizebox{\ifdim\width>\textwidth\textwidth\else\width\fi}{!}{%
\begin{tabular}{llrrrrrrr}
\toprule
share & $\lambda$ & spread & naive cov & SE & size & oracle-wt.\ cov & SE & size \\
\midrule
0.50 & 0 & 0.181 & 0.923 & 0.002 & 2.42 & 0.919 & 0.002 & 2.41 \\
 & 1 & 0.073 & 0.922 & 0.002 & 2.66 & 0.917 & 0.002 & 2.63 \\
0.25 & 0 & 0.197 & 0.919 & 0.003 & 2.49 & 0.917 & 0.003 & 2.48 \\
 & 1 & 0.081 & 0.917 & 0.003 & 2.76 & 0.913 & 0.004 & 2.75 \\
0.15 & 0 & 0.195 & 0.916 & 0.003 & 2.48 & 0.916 & 0.004 & 2.48 \\
 & 1 & 0.083 & 0.911 & 0.004 & 2.77 & 0.917 & 0.003 & 2.78 \\
\bottomrule
\end{tabular}
}
\end{table}

{\footnotesize
\begin{longtable}{llrrrrr}
\caption{The oracle-weight ceiling. Coverage and mean set size under oracle calibration weights, with the per-point test atom $w^\star_{\mathrm{te}}$ of Lemma~\ref{lem:we}, which needs the unobserved outcome, and with the realisable atom $1/g_{\min}$, both at the largest constant Assumption~\ref{as:pos} permits in each design and at $g_{\min}=0.05$. Coverage is oracle-weighted, so it does not depend on $\hat G$. The second panel sweeps the calibration pool size, which is what governs the atom's share of the mass. Paired differences against the per-point atom carry their standard errors.}\label{tab:ceiling} \\
\toprule
setting & atom & cov & SE & dev & size & SE \\
\midrule
\endfirsthead
\multicolumn{7}{l}{\emph{Table \ref{tab:ceiling}, continued}} \\
\toprule
setting & atom & cov & SE & dev & size & SE \\
\midrule
\endhead
\bottomrule
\endfoot
\multicolumn{7}{l}{\emph{censoring rate}, $n_{\mathrm{pool}}=2500$} \\
\quad 0.2 & $w^\star_{\mathrm{te}}$ & 0.904 & 0.003 & +0.004 & 2.31 & 0.011 \\
\quad 0.2 & mean calibration weight & 0.904 & 0.003 & +0.004 & 2.31 & 0.011 \\
\quad 0.2 & $1/g_{\min}$, design constant & 0.904 & 0.003 & +0.004 & 2.31 & 0.011 \\
\quad 0.2 & $1/g_{\min}$, $g_{\min}=0.05$ & 0.918 & 0.002 & +0.018 & 2.38 & 0.010 \\
\quad 0.4 & $w^\star_{\mathrm{te}}$ & 0.903 & 0.002 & +0.003 & 2.31 & 0.009 \\
\quad 0.4 & mean calibration weight & 0.903 & 0.002 & +0.003 & 2.31 & 0.009 \\
\quad 0.4 & $1/g_{\min}$, design constant & 0.903 & 0.002 & +0.003 & 2.31 & 0.009 \\
\quad 0.4 & $1/g_{\min}$, $g_{\min}=0.05$ & 0.916 & 0.002 & +0.016 & 2.38 & 0.008 \\
\quad 0.6 & $w^\star_{\mathrm{te}}$ & 0.904 & 0.002 & +0.004 & 2.33 & 0.014 \\
\quad 0.6 & mean calibration weight & 0.904 & 0.002 & +0.004 & 2.33 & 0.014 \\
\quad 0.6 & $1/g_{\min}$, design constant & 0.904 & 0.002 & +0.004 & 2.33 & 0.014 \\
\quad 0.6 & $1/g_{\min}$, $g_{\min}=0.05$ & 0.917 & 0.002 & +0.017 & 2.40 & 0.015 \\
\multicolumn{7}{l}{\emph{pool size} (the calibration fold is half), censoring 0.4} \\
\quad 500 & $w^\star_{\mathrm{te}}$ & 0.899 & 0.005 & -0.001 & 2.33 & 0.032 \\
\quad 500 & mean calibration weight & 0.899 & 0.005 & -0.001 & 2.33 & 0.032 \\
\quad 500 & $1/g_{\min}$, design constant & 0.900 & 0.005 & -0.000 & 2.33 & 0.032 \\
\quad 500 & $1/g_{\min}$, $g_{\min}=0.05$ & 0.971 & 0.002 & +0.071 & 2.76 & 0.021 \\
\quad 1000 & $w^\star_{\mathrm{te}}$ & 0.899 & 0.004 & -0.001 & 2.31 & 0.024 \\
\quad 1000 & mean calibration weight & 0.899 & 0.004 & -0.001 & 2.31 & 0.024 \\
\quad 1000 & $1/g_{\min}$, design constant & 0.899 & 0.004 & -0.001 & 2.31 & 0.024 \\
\quad 1000 & $1/g_{\min}$, $g_{\min}=0.05$ & 0.931 & 0.003 & +0.031 & 2.49 & 0.023 \\
\quad 2500 & $w^\star_{\mathrm{te}}$ & 0.903 & 0.002 & +0.003 & 2.31 & 0.009 \\
\quad 2500 & mean calibration weight & 0.903 & 0.002 & +0.003 & 2.31 & 0.009 \\
\quad 2500 & $1/g_{\min}$, design constant & 0.903 & 0.002 & +0.003 & 2.31 & 0.009 \\
\quad 2500 & $1/g_{\min}$, $g_{\min}=0.05$ & 0.916 & 0.002 & +0.016 & 2.38 & 0.008 \\
\end{longtable}
}

\begin{table}[t]
\centering
\caption{Outside Assumption~\ref{as:indep}. The censoring hazard is multiplied by $e^{c_c}$ for subjects whose latent cause is cause~1, so censoring depends on the cause and the theorem claims nothing. Coverage is measured against the latent label with no weights, since an IPCW estimate of coverage is itself biased once independence fails. Because the multiplier depends on the cause alone, conditioning on the cause restores the assumption, so the per-cause columns are not evidence of robustness.}
\label{tab:violation}
\resizebox{\ifdim\width>\textwidth\textwidth\else\width\fi}{!}{%
\begin{tabular}{llrrrrrrrr}
\toprule
$c_c$ & censoring & procedure & pooled & SE & cause 1 & SE & cause 2 & event-free & size \\
\midrule
0 & 0.394 & IPCW, marginal threshold & 0.900 & 0.001 & 0.821 & 0.004 & 0.747 & 0.994 & 2.27 \\
 &  & unweighted, marginal threshold & 0.909 & 0.001 & 0.832 & 0.004 & 0.777 & 0.994 & 2.32 \\
 &  & IPCW, per-cause threshold & 0.925 & 0.001 & 0.932 & 0.002 & 0.939 & 0.917 & 2.54 \\
1 & 0.397 & IPCW, marginal threshold & 0.898 & 0.001 & 0.783 & 0.005 & 0.783 & 0.994 & 2.26 \\
 &  & unweighted, marginal threshold & 0.907 & 0.001 & 0.798 & 0.005 & 0.811 & 0.995 & 2.31 \\
 &  & IPCW, per-cause threshold & 0.922 & 0.002 & 0.938 & 0.002 & 0.941 & 0.907 & 2.55 \\
2 & 0.395 & IPCW, marginal threshold & 0.893 & 0.002 & 0.749 & 0.006 & 0.792 & 0.998 & 2.25 \\
 &  & unweighted, marginal threshold & 0.906 & 0.001 & 0.770 & 0.005 & 0.826 & 0.999 & 2.31 \\
 &  & IPCW, per-cause threshold & 0.915 & 0.002 & 0.940 & 0.003 & 0.933 & 0.897 & 2.54 \\
3 & 0.391 & IPCW, marginal threshold & 0.884 & 0.002 & 0.681 & 0.006 & 0.825 & 1.000 & 2.22 \\
 &  & unweighted, marginal threshold & 0.896 & 0.002 & 0.705 & 0.007 & 0.855 & 1.000 & 2.28 \\
 &  & IPCW, per-cause threshold & 0.910 & 0.002 & 0.945 & 0.003 & 0.928 & 0.887 & 2.55 \\
\bottomrule
\end{tabular}
}
\end{table}

\end{document}